\documentclass{aims}
\usepackage{amsmath}
  \usepackage{paralist}
  \usepackage{graphics} 
  \usepackage{epsfig} 
\usepackage{graphicx}  \usepackage{epstopdf}
 \usepackage[colorlinks=true]{hyperref}
\hypersetup{urlcolor=blue, citecolor=red}

  \textheight=8.2 true in
   \textwidth=5.0 true in
    \topmargin 30pt
     \setcounter{page}{1}



\newtheorem{theorem}{Theorem}[section]
\newtheorem{corollary}[theorem]{Corollary}

\newtheorem{lemma}[theorem]{Lemma}

\theoremstyle{definition}

\newtheorem*{remark}{Remark}
\newtheorem{notation}[theorem]{Notation}
\newtheorem{example}[theorem]{Example}

\title[Constacyclic codes
of length $np^s$ over $\mathbb{F}_{p^m}+u\mathbb{F}_{p^m}$] 
      {Constacyclic codes
of length $np^s$ over $\mathbb{F}_{p^m}+u\mathbb{F}_{p^m}$}

\author[Yonglin Cao, Yuan Cao, Jian Gao  and Fang-Wei Fu]{}

\subjclass{Primary: 94B15, 94B05; Secondary: 11T71.}
 \keywords{Constacyclic code, Dual code, Self dual code, Negacyclic code.}

 \email{ylcao@sdut.edu.cn}
 \email{yuancao@sdut.edu.cn}
  \email{dezhougaojian@163.com}
 \email{fwfu@nankai.edu.cn}

\thanks{ This research is supported in part by National Natural Science Foundation of
China (Nos. 11671235, 11626144,
11471255) and the National Key Basic Research Program of China (Grant 2013CB834204).}

\thanks{$^*$ Corresponding author: Yuan Cao}

\begin{document}
\maketitle

\centerline{\scshape Yonglin Cao, Yuan Cao$^*$, Jian Gao}
\medskip
{\footnotesize
 \centerline{School of Mathematics and  Statistics, Shandong University of Technology, }
   \centerline{ Zibo, Shandong 255091, China}
} 
\medskip

\centerline{\scshape Fang-Wei Fu}
\medskip
{\footnotesize
 \centerline{ Chern Institute of Mathematics and LPMC, Nankai University}
   \centerline{Tianjin 300071, China}
}
\bigskip

 \centerline{(Communicated by the associate editor name)}

\begin{abstract}
Let $\mathbb{F}_{p^m}$ be a finite field of cardinality $p^m$ and $R=\mathbb{F}_{p^m}[u]/\langle u^2\rangle=\mathbb{F}_{p^m}+u\mathbb{F}_{p^m}$ $(u^2=0)$, where $p$ is a prime and $m$ is a positive integer. For any $\lambda\in \mathbb{F}_{p^m}^{\times}$,
an explicit representation for all distinct $\lambda$-constacyclic codes
over $R$ of length $np^s$ is given by a canonical form decomposition for each code, where $s$ and $n$ are
arbitrary positive integers satisfying ${\rm gcd}(p,n)=1$. For any such code, using its canonical form decomposition the representation for the dual code of the code is provided. Moreover, representations for all distinct cyclic codes,
negacyclic codes and their dual codes of length $np^s$ over $R$ are obtained,
and self-duality for these codes are determined. Finally, all distinct self-dual negacyclic codes over $\mathbb{F}_5+u\mathbb{F}_5$ of
length $2\cdot 3^t\cdot 5^s$ are listed for any positive integer $t$.
\end{abstract}

\section{Introduction}

\noindent \par
Algebraic coding theory deals with the design of error-correcting and error-detecting codes for the reliable transmission
of information across noisy channel. The class of constacyclic codes plays a very significant role in
the theory of error-correcting codes as they can be efficiently encoded with simple shift
registers. This family of codes is thus interesting for both theoretical and practical reasons.

\par
  Let $\Gamma$ be a commutative finite ring with identity $1\neq 0$, and $\Gamma^{\times}$ be the multiplicative group of invertible elements of
$\Gamma$. For any $a\in
\Gamma$, we denote by $\langle a\rangle_\Gamma$, or $\langle a\rangle$ for
simplicity, the ideal of $\Gamma$ generated by $a$, i.e., $\langle
a\rangle_\Gamma=a\Gamma=\{ab\mid b\in \Gamma\}$. For any ideal $I$ of $\Gamma$, we will identify the
element $a+I$ of the residue class ring $\Gamma/I$ with $a$ (mod $I$) for
any $a\in \Gamma$ in this paper.

\par
   A \textit{code} over $\Gamma$ of length $N$ is a nonempty subset ${\mathcal C}$ of $\Gamma^N=\{(a_0,a_1,\ldots$, $a_{N-1})\mid a_j\in\Gamma, \
j=0,1,\ldots,N-1\}$. The code ${\mathcal C}$
is said to be \textit{linear} if ${\mathcal C}$ is a $\Gamma$-submodule of $\Gamma^N$. All codes in this paper are assumed to be linear. The ambient space $\Gamma^N$ is equipped with the usual Euclidian inner product, i.e.
$[a,b]=\sum_{j=0}^{N-1}a_jb_j$, where $a=(a_0,a_1,\ldots,a_{N-1}), b=(b_0,b_1,\ldots,b_{N-1})\in \Gamma^N$.
Then the \textit{dual code} of ${\mathcal C}$ is defined by ${\mathcal C}^{\bot}=\{a\in \Gamma^N\mid [a,b]=0, \forall b\in {\mathcal C}\}$.
If ${\mathcal C}^{\bot}={\mathcal C}$, ${\mathcal C}$ is called a \textit{self-dual code} over $\Gamma$.

\par
   Let $\gamma\in \Gamma^{\times}$.
Then a linear code
${\mathcal C}$ over $\Gamma$ of length $N$ is
called a $\gamma$-\textit{constacyclic code}
if $(\gamma c_{N-1},c_0,c_1,\ldots,c_{N-2})\in {\mathcal C}$ for all
$(c_0,c_1,\ldots,c_{N-1})\in{\mathcal C}$. Particularly, ${\mathcal C}$ is
called a \textit{negacyclic code} if $\gamma=-1$, and ${\mathcal C}$ is
called a  \textit{cyclic code} if $\gamma=1$.
  For any $a=(a_0,a_1,\ldots,a_{N-1})\in \Gamma^N$, let
$a(x)=a_0+a_1x+\ldots+a_{N-1}x^{N-1}\in \Gamma[x]/\langle x^N-\gamma\rangle$. We will identify $a$ with $a(x)$ in
this paper. By \cite{s9} Propositions 2.2 and 2.4, we have

\begin{lemma}
Let $\gamma\in \Gamma^{\times}$. Then ${\mathcal C}$ is a  $\gamma$-constacyclic code
over $\Gamma$ of length $N$ if and only if ${\mathcal C}$ is an ideal of
the residue class ring $\Gamma[x]/\langle x^N-\gamma\rangle$.
\end{lemma}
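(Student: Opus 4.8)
The plan is to exploit the identification, already set up in the excerpt, of a codeword $a=(a_0,a_1,\ldots,a_{N-1})\in\Gamma^N$ with the polynomial $a(x)=a_0+a_1x+\cdots+a_{N-1}x^{N-1}$ viewed inside the quotient ring $\Gamma[x]/\langle x^N-\gamma\rangle$. Under this identification the linearity of $\mathcal{C}$ over $\Gamma$ corresponds exactly to $\mathcal{C}$ being a $\Gamma$-submodule of the quotient ring, i.e. an additive subgroup closed under multiplication by the constants in $\Gamma$. So the only substantive thing to verify is how the constacyclic shift translates into the language of the ring.

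The key computation I would carry out first is to show that the $\gamma$-constacyclic shift is realized by multiplication by $x$. For $c(x)=\sum_{i=0}^{N-1}c_ix^i$ one has $x\cdot c(x)=\sum_{i=0}^{N-1}c_ix^{i+1}$, and reducing the top term by the relation $x^N=\gamma$ holding in the quotient ring gives $x\cdot c(x)=\gamma c_{N-1}+c_0x+\cdots+c_{N-2}x^{N-1}$. This is precisely the polynomial attached to the shifted word $(\gamma c_{N-1},c_0,\ldots,c_{N-2})$. Thus the defining closure property of a $\gamma$-constacyclic code is exactly closure of $\mathcal{C}$ under multiplication by $x$.

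With this dictionary in place, both implications are short. For the forward direction, assuming $\mathcal{C}$ is a $\gamma$-constacyclic code, it is linear and hence a $\Gamma$-submodule, and by the shift property it is closed under multiplication by $x$; iterating gives closure under every power $x^i$, and combining this with $\Gamma$-linearity gives closure under multiplication by an arbitrary element $\sum_i b_ix^i$ of the quotient ring, so $\mathcal{C}$ is an ideal. For the converse, an ideal is in particular an additive subgroup closed under multiplication by all ring elements; restricting to the constants shows it is a $\Gamma$-submodule, hence a linear code, and multiplication by $x$ returns the constacyclic shift property.

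I do not anticipate a genuine obstacle here: the content is entirely the shift-equals-multiplication-by-$x$ observation, together with the fact that the quotient ring is generated as a $\Gamma$-algebra by $x$, which is what upgrades ``closed under $x$ and under constants'' to the full ideal condition. The one point I would state carefully is this generation remark, since it is precisely what links the two equivalent descriptions.
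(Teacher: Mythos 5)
Your proof is correct and is the standard argument: the paper itself does not prove this lemma but cites it from \cite{s9} (Propositions 2.2 and 2.4), and the cited proof is exactly the one you give, namely that under the identification $a\leftrightarrow a(x)$ the $\gamma$-constacyclic shift becomes multiplication by $x$ (using $x^N=\gamma$ in the quotient), so that ``linear and shift-closed'' is equivalent to ``submodule closed under multiplication by $x$,'' which is equivalent to being an ideal since $x$ generates $\Gamma[x]/\langle x^N-\gamma\rangle$ as a $\Gamma$-algebra. No gaps; the generation remark you flag is indeed the only point that needs to be said explicitly.
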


\begin{lemma}
The dual code of a $\gamma$-constacyclic code over
$\Gamma$ of length $N$ is a $\gamma^{-1}$-constacyclic code over
$\Gamma$ of length $N$, i.e., an ideal of $\Gamma[x]/\langle
x^N-\gamma^{-1}\rangle$.
\end{lemma}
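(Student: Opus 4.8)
The plan is to work directly with the constacyclic shift operators, verify closure of the dual code under the appropriate shift, and then invoke Lemma 1. For $\gamma\in\Gamma^\times$, let $\tau_\gamma\colon\Gamma^N\to\Gamma^N$ denote the $\gamma$-constacyclic shift $\tau_\gamma(a_0,\ldots,a_{N-1})=(\gamma a_{N-1},a_0,\ldots,a_{N-2})$, so that by definition ${\mathcal C}$ is $\gamma$-constacyclic exactly when $\tau_\gamma({\mathcal C})\subseteq{\mathcal C}$. First I would record two easy facts: (i) since $\gamma$ is invertible, $\tau_\gamma$ is a bijection of $\Gamma^N$ (its inverse sends $(a_0,\ldots,a_{N-1})$ to $(a_1,\ldots,a_{N-1},\gamma^{-1}a_0)$), so because $\Gamma$, and hence ${\mathcal C}$, is finite the inclusion $\tau_\gamma({\mathcal C})\subseteq{\mathcal C}$ upgrades to the equality $\tau_\gamma({\mathcal C})={\mathcal C}$; and (ii) ${\mathcal C}^\bot$ is again a $\Gamma$-submodule of $\Gamma^N$, so to prove it is $\gamma^{-1}$-constacyclic it suffices to show it is closed under $\tau_{\gamma^{-1}}$.

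The crux is a single bilinear identity: for all $b,c\in\Gamma^N$,
\[
[\tau_{\gamma^{-1}}(b),\,\tau_\gamma(c)]=[b,c].
\]
I would prove this by a direct expansion. Writing out $\tau_{\gamma^{-1}}(b)=(\gamma^{-1}b_{N-1},b_0,\ldots,b_{N-2})$ and $\tau_\gamma(c)=(\gamma c_{N-1},c_0,\ldots,c_{N-2})$, the $0$-th coordinates contribute $(\gamma^{-1}b_{N-1})(\gamma c_{N-1})=b_{N-1}c_{N-1}$ once the scalars $\gamma^{-1}$ and $\gamma$ cancel (here commutativity of $\Gamma$ is used), while the coordinates indexed $1,\ldots,N-1$ contribute $\sum_{i=0}^{N-2}b_ic_i$; summing gives exactly $\sum_{i=0}^{N-1}b_ic_i=[b,c]$. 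The key design choice is to pair the $\gamma^{-1}$-shift of $b$ against the $\gamma$-shift of $c$ precisely so that these two scalars annihilate each other; this is where the exponent $\gamma^{-1}$ in the statement comes from.

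With the identity in hand the conclusion follows quickly. Fix $b\in{\mathcal C}^\bot$ and an arbitrary $c\in{\mathcal C}$. By fact (i) we may write $c=\tau_\gamma(c')$ for some $c'\in{\mathcal C}$, and then the identity gives $[\tau_{\gamma^{-1}}(b),c]=[\tau_{\gamma^{-1}}(b),\tau_\gamma(c')]=[b,c']=0$, the last equality because $b\in{\mathcal C}^\bot$ and $c'\in{\mathcal C}$. As $c$ was arbitrary, $\tau_{\gamma^{-1}}(b)\in{\mathcal C}^\bot$, so ${\mathcal C}^\bot$ is closed under $\tau_{\gamma^{-1}}$ and hence is a $\gamma^{-1}$-constacyclic code. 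Finally, applying Lemma 1 with $\gamma$ replaced by $\gamma^{-1}$ identifies ${\mathcal C}^\bot$ with an ideal of $\Gamma[x]/\langle x^N-\gamma^{-1}\rangle$, as claimed. I do not anticipate a genuine obstacle here; the only points that require a moment's care are the cancellation of scalars in the $0$-th coordinate, i.e.\ recognizing that the dual inherits the \emph{inverse} constant, and the use of finiteness to pass from $\tau_\gamma({\mathcal C})\subseteq{\mathcal C}$ to equality.
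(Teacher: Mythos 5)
Your proof is correct and complete: the shift-pairing identity $[\tau_{\gamma^{-1}}(b),\tau_\gamma(c)]=[b,c]$, together with the upgrade of $\tau_\gamma(\mathcal{C})\subseteq\mathcal{C}$ to equality via finiteness, is exactly the standard argument. The paper itself offers no proof of this lemma --- it simply quotes it from \cite{s9} (Propositions 2.2 and 2.4) --- and your argument is essentially the one found in that reference, so there is nothing to compare beyond noting that you have supplied the details the paper omits.
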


  In this paper, let $\mathbb{F}_{p^m}$ be a finite field of cardinality $p^m$, where
$p$ is a prime and $m$ is a positive integer, and denote $\mathbb{F}_{p^m}[u]/\langle u^2\rangle$
by $\mathbb{F}_{p^m}+u\mathbb{F}_{p^m}$ ($u^2=0$). There were a lot of literatures on linear codes, cyclic codes and
constacyclic codes of length $N$ over rings $\mathbb{F}_{p^m}+u\mathbb{F}_{p^m}$ ($u^2=0$) for various prime $p$ and positive integers $m$ and some positive integer $N$.
For example, \cite{s1}, \cite{s2}, \cite{s4}, \cite{s7}--\cite{s14}, \cite{s16}, \cite{s17} and \cite{s20}.
The classification of codes plays an important role in studying their structures and encoders.
However, it is a very difficult task in general, and only some codes of special lengths over certain finite
fields or finite chain rings are classified.

\par
  For example,
all constacyclic codes of length $2^s$ over the Galois extension
rings of $\mathbb{F}_2 + u\mathbb{F}_2$ was classified and their detailed structures was also established in \cite{s8}. Dinh \cite{s9}
classified all constacyclic codes of length $p^s$ over $\mathbb{F}_{p^m}+u\mathbb{F}_{p^m}$.
    Dinh et al. \cite{s10} studied
negacyclic codes of length $2p^s$ over the ring $\mathbb{F}_{p^m}+u\mathbb{F}_{p^m}$.
Chen et al. \cite{s7} investigated
constacyclic codes of length $2p^s$ over $\mathbb{F}_{p^m}+u\mathbb{F}_{p^m}$.
Recently, Dinh et al. \cite{s11} \cite{s12} studied constacyclic codes of length $4p^s$ over $\mathbb{F}_{p^m}+u\mathbb{F}_{p^m}$.
These papers mainly used the methods in \cite{s8} and \cite{s9}, and the main results and their proofs  depend heavily on the code lengths $p^s$, $2p^s$ and $4p^s$.

\par
   From now on, let $n,s$  be arbitrary positive integers satisfying ${\rm gcd}(p,n)=1$,
and $\lambda$ be an arbitrary nonzero element of $\mathbb{F}_{p^m}$. In this paper, by use of
basic theory for linear codes over finite chain rings of length $2$, we provide a new way different from the methods used in \cite{s7}--\cite{s13} to determine  precisely the algebraic structures, the
generators and enumeration of $\lambda$-constacyclic codes over $\mathbb{F}_{p^m}+u\mathbb{F}_{p^m}$ of length $np^s$. Specifically, we will address the following questions:

\par
  $\diamondsuit$ Give a precise representation for each $\lambda$-constacyclic code $\mathcal{C}$ over $\mathbb{F}_{p^m}+u\mathbb{F}_{p^m}$ of length $np^s$,
and provide a simple and clear formula to count the number of codewords in $\mathcal{C}$.

\par
  $\diamondsuit$ Give a clear  formula to count the number of all $\lambda$-constacyclic codes over $\mathbb{F}_{p^m}+u\mathbb{F}_{p^m}$ of length $np^s$.

\par
  $\diamondsuit$ Give a precise representation for the dual code of each $\lambda$-constacyclic code $\mathcal{C}$ over $\mathbb{F}_{p^m}+u\mathbb{F}_{p^m}$ of length $np^s$ by use of the representation of $\mathcal{C}$.

\par
  $\diamondsuit$ Determine self-dual cyclic and negacyclic codes over $\mathbb{F}_{p^m}+u\mathbb{F}_{p^m}$ of length $np^s$.

\begin{notation} In the rest of the paper, we denote
        \vskip 2mm \par
  $\bullet$ $R=\mathbb{F}_{p^m}[u]/\langle u^2\rangle=\mathbb{F}_{p^m}
+u\mathbb{F}_{p^m}$ ($u^2=0$);

\vskip 2mm \par
  $\bullet$ $\mathcal{A}=\mathbb{F}_{p^m}[x]/\langle x^{np^s}-\lambda\rangle$ and $\mathcal{A}[u]/\langle u^2\rangle=\mathcal{A}+u\mathcal{A}$ ($u^2=0$);

\vskip 2mm \par
   $\bullet$ $\mathcal{R}_{\lambda}=R[x]/\langle x^{np^s}-\lambda\rangle$
and $\mathcal{R}_{\lambda^{-1}}=R[x]/\langle x^{np^s}-\lambda^{-1}\rangle$.
\end{notation}

\vskip 3mm\par
  The present paper is organized as follows. In Section 2, we provide a direct sum decomposition
for any $\lambda$-constacyclic code  over $R$ of length $np^s$. Then we determine each
direct summand of such decomposition in Section 3. Hence we obtain an explicit representation for each of these codes and give a formula to count the number of codewords in each code from its representation. As a corollary, we obtain a formula to enumerate all such codes.
Then we determine the dual code of each code in Section 4. In Section 5, we list all distinct cyclic codes,
negacyclic codes and their dual codes  over $R$ of length $np^s$, and determine the self-duality for such codes.
Particularly, we present all distinct self-dual negacyclic codes over $\mathbb{F}_5+u\mathbb{F}_5$ of
length $2\cdot 3^t\cdot 5^s$ for any positive integer $t$ in Section 6. Finally, in Section 7 we obtain conclusions for the special cases of $n=1,2,4$ which are match known results in \cite{s7}--\cite{s12}.




\section{Decomposition for $\lambda$-constacyclic codes over $R$ of length $np^s$}

\noindent \par
In this section, we construct a specific ring isomorphism from $\mathcal{A}+u\mathcal{A}$ ($u^2=0$) onto
$\mathcal{R}_\lambda$. By use of this isomorphism, we obtain a one-to-one correspondence
between the set of ideals of $\mathcal{A}+u\mathcal{A}$ and the set of ideals of $\mathcal{R}_\lambda$, i.e.,
the set of $\lambda$-constacyclic codes over $R$ of length $np^s$.

\par
   Let $g(x)\in \mathcal{R}_\lambda$. Then $g(x)$ can be uniquely
expressed as
$$g(x)=\sum_{j=0}^{np^s-1}g_jx^j, \ g_j\in R \
{\rm for} \ j=0,1,\ldots,np^s-1,$$
where each element $g_j$ of $R=\mathbb{F}_{p^m}
+u\mathbb{F}_{p^m}$ is uniquely expressed as
$$g_j=g_{j,0}+ug_{j,1}, \ g_{j,0},g_{j,1}\in \mathbb{F}_{p^m}, \
j=0,1,\ldots,np^s-1.$$
Hence $g(x)=g_0(x)+ug_1(x)$ where
$g_i(x)=\sum_{j=0}^{np^s-1}g_{j,i}x^j\in \mathcal{A}$ for all $i=0,1$.
Moreover, we have the following lemma.

\begin{lemma}\label{le2.1} For any $\xi=a(x)+ub(x)\in \mathcal{A}+u\mathcal{A}$, where $a(x)=\sum_{j=0}^{np^s-1}a_jx^j$ and
$b(x)=\sum_{j=0}^{np^s-1}b_jx^j$ with $a_j,b_j\in \mathbb{F}_{p^m}$, we define
\begin{center}
$\Psi(\xi)=a(x)+ub(x)=\sum_{j=0}^{np^s-1}(a_j+ub_j)x^j.$
\end{center}
Then $\Psi$ is a ring isomorphism from $\mathcal{A}+u\mathcal{A}$ onto $\mathcal{R}_\lambda$.
\end{lemma}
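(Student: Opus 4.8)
The plan is to verify directly that the map $\Psi\colon \mathcal{A}+u\mathcal{A}\to \mathcal{R}_\lambda$ is a bijective ring homomorphism. The key observation is that both rings are built from the same underlying data: $\mathcal{A}=\mathbb{F}_{p^m}[x]/\langle x^{np^s}-\lambda\rangle$ and $\mathcal{R}_\lambda=R[x]/\langle x^{np^s}-\lambda\rangle$ with $R=\mathbb{F}_{p^m}+u\mathbb{F}_{p^m}$. An element of $\mathcal{A}+u\mathcal{A}$ is a pair $a(x)+ub(x)$ with $a(x),b(x)\in\mathcal{A}$, whereas an element of $\mathcal{R}_\lambda$ is a polynomial with coefficients in $R$; the map $\Psi$ simply reinterprets the formal symbol $u$ as the ring element $u\in R$, redistributing it from the outside of the polynomial to the individual coefficients via $\sum_j(a_j+ub_j)x^j$. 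First I would confirm that $\Psi$ is well defined: since $u^2=0$ holds in both the outer ring $\mathcal{A}[u]/\langle u^2\rangle$ and in $R$, and since the relation $x^{np^s}=\lambda$ is imposed in both $\mathcal{A}$ and $\mathcal{R}_\lambda$, the formula respects all defining relations and does not depend on the choice of polynomial representatives for $a(x)$ and $b(x)$.

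Next I would check the homomorphism properties. Additivity $\Psi(\xi+\eta)=\Psi(\xi)+\Psi(\eta)$ is immediate because addition is performed coefficientwise in both rings and $\Psi$ acts coefficientwise. For multiplicativity, take $\xi=a(x)+ub(x)$ and $\eta=c(x)+ud(x)$; their product in $\mathcal{A}+u\mathcal{A}$ is $a(x)c(x)+u\bigl(a(x)d(x)+b(x)c(x)\bigr)$ using $u^2=0$. On the other side, multiplying $\Psi(\xi)$ and $\Psi(\eta)$ in $\mathcal{R}_\lambda$ and expanding each coefficient product $(a_i+ub_i)(c_k+ud_k)=a_ic_k+u(a_id_k+b_ic_k)$, again with $u^2=0$, yields exactly the image under $\Psi$ of the product computed in $\mathcal{A}+u\mathcal{A}$. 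Thus $\Psi(\xi\eta)=\Psi(\xi)\Psi(\eta)$, and since $\Psi$ sends the constant polynomial $1$ to $1$, it is a ring homomorphism.

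Finally I would establish bijectivity. Every element of $\mathcal{R}_\lambda$ has, as noted in the paragraph preceding the lemma, a \emph{unique} expression $\sum_j(g_{j,0}+ug_{j,1})x^j$ with $g_{j,0},g_{j,1}\in\mathbb{F}_{p^m}$; setting $a_j=g_{j,0}$ and $b_j=g_{j,1}$ exhibits a unique preimage $a(x)+ub(x)$, so $\Psi$ is bijective. Alternatively one may simply count: both rings are free $\mathbb{F}_{p^m}$-modules of rank $2np^s$, so an injective (or surjective) $\mathbb{F}_{p^m}$-linear ring map between them is automatically an isomorphism. I do not anticipate a genuine obstacle here; the only point requiring care is keeping the two distinct roles of the symbol $u$ straight—its role as the outer adjoined variable in $\mathcal{A}+u\mathcal{A}$ versus its role as the coefficient-ring generator in $R$—and verifying that the relation $u^2=0$ is consistently applied on both sides so that the cross terms in the product match.
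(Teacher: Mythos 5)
Your proposal is correct and follows essentially the same route as the paper, which simply asserts bijectivity and verifies the homomorphism properties by direct (coefficientwise) calculation; you have merely spelled out the well-definedness, the cross-term computation with $u^2=0$, and the uniqueness of the representation $\sum_j(g_{j,0}+ug_{j,1})x^j$ that the paper leaves implicit.
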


\begin{proof}
It is clear that  $\Psi$ is bijection from $\mathcal{A}+u\mathcal{A}$ onto $\mathcal{R}_\lambda$.
Then by trivial calculations one can verify that $\Psi(\xi+\eta)=\Psi(\xi)+\Psi(\eta)$ and $\Psi(\xi\cdot\eta)=\Psi(\xi)\cdot\Psi(\eta)$
for any $\xi,\eta\in \mathcal{A}+u\mathcal{A}$.
\end{proof}

   In the rest of this paper, we will identify $\mathcal{A}+u\mathcal{A}$ with $\mathcal{R}_\lambda$
under the ring isomorphism $\Psi$ defined in Lemma \ref{le2.1}. Then
in order to determine all distinct $\lambda$-constacyclic codes over $R$ of length $np^s$,
it is sufficient to list all distinct
ideals of the ring $\mathcal{A}+u\mathcal{A}$. To do this, we need to investigate the structure of the
ring $\mathcal{A}=\mathbb{F}_{p^m}[x]/\langle x^{p^sn}-\lambda\rangle$ first.

\par
   Since $\lambda\in \mathbb{F}_{p^m}^{\times}$ and
$\mathbb{F}_{p^m}^{\times}$ is a multiplicative cyclic group of order $p^m-1$, there is a unique element
$\lambda_0\in \mathbb{F}_{p^m}^{\times}$ such that $\lambda_0^{p^s}=\lambda$. From this, we deduce that
$x^{np^s}-\lambda=(x^n-\lambda_0)^{p^s}$ in $\mathbb{F}_{p^m}[x]$. As ${\rm gcd}(p,n)=1$, there are pairwise coprime monic
irreducible polynomials $f_1(x),\ldots, f_r(x)$ in $\mathbb{F}_{p^m}[x]$ such that
$x^n-\lambda_0=f_1(x)\ldots f_r(x)$. This implies
\begin{equation}\label{eq1}x^{np^s}-\lambda=(x^n-\lambda_0)^{p^s}=f_1(x)^{p^s}\ldots f_r(x)^{p^s}.
\end{equation}
For any integer $j$, $1\leq j\leq r$, we assume ${\rm deg}(f_j(x))=d_j$ and denote $F_j(x)=\frac{x^{n}-\lambda_0}{f_j(x)}$.
Then $F_j(x)^{p^s}=\frac{x^{np^s}-\lambda}{f_j(x)^{p^s}}$ and ${\rm gcd}(F_j(x),f_j(x))=1$. Hence there exist $v_j(x),w_j(x)\in \mathbb{F}_{p^m}[x]$ such that
${\rm deg}(v_j(x))<{\rm deg}(f_j(x))=d_j$ and $v_j(x)F_j(x)+w_j(x)f_j(x)=1$. This implies
\begin{equation}\label{eq2}v_j(x)^{p^s}F_j(x)^{p^s}+w_j(x)^{p^s}f_j(x)^{p^s}
=(v_j(x)F_j(x)+w_j(x)f_j(x))^{p^s}=1.
\end{equation}

\par
  In the rest of this paper, we adopt the following notations.
\begin{notation}
\label{no2.2}Let $1\leq j\leq r$. Using the notations above, we denote
\begin{center}
$\mathcal{K}_j=\mathbb{F}_{p^m}[x]/\langle f_j(x)^{p^s}\rangle$, $\mathcal{K}_j[u]/\langle u^2\rangle=\mathcal{K}_j+u\mathcal{K}_j$
$(u^2=0)$
\end{center}
and let $\varepsilon_j(x)\in\mathcal{A}$ satisfying
\begin{equation}\label{eq3}
\varepsilon_j(x)= v_j(x)^{p^s}F_j(x)^{p^s}=1-w_j(x)^{p^s}f_j(x)^{p^s} \ ({\rm mod} \ x^{np^s}-\lambda).
\end{equation}
\end{notation}

\vskip 3mm \par
  Then by Equations (\ref{eq1})-(\ref{eq3}), we deduce the following conclusions.

\begin{lemma}\label{le2.3}
Using the notations above, we have the following:
\par
  (i) $\varepsilon_1(x)+\ldots+\varepsilon_r(x)=1$, $\varepsilon_j(x)^2=\varepsilon_j(x)$
and $\varepsilon_j(x)\varepsilon_l(x)=0$  in the ring $\mathcal{A}$ for all $1\leq j\neq l\leq r$.
\par
  (ii) $\mathcal{A}=\mathcal{A}_1\oplus\ldots \oplus\mathcal{A}_r$ where $\mathcal{A}_j=\mathcal{A}\varepsilon_j(x)$ with
$\varepsilon_j(x)$ as its multiplicative identity and satisfies $\mathcal{A}_j\mathcal{A}_l=\{0\}$ for all $1\leq j\neq l\leq r$.
\par
  (iii) For any integer $j$, $1\leq j\leq r$, for any $a(x)\in \mathcal{K}_j$ we define
  \begin{center}
$\varphi_j: a(x)\mapsto \varepsilon_j(x)a(x)$ $(${\rm mod} $x^{np^s}-\lambda)$.
\end{center}
Then $\varphi_j$ is a ring isomorphism from $\mathcal{K}_j$ onto $\mathcal{A}_j$.
\par
  (iv) For any $a_j(x)\in \mathcal{K}_j$ for $j=1,\ldots,r$, define
\begin{center}
$\varphi(a_1(x),\ldots,a_r(x))=\sum_{j=1}^r\varphi_j(a_j(x))=\sum_{j=1}^r\varepsilon_j(x)a_j(x)$ $(${\rm mod} $x^{np^s}-\lambda)$.
\end{center}
Then
$\varphi$ is a ring isomorphism from $\mathcal{K}_1\times\ldots\times\mathcal{K}_r$ onto $\mathcal{A}$.
\end{lemma}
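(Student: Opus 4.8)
The plan is to read the whole statement as the Chinese Remainder Theorem decomposition of $\mathcal{A}$ induced by the factorization (\ref{eq1}) into pairwise coprime factors $f_j(x)^{p^s}$, organized through the idempotents $\varepsilon_j(x)$. The engine driving every part is the pair of congruences
\[
\varepsilon_j(x)\equiv 1\ (\mathrm{mod}\ f_j(x)^{p^s}),\qquad \varepsilon_j(x)\equiv 0\ (\mathrm{mod}\ f_l(x)^{p^s})\ \ (l\neq j),
\]
which I would establish first. Since $F_j(x)=\frac{x^n-\lambda_0}{f_j(x)}=\prod_{k\neq j}f_k(x)$, we have $f_l(x)^{p^s}\mid F_j(x)^{p^s}\mid \varepsilon_j(x)$ for $l\neq j$, giving the second congruence; the first is immediate from (\ref{eq3}), where $\varepsilon_j(x)=1-w_j(x)^{p^s}f_j(x)^{p^s}$. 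The key technical device, already used in (\ref{eq2}), is the Frobenius identity $(g+h)^{p^s}=g^{p^s}+h^{p^s}$ valid in $\mathbb{F}_{p^m}[x]$: raising $v_jF_j+w_jf_j=1$ to the $p^s$ power keeps the right-hand side equal to $1$, and collapses $f_j(x)^{p^s}F_j(x)^{p^s}=(f_j(x)F_j(x))^{p^s}=(x^n-\lambda_0)^{p^s}=x^{np^s}-\lambda$, which is $0$ in $\mathcal{A}$.

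For part (i) I would then give the direct computations in $\mathcal{A}$. Idempotency follows from $\varepsilon_j^2-\varepsilon_j=-\varepsilon_j w_j^{p^s}f_j^{p^s}=-v_j^{p^s}w_j^{p^s}F_j^{p^s}f_j^{p^s}=-v_j^{p^s}w_j^{p^s}(x^{np^s}-\lambda)=0$. Orthogonality follows from $\varepsilon_j\varepsilon_l=v_j^{p^s}v_l^{p^s}F_j^{p^s}F_l^{p^s}$ together with the observation that every $f_k(x)$ divides $F_j(x)F_l(x)$ when $j\neq l$, so $x^n-\lambda_0\mid F_j(x)F_l(x)$ and hence $x^{np^s}-\lambda\mid F_j(x)^{p^s}F_l(x)^{p^s}$. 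Finally $\sum_j\varepsilon_j(x)-1$ vanishes modulo each $f_l(x)^{p^s}$ by the two defining congruences, and since the $f_l(x)^{p^s}$ are pairwise coprime with product $x^{np^s}-\lambda$, their product divides this difference; thus $\sum_j\varepsilon_j(x)=1$ in $\mathcal{A}$.

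Parts (ii)--(iv) are then formal consequences of possessing a complete system of orthogonal idempotents. For (ii), $a=a\cdot\sum_j\varepsilon_j=\sum_j a\varepsilon_j$ gives $\mathcal{A}=\sum_j\mathcal{A}_j$; multiplying a relation $\sum_j a_j\varepsilon_j=0$ by $\varepsilon_l$ and using orthogonality forces each component to vanish, giving directness; and $\mathcal{A}_j\mathcal{A}_l=\{0\}$ together with $\varepsilon_j$ being the identity of $\mathcal{A}_j$ follow at once from $\varepsilon_j\varepsilon_l=0$ and $\varepsilon_j^2=\varepsilon_j$. For (iii) I would check that $\varphi_j$ is well defined (if $f_j^{p^s}\mid a$ then $\varepsilon_j a=v_j^{p^s}F_j^{p^s}a$ is a multiple of $x^{np^s}-\lambda$, hence $0$), a ring homomorphism sending $1\mapsto\varepsilon_j$ (multiplicativity uses $\varepsilon_j^2=\varepsilon_j$), injective (if $\varepsilon_j a\equiv 0$ in $\mathcal{A}$, reduce modulo $f_j^{p^s}$ and use $\varepsilon_j\equiv 1$ to get $a\equiv 0$ in $\mathcal{K}_j$), and surjective (every $b\varepsilon_j\in\mathcal{A}_j$ equals $\varphi_j(b \bmod f_j^{p^s})$). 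Part (iv) assembles the previous two: $\varphi=\sum_j\varphi_j$ is additive and multiplicative with cross terms killed by orthogonality and diagonal terms by idempotency, it sends $(1,\ldots,1)\mapsto\sum_j\varepsilon_j=1$, and it is bijective as the direct sum of the component isomorphisms $\varphi_j\colon\mathcal{K}_j\xrightarrow{\sim}\mathcal{A}_j$ over the decomposition $\mathcal{A}=\bigoplus_j\mathcal{A}_j$ from (ii).

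I expect no deep obstacle here, since once the two defining congruences for $\varepsilon_j(x)$ are in hand, everything reduces to standard idempotent bookkeeping. The one point demanding genuine care is the consistent handling of the exponent $p^s$: every reduction ``$\equiv 0$ in $\mathcal{A}$'' ultimately rests on recognizing the factor $(f_j(x)F_j(x))^{p^s}=x^{np^s}-\lambda$, so one must repeatedly invoke the Frobenius identity to pass from the degree-one coprimality relations to their $p^s$-th powers, and must keep straight which congruences hold in $\mathbb{F}_{p^m}[x]$ and which already hold in the quotient $\mathcal{A}$. That is the only place where a sign or a factor could be mishandled.
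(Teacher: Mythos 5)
Your proposal is correct and follows essentially the same route as the paper: both build the orthogonal idempotents from the Bézout relation raised to the $p^s$-th power and reduce everything to $f_j(x)^{p^s}F_j(x)^{p^s}=x^{np^s}-\lambda$. The only cosmetic difference is that you obtain $\sum_j\varepsilon_j(x)=1$ from the congruences $\varepsilon_j(x)\equiv\delta_{jl}\ (\mathrm{mod}\ f_l(x)^{p^s})$ and pairwise coprimality, whereas the paper expands $\prod_j(1-\varepsilon_j(x))$; both arguments are valid and equally short.
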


\begin{proof} (i) By Equation (\ref{eq3}) and $f_j(x)^{p^s}F_j(x)^{p^s}=x^{np^s}-\lambda$, we have $\varepsilon_j(x)^2\equiv (v_j(x)^{p^s}F_j(x)^{p^s})(1-w_j(x)^{p^s}f_j(x)^{p^s})\equiv v_j(x)^{p^s}F_j(x)^{p^s}\equiv \varepsilon_j(x)$ (mod $x^{np^s}-\lambda$) and
$\varepsilon_j(x)\varepsilon_l(x)\equiv (v_j(x)^{p^s}F_j(x)^{p^s})(v_l(x)^{p^s}F_l(x)^{p^s})\equiv 0$
(mod $x^{np^s}-\lambda$). Hence $\varepsilon_j(x)^2=\varepsilon_j(x)$ and $\varepsilon_j(x)\varepsilon_l(x)=0$
in the ring $\mathcal{A}=\mathbb{F}_{p^m}[x]/\langle x^{np^s}-\lambda\rangle$ for
all $1\leq j\neq l\leq r$. From these we deduce that $\prod_{j=1}^r(1-\varepsilon_j(x))
=1-\sum_{j=1}^r\varepsilon_j(x)$ in $\mathcal{A}$. On the other hand, we have $\prod_{j=1}^r(1-\varepsilon_j(x))
\equiv\prod_{j=1}^rw_j(x)^{p^s}f_j(x)^{p^s}\equiv 0$ (mod $x^{np^s}-\lambda$). Therefore,
we have $\sum_{j=1}^r\varepsilon_j(x)=1$ in $\mathcal{A}$.

\par
  (ii) By classical ring theory, it follows from (i) immediately.

\par
  (iii) By $\varepsilon_j(x)^2=\varepsilon_j(x)$ in $\mathcal{A}$, we see that
$\varphi_j$ is a ring homomorphism from $\mathcal{K}_j$ to $\mathcal{A}_j$. Let
$b(x)\in \mathcal{A}_j$. Dividing $b(x)$ by $f_j(x)^{p^s}$, we obtain $b(x)=c(x)f_j(x)^{p^s}+a(x)$
where $a(x),c(x)\in \mathbb{F}_{p^m}[x]$ satisfying $a(x)=0$ or ${\rm deg}(a(x))<{\rm deg}(f_j(x)^{p^s})$.
Then $a(x)\in \mathcal{K}_j=\mathbb{F}_{p^m}[x]/\langle f_j(x)^{p^s}\rangle$, and by (ii) we have
$b(x)\equiv \varepsilon_j(x)b(x)\equiv v_j(x)^{p^s}F_j(x)^{p^s}(c(x)f_j(x)^{p^s}+a(x))
\equiv v_j(x)^{p^s}F_j(x)^{p^s}a(x)\equiv \varepsilon_j(x)a(x)$ (mod $x^{np^s}-\lambda$). This
implies $b(x)=\varepsilon_j(x)a(x)$ in $\mathcal{A}_j$. Hence $\varphi_j$ is surjective.

\par
  Let $a(x)\in \mathcal{K}_j$ satisfying $\varepsilon_j(x)a(x)=0$ in $\mathcal{A}_j$. Then
$a(x)-a(x)w_j(x)^{p^s}f_j(x)^{p^s}=(1-w_j(x)^{p^s}f_j(x)^{p^s})a(x)\equiv \varepsilon_j(x)a(x)\equiv0$
(mod $x^{np^s}-\lambda$). This implies that $a(x)\equiv a(x)-a(x)w_j(x)^{p^s}f_j(x)^{p^s}\equiv 0$ (mod $f_j(x)^{p^s}$), and hence $a(x)=0$ in $\mathcal{K}_j$. Therefore, $\varphi_j$ is a ring isomorphism from $\mathcal{K}_j$ onto $\mathcal{A}_j$.

\par
 (iv) By the definition of direct product rings, it follows from (ii) and (iii).
\end{proof}

Next, we investigate the structure of the ring $\mathcal{A}+u\mathcal{A}$.
\begin{lemma}\label{le2.4}
Using the notations in Lemma \ref{le2.3}, for any $\xi_j+u\eta_j\in \mathcal{K}_j+u\mathcal{K}_j$ with $\xi_j,\eta_j\in \mathcal{K}_j$, where $j=1,\ldots,r$, we define
\begin{equation}\label{eq4}\Phi(\xi_1+u\eta_1,\ldots,\xi_r+u\eta_r)=\sum_{j=1}^r\left(\varphi_j(\xi_j)+u\varphi_j(\eta_j)\right)
=\sum_{j=1}^r\varepsilon_j(x)(\xi_j+u\eta_j).
\end{equation}
Then
$\Phi$ is a ring isomorphism from $(\mathcal{K}_1+u\mathcal{K}_1)\times\ldots\times(\mathcal{K}_r+u\mathcal{K}_r)$ onto $\mathcal{A}+u\mathcal{A}$.
\end{lemma}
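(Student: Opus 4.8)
The plan is to leverage Lemma~\ref{le2.3}(iv), which already supplies a ring isomorphism $\varphi:\mathcal{K}_1\times\cdots\times\mathcal{K}_r\to\mathcal{A}$, and to exhibit $\Phi$ as its ``$u$-linear extension.'' Writing a typical element of the source as $(\xi_j+u\eta_j)_{1\le j\le r}$, the map $\Phi$ sends it to $\sum_{j=1}^r\varepsilon_j(x)\xi_j+u\sum_{j=1}^r\varepsilon_j(x)\eta_j=\varphi(\xi_1,\ldots,\xi_r)+u\,\varphi(\eta_1,\ldots,\eta_r)$. From this reformulation, additivity of $\Phi$ is immediate, since $\varphi$ is additive and the two $u$-components add separately.

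For multiplicativity, I would compute the product of two general images $\Phi((\xi_j+u\eta_j)_j)$ and $\Phi((\xi_j'+u\eta_j')_j)$ directly. Expanding $\bigl(\sum_j(\varphi_j(\xi_j)+u\varphi_j(\eta_j))\bigr)\bigl(\sum_l(\varphi_l(\xi_l')+u\varphi_l(\eta_l'))\bigr)$ and using $u^2=0$ produces a double sum over $(j,l)$. The crucial step is that each $\varphi_j$ has image in $\mathcal{A}_j=\mathcal{A}\varepsilon_j(x)$ and that $\mathcal{A}_j\mathcal{A}_l=\{0\}$ for $j\neq l$ by Lemma~\ref{le2.3}(ii); hence every off-diagonal term vanishes and only the $j=l$ contributions survive. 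Since each $\varphi_j$ is multiplicative, the surviving sum is exactly $\sum_j\bigl(\varphi_j(\xi_j\xi_j')+u\varphi_j(\xi_j\eta_j'+\eta_j\xi_j')\bigr)$, which is the image under $\Phi$ of the componentwise product $(\xi_j\xi_j'+u(\xi_j\eta_j'+\eta_j\xi_j'))_j$. Preservation of the multiplicative identity follows from $\Phi((1+u\cdot0)_j)=\sum_j\varepsilon_j(x)=1$ by Lemma~\ref{le2.3}(i). This ``killing of cross terms'' by orthogonality of the idempotents is the only point requiring care, so I expect it to be the main step.

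Finally, for bijectivity I would read everything off from $\varphi$. Injectivity: if $\Phi((\xi_j+u\eta_j)_j)=0$, then separating the non-$u$ and the $u$ parts gives $\varphi(\xi_1,\ldots,\xi_r)=0$ and $\varphi(\eta_1,\ldots,\eta_r)=0$ in $\mathcal{A}$, so all $\xi_j$ and $\eta_j$ vanish by injectivity of $\varphi$. Surjectivity: any element of $\mathcal{A}+u\mathcal{A}$ has the form $a(x)+ub(x)$ with $a,b\in\mathcal{A}$; choosing preimages $(\xi_j)_j,(\eta_j)_j$ of $a,b$ under $\varphi$ and recombining gives $\Phi((\xi_j+u\eta_j)_j)=a(x)+ub(x)$. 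Equivalently, one could invoke equality of cardinalities, since $\sum_j d_jp^s=np^s$ forces $\prod_j|\mathcal{K}_j+u\mathcal{K}_j|=|\mathcal{A}+u\mathcal{A}|$, reducing surjectivity to the already-established injectivity. Conceptually, the whole statement is nothing but the fact that adjoining a square-zero element $u$---the base change $-\otimes_{\mathbb{F}_{p^m}}R$---commutes with finite direct products, applied to the isomorphism of Lemma~\ref{le2.3}(iv).
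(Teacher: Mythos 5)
Your proof is correct, and it reaches the conclusion by a somewhat different route than the paper. The paper promotes the isomorphism $\varphi$ of Lemma \ref{le2.3}(iv) abstractly: it first extends $\varphi$ coefficientwise to a polynomial-ring isomorphism $\Phi_0:\mathcal{K}_1[u]\times\cdots\times\mathcal{K}_r[u]\rightarrow\mathcal{A}[u]$ and then observes that $\Phi_0$ descends to the quotients modulo $\langle u^2\rangle$, which is exactly the ``base change by $[u]/\langle u^2\rangle$ commutes with finite products'' principle you mention in your closing sentence. You instead verify the ring-homomorphism axioms for $\Phi$ by hand, which forces you to confront the one computation the paper's functorial argument hides: the vanishing of the cross terms $\varphi_j(\cdot)\varphi_l(\cdot)$ for $j\neq l$, which you correctly dispatch using $\mathcal{A}_j\mathcal{A}_l=\{0\}$ from Lemma \ref{le2.3}(ii), together with $u^2=0$ to truncate the expansion. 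Your treatment of the identity via $\sum_j\varepsilon_j(x)=1$ and of bijectivity by splitting into the $1$- and $u$-components (legitimate because $\mathcal{A}+u\mathcal{A}$ is free over $\mathcal{A}$ with basis $\{1,u\}$) is also sound. The paper's route is shorter and generalizes immediately to $u^k=0$ for any $k$; yours makes the role of the orthogonal idempotents explicit and requires no appeal to the behavior of quotients of product rings. Both are complete proofs resting on the same key input, Lemma \ref{le2.3}.
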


\begin{proof}
 By Lemma \ref{le2.3}(iv), it is clear that the ring isomorphism $\varphi:\mathcal{K}_1\times\ldots\times\mathcal{K}_r\rightarrow\mathcal{A}$
can be extended to a polynomial ring isomorphism $\Phi_0$ from $\mathcal{K}_1[u]\times\ldots\times\mathcal{K}_r[u]$ onto $\mathcal{A}[u]$ by
the rule that
$$\Phi_0\left(\sum_t\xi_{1,t}u^t,\ldots,\sum_t\xi_{r,t}u^t\right)=\sum_t\left(\sum_{j=1}^r\varphi_j(\xi_{j,t})\right)u^t
\ (\forall \xi_{j,t}\in \mathcal{K}_j).$$
Then by classical ring theory, we see that $\Phi_0$ induces a ring isomorphism $\Phi$ from
 $\left(\mathcal{K}_1[u]/\langle u^2\rangle\right)\times\ldots\times\left(\mathcal{K}_r[u]/\langle u^2\rangle\right)$
onto $\mathcal{A}[u]/\langle u^2\rangle$ define by (\ref{eq4}). Now, the conclusion follows from
Notation \ref{no2.2} and $\mathcal{A}[u]/\langle u^2\rangle=\mathcal{A}+u\mathcal{A}$ $(u^2=0)$, immediately.
\end{proof}
  Finally, we give a direct sum decomposition for any
$\lambda$-constacyclic code over $R$ of length $np^s$.

\begin{theorem}\label{th2.5}
  Using the notations above, let $\mathcal{C}\subseteq \mathcal{R}_\lambda=R[x]/\langle x^{np^s}-\lambda\rangle$. Then the following statements are equivalent:
\vskip 2mm\par
  (i) \textit{$\mathcal{C}$ is a $\lambda$-constacyclic code over $R$ of length $np^s$,
i.e. an ideal of $\mathcal{R}_\lambda$};

\vskip 2mm \par
  (ii) \textit{$\mathcal{C}$ is an ideal of $\mathcal{A}+u\mathcal{A}$};

\vskip 2mm \par
  (iii) \textit{For each integer $j$, $1\leq j\leq r$, there is a unique ideal $C_j$ of $\mathcal{K}_j+u\mathcal{K}_j$
such that $\mathcal{C}=\bigoplus_{j=1}^r\varepsilon_j(x)C_j$ $({\rm mod} \ x^{np^s}-\lambda)$}.
\end{theorem}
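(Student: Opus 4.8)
The plan is to establish the two equivalences (i) $\Leftrightarrow$ (ii) and (ii) $\Leftrightarrow$ (iii) separately, in each case transporting the notion of ``ideal'' along a ring isomorphism already available in the excerpt, so that no fresh computation with the polynomials $f_j(x)$ or $\varepsilon_j(x)$ is required.

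For (i) $\Leftrightarrow$ (ii) I would argue as follows. By Lemma~\ref{le2.1} the map $\Psi$ is a ring isomorphism from $\mathcal{A}+u\mathcal{A}$ onto $\mathcal{R}_\lambda$, and any ring isomorphism carries ideals bijectively onto ideals. Since the excerpt has identified $\mathcal{A}+u\mathcal{A}$ with $\mathcal{R}_\lambda$ via $\Psi$, a subset $\mathcal{C}$ is an ideal of $\mathcal{R}_\lambda$ if and only if it is an ideal of $\mathcal{A}+u\mathcal{A}$. Combined with the characterization recalled in the introduction---that $\lambda$-constacyclic codes over $R$ of length $np^s$ are precisely the ideals of $\mathcal{R}_\lambda=R[x]/\langle x^{np^s}-\lambda\rangle$---this yields (i) $\Leftrightarrow$ (ii).

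For (ii) $\Leftrightarrow$ (iii) I would invoke the ring isomorphism $\Phi$ of Lemma~\ref{le2.4} together with the elementary description of ideals in a finite direct product of commutative rings. Write $S=(\mathcal{K}_1+u\mathcal{K}_1)\times\cdots\times(\mathcal{K}_r+u\mathcal{K}_r)$ and let $e_1,\ldots,e_r$ be the canonical orthogonal idempotents of $S$, where $e_j$ carries the identity in the $j$-th component and $0$ elsewhere. Every ideal $I$ of $S$ splits as $I=e_1I\oplus\cdots\oplus e_rI$, with $e_jI$ naturally an ideal $C_j$ of $\mathcal{K}_j+u\mathcal{K}_j$; conversely, any tuple of ideals $C_j$ of $\mathcal{K}_j+u\mathcal{K}_j$ produces the ideal $C_1\times\cdots\times C_r$ of $S$, and these two assignments are mutually inverse, which gives the uniqueness of $(C_1,\ldots,C_r)$. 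The essential compatibility is that $\Phi(e_j)=\varepsilon_j(x)$, read directly off the definition \eqref{eq4} of $\Phi$, so that $\Phi$ sends $C_1\times\cdots\times C_r$ to $\sum_{j=1}^r\varepsilon_j(x)C_j$. Hence an arbitrary ideal $\mathcal{C}$ of $\mathcal{A}+u\mathcal{A}$ equals $\Phi(\Phi^{-1}(\mathcal{C}))=\bigoplus_{j=1}^r\varepsilon_j(x)C_j$ for the unique tuple with $C_j$ the $j$-th component of $\Phi^{-1}(\mathcal{C})$, which is exactly (iii).

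The only point needing care---routine commutative algebra rather than a genuine obstacle---is to confirm that the displayed expression is really a direct sum of ideals. This rests on Lemma~\ref{le2.3}(i): the relations $\varepsilon_j(x)^2=\varepsilon_j(x)$, $\varepsilon_j(x)\varepsilon_l(x)=0$ for $j\neq l$, and $\sum_{j=1}^r\varepsilon_j(x)=1$ hold in $\mathcal{A}$ and therefore in $\mathcal{A}+u\mathcal{A}$. These force each $\varepsilon_j(x)C_j$ to be an ideal, make distinct summands multiply to zero and intersect trivially, and guarantee that the sum recovers all of $\mathcal{C}$ since $\varepsilon_j(x)$ acts as the identity on the $j$-th summand. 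I would spell out the identity $\Phi(e_j)=\varepsilon_j(x)$ explicitly, as it is the bridge that turns the abstract product decomposition into the concrete formula $\mathcal{C}=\bigoplus_{j=1}^r\varepsilon_j(x)C_j$.
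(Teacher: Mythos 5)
Your proposal is correct and follows essentially the same route as the paper: (i)$\Leftrightarrow$(ii) via the isomorphism $\Psi$ of Lemma~\ref{le2.1}, and (ii)$\Leftrightarrow$(iii) via the isomorphism $\Phi$ of Lemma~\ref{le2.4} combined with the standard description of ideals in a finite direct product, with Equation~(\ref{eq4}) (equivalently your observation $\Phi(e_j)=\varepsilon_j(x)$) converting the product ideal $C_1\times\cdots\times C_r$ into the concrete decomposition $\bigoplus_{j=1}^r\varepsilon_j(x)C_j$. Your extra verification of directness via Lemma~\ref{le2.3}(i) is a welcome detail the paper leaves implicit, but it does not change the argument.
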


\begin{proof}
(i)$\Leftrightarrow$(ii) It follows from that $\mathcal{A}+u\mathcal{A}=\mathcal{R}_\lambda$ under a ring isomorphism.

\par
  (ii)$\Leftrightarrow$(iii) By Lemma \ref{le2.1} we know that $\mathcal{C}$ is an
ideal of $\mathcal{A}+u\mathcal{A}$ if and only if there is a unique ideal $I$ of the ring
$(\mathcal{K}_1+u\mathcal{K}_1)\times\ldots\times(\mathcal{K}_r+u\mathcal{K}_r)$ such that
$\Phi(I)=\mathcal{C}$. Furthermore, by classical ring theory we see that $I$ is an ideal of
$(\mathcal{K}_1+u\mathcal{K}_1)\times\ldots\times(\mathcal{K}_r+u\mathcal{K}_r)$ if and only if
for each integer $j$, $1\leq j\leq r$, there is a unique ideal $C_j$ of $\mathcal{K}_j+u\mathcal{K}_j$
such that
\begin{center}
$I=C_1\times\ldots\times C_r=\{(\alpha_1,\ldots,\alpha_r)\mid \alpha_j\in C_j, \ j=1,\ldots,r\}$.
\end{center}
When this condition is satisfied, by Equation (\ref{eq4}) we have
\begin{eqnarray*}
\mathcal{C}&=&\Phi(I)=\{\Phi(\alpha_1,\ldots,\alpha_r)\mid \alpha_j\in C_j, \ j=1,\ldots,r\}\\
 &=&\{\sum_{j=1}^r\varepsilon_j(x)\alpha_j\mid \alpha_j\in C_j, \ j=1,\ldots,r\}.
\end{eqnarray*}
Hence $\mathcal{C}=\bigoplus_{j=1}^r\varepsilon_j(x)C_j$, since $\varepsilon_j(x)C_j=\{\varepsilon_j(x)\alpha_j\mid \alpha_j\in C_j\}$ for all $j$.
\end{proof}
  Therefore, in order to determine all distinct $\lambda$-constacyclic codes over $R$ of length $np^s$,
by Theorem \ref{th2.5} it is sufficient to list all distinct
ideals of the ring $\mathcal{K}_j+u\mathcal{K}_j$ ($u^2=0$) for all $j=1,\ldots,r$.



\section{Representation for ideals of $\mathcal{K}_j+u\mathcal{K}_j$}

\noindent \par
In this section, we determine all distinct ideals of $\mathcal{K}_j+u\mathcal{K}_j$ for all $j=1,\ldots, r$. As $\mathcal{K}_j=\mathbb{F}_{p^m}[x]/\langle f_j(x)^{p^s}\rangle$ and $f_j(x)$ is a monic irreducible polynomial in $\mathbb{F}_{p^m}[x]$ of degree $d_j$, we have the following conclusions.

\begin{lemma}\label{le3.1}
  (cf. \cite{s5} Lemma 3.7 and \cite{s6} Example 2.1) $\mathcal{K}_j$ have the following properties:

\vskip 2mm\par
  (i) $\mathcal{K}_j$ is a finite chain ring, $f_j(x)$ generates the unique
maximal ideal $\langle f_j(x)\rangle$ of $\mathcal{K}_j$, $p^s$ is the nilpotency index of $f_j(x)$ and the residue class field of $\mathcal{K}_j$ modulo $\langle f_j(x)\rangle$ is $\mathcal{K}_j/\langle f_j(x)\rangle\cong \mathbb{F}_{p^m}[x]/\langle f_j(x)\rangle$, where $\mathbb{F}_{p^m}[x]/\langle f_j(x)\rangle$ is an extension field of $\mathbb{F}_{p^m}$ with $p^{md_j}$ elements.

\vskip 2mm\par
  (ii) Let ${\mathcal T}_j=\{\sum_{i=0}^{d_j-1}t_ix^i\mid t_0,t_1,\ldots,t_{d_j-1}\in \mathbb{F}_{p^m}\}$. Then $|{\mathcal T}_j|=p^{md_j}$, and every element $\xi$ of $\mathcal{K}_j$ has a unique $f_j(x)$-adic expansion:
\begin{center}
$\xi=\sum_{k=0}^{p^s-1}b_k(x)f(x)^k$,
where $b_k(x)\in {\mathcal T}_j$ for all $k=0,1,\ldots,p^s-1$.
\end{center}

\noindent
If $\xi\neq 0$, the \textit{$f_j(x)$-degree} of $\xi$ is defined as the least index $k$ for which $b_k(x)\neq 0$ and denoted as  $\|\xi\|_{f_j(x)}=k$. If $\xi=0$
we write $\|\xi\|_{f_j(x)}=p^s$. Moreover, $\xi\in \mathcal{K}_j^{\times}$ if and only if
$b_0(x)\neq 0$, i.e. $\|\xi\|_{f_j(x)}=0$.

\vskip 2mm\par
  (iii) All distinct ideals of $\mathcal{K}_j$ are given by: $\langle f_j(x)^l\rangle=f_j(x)^l\mathcal{K}_j$, $l=0,1,\ldots,p^s$. Moreover, $|\langle f_j(x)^l\rangle|=p^{md_j(p^s-l)}$ for $l=0,1,\ldots,p^s$.

\vskip 2mm\par
  (iv) Let $1\leq l\leq p^s$. Then
$\mathcal{K}_j/\langle f_j(x)^l\rangle=\{\sum_{k=0}^{l-1}b_k(x)f(x)^k\mid b_k(x)\in \mathcal{T}_j, \ k=0,1,\ldots,l-1\}$ and
hence $|\mathcal{K}_j/\langle f_j(x)^l\rangle|=p^{md_jl}$.

\vskip 2mm\par
  (v) For any $0\leq l\leq t\leq p^s-1$, we have
\begin{center}
$f_j(x)^l(\mathcal{K}_j/\langle f_j(x)^t\rangle)=\{\sum_{k=l}^{t-1}b_k(x)f(x)^k\mid
b_k(x)\in \mathcal{T}_j, \ k=l,\ldots,t-1\}$
\end{center}
and $|f_j(x)^l(\mathcal{K}_j/\langle f_j(x)^t\rangle)|=p^{md_j(t-l)}$, where we set
$f_j(x)^l(\mathcal{K}_j/\langle f_j(x)^l\rangle)=\{0\}$ for convenience.
\end{lemma}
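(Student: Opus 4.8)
The plan is to establish part (i) directly from the fact that $\mathbb{F}_{p^m}[x]$ is a principal ideal domain together with the irreducibility of $f_j(x)$, and then to derive parts (ii)--(v) as essentially routine consequences of the resulting finite-chain-ring structure. Throughout I will write $f_j$ for $f_j(x)$ where no confusion arises.

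First I would prove (i). By the correspondence theorem the ideals of $\mathcal{K}_j=\mathbb{F}_{p^m}[x]/\langle f_j^{p^s}\rangle$ are in inclusion-preserving bijection with the ideals of $\mathbb{F}_{p^m}[x]$ containing $\langle f_j^{p^s}\rangle$, i.e. with the monic divisors of $f_j^{p^s}$. Since $f_j$ is irreducible, these divisors are exactly $f_j^l$ for $0\le l\le p^s$, so the ideals of $\mathcal{K}_j$ form the single chain $\mathcal{K}_j=\langle f_j^0\rangle\supsetneq\langle f_j\rangle\supsetneq\cdots\supsetneq\langle f_j^{p^s}\rangle=\{0\}$. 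This exhibits $\mathcal{K}_j$ as a finite chain ring with unique maximal ideal $\langle f_j\rangle$; the nilpotency index of $f_j$ equals $p^s$ because $f_j^{p^s}=0$ while $f_j^{p^s-1}\neq0$ in $\mathcal{K}_j$. The residue field is then computed via the third isomorphism theorem, $\mathcal{K}_j/\langle f_j\rangle\cong\mathbb{F}_{p^m}[x]/\langle f_j\rangle$, which is the degree-$d_j$ extension of $\mathbb{F}_{p^m}$ and hence has $p^{md_j}$ elements.

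Next I would handle (ii). The equality $|\mathcal{T}_j|=p^{md_j}$ is immediate, since each of the $d_j$ coefficients ranges freely over $\mathbb{F}_{p^m}$. For the $f_j$-adic expansion I would use iterated division with remainder: given a representative of degree $<d_jp^s$, divide by $f_j$ to peel off $b_0\in\mathcal{T}_j$, divide the quotient by $f_j$ to peel off $b_1$, and continue, terminating after $p^s$ steps. Existence is then clear, and uniqueness follows by counting, since the number of such expressions is $(p^{md_j})^{p^s}=p^{md_jp^s}$, which matches $|\mathcal{K}_j|=(p^m)^{\deg f_j^{p^s}}=p^{md_jp^s}$, forcing the surjection from tuples onto $\mathcal{K}_j$ to be a bijection. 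The unit characterization is then transparent: $\xi$ lies in the maximal ideal $\langle f_j\rangle$ precisely when $f_j\mid\xi$, i.e. when $b_0=0$, so $\xi\in\mathcal{K}_j^{\times}$ iff $b_0\neq0$, i.e. $\|\xi\|_{f_j(x)}=0$.

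Finally, parts (iii)--(v) are bookkeeping layered on top of the $f_j$-adic expansion. For (iii) the list of ideals is already supplied by (i); for the orders I would note that each successive quotient $\langle f_j^k\rangle/\langle f_j^{k+1}\rangle$ is a one-dimensional vector space over the residue field, hence of size $p^{md_j}$, and multiply across $k=l,\dots,p^s-1$ to obtain $|\langle f_j^l\rangle|=p^{md_j(p^s-l)}$ (equivalently, apply the expansion to $f_j^l\mathcal{K}_j$). For (iv), reducing the expansion modulo $\langle f_j^l\rangle$ annihilates the terms with $k\ge l$, yielding the stated representative set and the count $p^{md_jl}=|\mathcal{K}_j|/|\langle f_j^l\rangle|$. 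For (v), I would multiply the representatives of $\mathcal{K}_j/\langle f_j^t\rangle$ from (iv) by $f_j^l$ and discard the terms of exponent $\ge t$, producing $\{\sum_{k=l}^{t-1}b_k(x)f_j^k\mid b_k\in\mathcal{T}_j\}$ of order $p^{md_j(t-l)}$. None of these steps is genuinely difficult; the only places demanding care are pinning down the uniqueness in (ii) by the cardinality match rather than leaving it implicit, and tracking the exponent ranges correctly in (v). I expect the structural identification in (i)---that combining the PID property with irreducibility forces the ideals to be exactly the powers of $f_j$---to be the conceptual crux on which all the remaining parts rest.
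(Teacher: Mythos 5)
Your argument is correct. Note, however, that the paper does not actually prove Lemma 3.1 at all: it is stated with the citation ``(cf.\ \cite{s5} Lemma 3.7 and \cite{s6} Example 2.1)'' and imported as a known fact about the finite chain ring $\mathbb{F}_{p^m}[x]/\langle f_j(x)^{p^s}\rangle$. What you have written is the standard self-contained justification that those references rely on: the correspondence theorem plus the PID structure of $\mathbb{F}_{p^m}[x]$ and the irreducibility of $f_j(x)$ force the ideal lattice to be the single chain $\langle f_j(x)^l\rangle$, $0\le l\le p^s$, which gives (i) and the list in (iii); the base-$f_j(x)$ expansion obtained by iterated division with remainder, with uniqueness pinned down by the cardinality match $(p^{md_j})^{p^s}=|\mathcal{K}_j|$, gives (ii); and (iii)--(v) then follow by reading off which coordinates $b_k(x)$ are free. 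All the exponent bookkeeping in your (iii)--(v) checks out against the stated orders, so your proposal correctly fills in a proof the paper delegates to the literature.
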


   By Notation \ref{no2.2}, the addition and multiplication on the ring $\mathcal{K}_j+u\mathcal{K}_j$ are defined by:
for any $\xi_0,\xi_1,\eta_0,\eta_1\in \mathcal{K}_j$,

\vskip 2mm\par
   $\diamondsuit$ $(\xi_0+u\xi_1)+(\eta_0+u\eta_1)=(\xi_0+\eta_0)+u(\xi_1+\eta_1)$;

\vskip 2mm\par
   $\diamondsuit$ $(\xi_0+u\xi_1)(\eta_0+u\eta_1)=\xi_0\eta_0+u(\xi_0\eta_1+\xi_1\eta_0)$.

\vskip 2mm\noindent
  Hence $\mathcal{K}_j$ is a subring of $\mathcal{K}_j+u\mathcal{K}_j$. Furthermore,
$\mathcal{K}_j+u\mathcal{K}_j$ is a free $\mathcal{K}_j$-module with a $\mathcal{K}_j$-basis $\{1,u\}$. Now, we define
$$\varsigma: \mathcal{K}_j^2\rightarrow \mathcal{K}_j+u\mathcal{K}_j
\ {\rm via} \ \varsigma: (a_0,a_1)\mapsto a_0+ua_1 \ (\forall a_0,a_1\in \mathcal{K}_j).$$
One can easily verify that $\varsigma$ is a $\mathcal{K}_j$-module isomorphism from $\mathcal{K}_j^2$
onto $\mathcal{K}_j+u\mathcal{K}_j$. Using this $\mathcal{K}_j$-module isomorphism $\varsigma$, we can determine ideals of
the ring $\mathcal{K}_j+u\mathcal{K}_j$ from $\mathcal{K}_j$-submodules of $\mathcal{K}_j^2$ satisfying certain conditions.

\begin{lemma}\label{le3.2}
Using the notations above, $C$ is an ideal
of the ring $\mathcal{K}_j+u\mathcal{K}_j$ if and only if
there is a unique $\mathcal{K}_j$-submodule $S$ of $\mathcal{K}_j^2$ satisfying the following condition:
\begin{equation}\label{eq5}(0,a_0)\in S, \ \forall (a_0,a_1)\in S
\end{equation}
such that $C=\varsigma(S)$.
\end{lemma}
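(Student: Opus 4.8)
The plan is to establish the claimed correspondence by transferring the ideal structure of $\mathcal{K}_j+u\mathcal{K}_j$ through the $\mathcal{K}_j$-module isomorphism $\varsigma$, and then characterizing exactly which $\mathcal{K}_j$-submodules of $\mathcal{K}_j^2$ arise this way. First I would observe that any ideal $C$ of $\mathcal{K}_j+u\mathcal{K}_j$ is in particular a $\mathcal{K}_j$-submodule of $\mathcal{K}_j+u\mathcal{K}_j$, since $\mathcal{K}_j$ is a subring; hence $S:=\varsigma^{-1}(C)$ is a $\mathcal{K}_j$-submodule of $\mathcal{K}_j^2$, and it is unique because $\varsigma$ is a bijection. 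Thus the essential content of the lemma is the extra closure condition (\ref{eq5}): among all $\mathcal{K}_j$-submodules $S$, those corresponding to \emph{ideals} are precisely those closed under the operation $(a_0,a_1)\mapsto(0,a_0)$.

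The key algebraic input is that an element $\xi_0+u\xi_1$ of $\mathcal{K}_j+u\mathcal{K}_j$ is already closed under multiplication by every scalar from $\mathcal{K}_j$ (that is exactly the $\mathcal{K}_j$-module structure), so the only additional requirement for a $\mathcal{K}_j$-submodule to be an ideal is closure under multiplication by $u$. Using the multiplication rule $u\cdot(\xi_0+u\xi_1)=u\xi_0$ (because $u^2=0$), I would note that multiplying $a_0+ua_1$ by $u$ produces $ua_0$, which corresponds under $\varsigma^{-1}$ precisely to the element $(0,a_0)$. So the statement ``$C$ is closed under multiplication by $u$'' translates verbatim into ``$(a_0,a_1)\in S \Rightarrow (0,a_0)\in S$,'' which is condition (\ref{eq5}).

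Putting these together I would argue both directions. For the forward direction, if $C$ is an ideal, then $C$ is a $\mathcal{K}_j$-submodule and closed under multiplication by $u$; transferring through $\varsigma^{-1}$ gives a $\mathcal{K}_j$-submodule $S$ satisfying (\ref{eq5}). For the converse, suppose $S$ is a $\mathcal{K}_j$-submodule of $\mathcal{K}_j^2$ satisfying (\ref{eq5}) and set $C=\varsigma(S)$; then $C$ is a $\mathcal{K}_j$-submodule of $\mathcal{K}_j+u\mathcal{K}_j$, and condition (\ref{eq5}) guarantees $uC\subseteq C$. Since a general element of $\mathcal{K}_j+u\mathcal{K}_j$ has the form $\eta_0+u\eta_1$ with $\eta_0,\eta_1\in\mathcal{K}_j$, and $(\eta_0+u\eta_1)C\subseteq \eta_0 C + \eta_1(uC)\subseteq C$ by the module and $u$-closure properties, I conclude $C$ absorbs multiplication by all of $\mathcal{K}_j+u\mathcal{K}_j$, hence is an ideal. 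Uniqueness of $S$ is immediate from the bijectivity of $\varsigma$.

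I do not expect a serious obstacle here: the proof is essentially a bookkeeping exercise translating the single nontrivial ring-theoretic condition (closure under $u$) into the module language via $\varsigma$. The one point requiring mild care is verifying that decomposing $(\eta_0+u\eta_1)(a_0+ua_1)$ and using $u^2=0$ indeed keeps the result inside $C$; this amounts to checking that $\eta_0 C\subseteq C$ (module closure) and $\eta_1\cdot uC\subseteq C$ (which follows from $uC\subseteq C$ together with module closure), so the ``hardest'' part is simply being explicit that condition (\ref{eq5}) captures exactly $u$-closure and nothing more.
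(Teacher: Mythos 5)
Your proposal is correct and follows essentially the same route as the paper's proof: both directions transfer the problem through $\varsigma$, identify condition (\ref{eq5}) as exactly the closure of $C$ under multiplication by $u$ (using $u(a_0+ua_1)=ua_0$ since $u^2=0$), and verify the converse by expanding $(b_0+ub_1)(a_0+ua_1)=b_0a_0+u(b_0a_1+b_1a_0)$ and recognizing it as $\varsigma(b_0(a_0,a_1)+b_1(0,a_0))$. No gaps.
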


\begin{proof} Let $C$ be an ideal
of $\mathcal{K}_j+u\mathcal{K}_j$. Since $\mathcal{K}_j$ is a subring
of $\mathcal{K}_j+u\mathcal{K}_j$, we see that $C$ is a
$\mathcal{K}_j$-submodule of $\mathcal{K}_j+u\mathcal{K}_j$ satisfying
$u\xi\in C$ for any $\xi\in C$. Now, let
$S=\{(a_0,a_1)\mid a_0+ua_1\in C\}=\varsigma^{-1}(C)$. It is obvious that
$S$ is a $\mathcal{K}_j$-submodule of $\mathcal{K}_j^2$ satisfying $C=\varsigma(S)$. Moreover,
for any $(a_0,a_1)\in S$, i.e. $a_0+ua_1\in C$, by $u^2=0$ it follows
that
$ua_0=u(a_0+ua_1)\in C$. Hence
 $(0,a_0)\in S$.

\par
  Conversely, let $C=\varsigma(S)$ and $S$ be a $\mathcal{K}_j$-submodule of $\mathcal{K}_j^2$ satisfying
Condition (\ref{eq5}). For any $b_0,b_1\in \mathcal{K}_j$ and $a_0+ua_1\in C$ where $(a_0,a_1)\in S$,
by $(0,a_0)\in S$ we have $b_0(a_0,a_1)+b_1(0,a_0)\in S$. On the other hand, we have
$(b_0+ub_1)(a_0+ua_1)=b_0a_0+u(b_0a_1+b_1a_0)$ in $\mathcal{K}_j+u\mathcal{K}_j$. This implies
$(b_0+ub_1)(a_0+ua_1)=\varsigma(b_0a_0,b_0a_1+b_1a_0)=\varsigma(b_0(a_0,a_1)+b_1(0,a_0))\in \varsigma(S)=C$.
Hence $C$ is an ideal of $\mathcal{K}_j+u\mathcal{K}_j$.
\end{proof}

    We notice that $\mathcal{K}_j$-submodules of $\mathcal{K}_j^2$ are called
\textit{linear codes} over the finite chain ring $\mathcal{K}_j$ of length $2$.  Let $S$ be a linear code over $\mathcal{K}_j$ of length $2$. By \cite{s18} Definition 3.1, a matrix $G$ is called a \textit{generator matrix} for $S$ if every codeword in $S$
is a $\mathcal{K}_j$-linear combination of the row vectors of $G$ and
any row vector of $G$ can not be written as a $\mathcal{K}_j$-linear combination of the other row vectors of $G$.
For linear codes over $\mathcal{K}_j$ of length $2$ and their generator matrices, we list the following lemmas.

   \begin{lemma} (cf. \cite{s6} Lemma 2.2) The number of linear codes
over the finite chain ring $\mathcal{K}_j$ of length $2$ is equal to
$\sum_{j=0}^{p^s}(2j+1)|\mathcal{K}_j/\langle f_j(x)\rangle|^{p^s-j}=\sum_{j=0}^{p^s}(2j+1)p^{m(p^s-j)d_j}$.
\end{lemma}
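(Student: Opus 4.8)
The plan is to count the $\mathcal{K}_j$-submodules of $\mathcal{K}_j^2$ directly, organizing them by two numerical invariants attached to each submodule. Throughout I abbreviate the chain ring $\mathcal{K}_j$ by $R$, its generating non-unit $f_j(x)$ by $\pi$, the nilpotency index $p^s$ by $e$, and the residue field size $|\mathcal{K}_j/\langle f_j(x)\rangle|=p^{md_j}$ by $q$, all supplied by Lemma \ref{le3.1}. (Note that the summation variable in the statement is a dummy index that should not be conflated with the fixed subscript $j$; I will write it as $i$.) To each submodule $S\subseteq R^2$ I attach the image ideal $\pi_1(S)=\langle\pi^a\rangle$ under projection to the first coordinate, and the kernel ideal $\{y\in R:(0,y)\in S\}=\langle\pi^b\rangle$, where $0\le a,b\le e$; every ideal of $R$ has this shape by Lemma \ref{le3.1}(iii).

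First I would show that every such $S$ is generated by a single pair $(\pi^a,c)$ whose first coordinate generates $\pi_1(S)$, together with $(0,\pi^b)$. Indeed, some element of $S$ has first coordinate $\pi^a$ times a unit, which after scaling gives $(\pi^a,c)\in S$; and for any $(x,y)\in S$ one has $x=\alpha\pi^a$, whence $(x,y)-\alpha(\pi^a,c)\in S\cap(0\times R)=0\times\langle\pi^b\rangle$. Adjusting the second coordinate of a generator by the element $(0,\pi^b)$ leaves $S$ unchanged, so the scalar $c$ is well-defined only as a class $\bar c\in R/\langle\pi^b\rangle$; conversely distinct admissible classes yield distinct submodules, since $(\pi^a,c),(\pi^a,c')\in S$ forces $(0,c-c')\in 0\times\langle\pi^b\rangle$.

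The key step is to determine which classes $\bar c$ actually occur. Computing $S\cap(0\times R)$ from the two generators gives $\langle\pi^{e-a}c\rangle+\langle\pi^b\rangle$, so the kernel is exactly $\langle\pi^b\rangle$ precisely when $\pi^{e-a}c\in\langle\pi^b\rangle$, and this is the only constraint (the image is automatically $\langle\pi^a\rangle$). Hence the submodules with invariants $(a,b)$ are in bijection with the classes $\bar c\in R/\langle\pi^b\rangle$ satisfying $\pi^{e-a}c\equiv 0$, that is, with $\langle\pi^{\max(0,\,a+b-e)}\rangle/\langle\pi^b\rangle$, whose cardinality is $q^{\min(b,\,e-a)}$ by the sizes recorded in Lemma \ref{le3.1}. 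Summing over all invariants, the number of linear codes is $\sum_{a=0}^{e}\sum_{b=0}^{e}q^{\min(b,\,e-a)}$.

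To finish, I substitute $a'=e-a$, which turns the exponent into $\min(a',b)$, symmetric in the two indices. Grouping the pairs $(a',b)\in\{0,\dots,e\}^2$ by the common value $k=\min(a',b)$, there are $(e-k+1)^2-(e-k)^2=2(e-k)+1$ such pairs, so the double sum collapses to $\sum_{k=0}^{e}\bigl(2(e-k)+1\bigr)q^{k}=\sum_{i=0}^{e}(2i+1)q^{e-i}$, which is exactly the claimed value with $q=p^{md_j}$ and $e=p^s$. I expect the main obstacle to lie not in this final combinatorial identity, which is routine, but in the bookkeeping of the key step: verifying rigorously that each submodule arises from one and only one admissible pair $\bigl((\pi^a,c),(0,\pi^b)\bigr)$, and that the degenerate cases $a=e$ (submodules contained in $0\times R$), $b=0$, and $b=e$ (free image, zero kernel) are each counted correctly under the uniform formula $q^{\min(b,\,e-a)}$.
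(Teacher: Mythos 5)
Your count is correct, and it is worth noting that the paper itself offers no proof of this lemma at all: it is simply quoted from \cite{s6} (Lemma 2.2), so you have supplied an argument where the paper supplies only a citation. Your parametrization of a submodule $S\subseteq R^2$ by the triple $\bigl(a,b,\bar c\bigr)$ — image ideal $\langle\pi^a\rangle$, kernel ideal $\langle\pi^b\rangle$, and the class $\bar c\in\langle\pi^{\max(0,a+b-e)}\rangle/\langle\pi^b\rangle$ — is a clean and complete bookkeeping: the two invariants are intrinsic, $\bar c$ is well-defined and injective on submodules with fixed $(a,b)$ by exactly the computation you give, the constraint $\pi^{e-a}c\in\langle\pi^b\rangle$ is precisely the condition that the kernel not exceed $\langle\pi^b\rangle$, and the degenerate cases $a=e$, $b=0$, $b=e$ all fall under the uniform count $q^{\min(b,e-a)}$ (e.g.\ $a=e$ forces $\bar c=0$, giving the single submodule $0\times\langle\pi^b\rangle$). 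The final reindexing $\sum_{a,b}q^{\min(b,e-a)}=\sum_{k}(2(e-k)+1)q^k=\sum_{i}(2i+1)q^{e-i}$ is right, and the formula checks against small cases ($e=1$ gives $q+3$ subspaces of $\mathbb{F}_q^2$; $e=2$, $R=\mathbb{Z}/p^2$ gives $p^2+3p+5$). What your route buys, compared with the route the paper implicitly relies on (the explicit list of nine generator-matrix types in Lemma \ref{le3.4}, whose parameter counts one would otherwise sum), is a structural two-invariant classification that makes the closed form drop out of a symmetric double sum rather than a case-by-case tally; the cost is that it does not by itself produce the normal forms of Lemma \ref{le3.4} that the rest of Section 3 actually uses.
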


\begin{lemma}\label{le3.4}
 (cf. \cite{s6} Example 2.5) Every linear code $S$ over
$\mathcal{K}_j$ of length $2$ has one and only one of the following matrices $G$ as their generator matrices:

\vskip 2mm \par
(i) $G=(1,a(x))$, $a(x)\in \mathcal{K}_j$.

\vskip 2mm \par
(ii) $G=(f_j(x)^k,f_j(x)^ka(x))$, $a(x)\in \mathcal{K}_j/\langle f_j(x)^{p^s-k}\rangle$ and $1\leq k\leq p^s-1$.

\vskip 2mm \par
(iii) $G=(f_j(x) b(x),1)$, $b(x)\in \mathcal{K}_j/\langle f_j(x)^{p^s-1}\rangle$.

\vskip 2mm \par
(iv) $G=(f_j(x)^{k+1}b(x),f_j(x)^k)$, $b(x)\in \mathcal{K}_j/\langle f_j(x)^{p^s-k-1}\rangle$ and $1\leq k\leq p^s-1$.

\vskip 2mm \par
(v) $G=\left(\begin{array}{cc}f_j(x)^k & 0\cr
0 &f_j(x)^{k}\end{array}\right)$, $0\leq k\leq p^s$.

\vskip 2mm \par
(vi) $G=\left(\begin{array}{cc}1 & c(x)\cr
0 &f_j(x)^{t}\end{array}\right)$,  $c(x)\in \mathcal{K}_j/\langle f_j(x)^t\rangle$ and $1\leq t\leq p^s-1$.

\vskip 2mm \par
(vii) $G=\left(\begin{array}{cc}f_j(x)^k & f_j(x)^kc(x)\cr
0 &f_j(x)^{k+t}\end{array}\right)$, $c(x)\in \mathcal{K}_j/\langle f_j(x)^t\rangle$, $1\leq t\leq p^s-k-1$
and $1\leq k\leq p^s-2$.

\vskip 2mm \par
(viii) $G=\left(\begin{array}{cc}c(x) & 1\cr
f_j(x)^{t} & 0\end{array}\right)$,  $c(x)\in f_j(x)(\mathcal{K}_j/\langle f_j(x)^t\rangle)$ and $1\leq t\leq p^s-1$.

\vskip 2mm \par
(ix) $G=\left(\begin{array}{cc}f_j(x)^kc(x) & f_j(x)^k\cr
f_j(x)^{k+t} & 0\end{array}\right)$, $c(x)\in f_j(x)(\mathcal{K}_j/\langle f_j(x)^t\rangle)$, $1\leq t\leq p^s-k-1$
and $1\leq k\leq p^s-2$.
\end{lemma}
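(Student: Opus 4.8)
The plan is to classify directly all $\mathcal{K}_j$-submodules $S$ of $\mathcal{K}_j^2$, exploiting that $\mathcal{K}_j$ is a finite chain ring with uniformizer $f_j(x)$ and nilpotency index $p^s$ (Lemma \ref{le3.1}). Since $\mathcal{K}_j$ is local with maximal ideal $\langle f_j(x)\rangle$, Nakayama's lemma gives that the minimal number of generators of $S$ equals $\mu := \dim_{\mathcal{K}_j/\langle f_j(x)\rangle}(S/f_j(x)S)\in\{0,1,2\}$, and the whole argument splits according to $\mu$. In each case I reduce a generating set to a canonical matrix using only $\mathcal{K}_j$-row operations (left multiplication by $\mathrm{GL}_2(\mathcal{K}_j)$, together with rechoosing generators); column operations are forbidden, since the two columns are fixed coordinates of the code. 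The key computational tool is the $f_j(x)$-degree $\|\cdot\|_{f_j(x)}$ of Lemma \ref{le3.1}(ii): an element is a unit iff its degree is $0$, and $f_j(x)^k\xi=f_j(x)^k\xi'$ holds iff $\xi\equiv\xi'\ (\mathrm{mod}\ f_j(x)^{p^s-k})$, which is precisely Lemma \ref{le3.1}(iv),(v) and pins down the quotient modules in which the free parameters $a(x),b(x),c(x)$ must live.

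For $\mu=0$ the code is $\{0\}$, which is form (v) with $k=p^s$. For $\mu=1$, write $S=\langle(\alpha,\beta)\rangle$ and set $a=\|\alpha\|_{f_j(x)}$, $b=\|\beta\|_{f_j(x)}$. Normalizing the coordinate of least degree by its unit factor yields form (i) when $a=0$; form (iii) when $b=0<a$; form (ii) when $0<a\le b$ (factoring out $f_j(x)^k$ with $k=a$); and form (iv) when $0<b<a$. The admissible range of the exponent and the membership of $a(x),b(x)$ then follow from the annihilator identity above.

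For $\mu=2$ a minimal generating set is a $2\times 2$ matrix over $\mathcal{K}_j$, which I row-reduce to an echelon form for the $f_j(x)$-degree, obtaining one pivot per row. Two shapes arise. An upper-triangular shape, with pivots in positions $(1,1),(2,2)$, gives form (v) (off-diagonal entry zero and equal diagonal degrees), form (vi) (unit pivot at $(1,1)$), and form (vii) (pivot $f_j(x)^k$, $k\ge1$, at $(1,1)$). An anti-triangular shape, in which the top row's pivot is forced into column $2$ and the bottom row's into column $1$, gives forms (viii) (unit top pivot) and (ix) (pivot $f_j(x)^k$, $k\ge1$); this shape occurs exactly when column $1$ carries no entry of the globally minimal degree, so no row operation can move a low-degree element into the $(1,1)$ slot. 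The extra factor in $c(x)\in f_j(x)(\mathcal{K}_j/\langle f_j(x)^t\rangle)$ for (viii),(ix), versus $c(x)\in\mathcal{K}_j/\langle f_j(x)^t\rangle$ for (vi),(vii), is forced by disjointness: if $c(x)$ were a unit, the top row would carry a unit in column $1$ and the code would already be of upper-triangular type.

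The step I expect to be hardest is uniqueness: that the nine families are pairwise disjoint and that each parameter ranges over exactly the stated set with no repetition. I would handle this by attaching to each form a list of invariants computable from Lemma \ref{le3.1} --- the cardinality $|S|$, the minimal generator number $\mu$, the ``column'' data $S\cap(\{0\}\times\mathcal{K}_j)$ and the images of $S$ under the two coordinate projections, and the reduction $S\ (\mathrm{mod}\ f_j(x)\mathcal{K}_j^2)$ --- which together separate the shapes (i)--(ix) and fix the parameters. This shows the list is genuinely a partition of the set of all linear codes of length $2$ over $\mathcal{K}_j$; as a consistency check, the resulting total count must agree with the enumeration in the preceding lemma. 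The existence half (row reduction) is routine once the $f_j(x)$-degree bookkeeping is in place; the care lies entirely in the case boundaries and the exact parameter ranges.
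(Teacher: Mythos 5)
Your proof is correct in outline, but there is nothing in the paper to compare it against: the paper gives no argument for this lemma and imports it wholesale from \cite{s6}, Example 2.5, which in turn rests on the generator-matrix theory for linear codes over finite chain rings from \cite{s18}. What you propose is essentially a self-contained reconstruction of that machinery specialized to length $2$: Nakayama to stratify by $\mu=\dim_{\mathcal{K}_j/\langle f_j(x)\rangle}(S/f_j(x)S)\in\{0,1,2\}$, row reduction governed by the $f_j(x)$-adic valuation with no column operations, and, for $\mu=2$, the dichotomy triangular versus anti-triangular according to whether the first-coordinate projection attains the globally minimal valuation --- which is exactly what forces $c(x)$ into $f_j(x)(\mathcal{K}_j/\langle f_j(x)^t\rangle)$ in cases (viii)--(ix). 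Your split of the cyclic case is a genuine partition (with $k_i$ the valuation of the $i$-th coordinate of a generator: (i) is $k_1=0$, (ii) is $1\le k_1\le k_2$, (iii) is $k_2=0<k_1$, (iv) is $1\le k_2<k_1$), and the fact that a generator of a cyclic submodule of $\mathcal{K}_j^2$ is unique up to a unit pins each parameter down modulo exactly the stated power of $f_j(x)$. The one step you leave unexecuted is the disjointness and uniqueness bookkeeping for $\mu=2$, but the invariants you name ($p_1(S)$, $p_2(S)$ and $S\cap(\{0\}\times\mathcal{K}_j)$) do separate (v)--(ix) and fix $k$, $t$ and $c(x)$ modulo $f_j(x)^t$, and the resulting census agrees with the enumeration $\sum_{i=0}^{p^s}(2i+1)p^{m(p^s-i)d_j}$ of the preceding lemma (it gives $q+3$ for $p^s=1$ and $q^2+3q+5$ for $p^s=2$, with $q=p^{md_j}$). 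In short, your route buys a self-contained proof; the paper buys brevity by citation.
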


   For any vector $(\xi_0,\xi_1)\in \mathcal{K}_j^2$, we define the \textit{$f_j(x)$-degree} of $(\xi_0,\xi_1)$
by
$$\|(\xi_0,\xi_1)\|_{f_j(x)}={\rm min}\{\|\xi_0\|_{f_j(x)},\|\xi_1\|_{f_j(x)}\}.$$

\begin{lemma}\label{le3.5}(cf. \cite{s18} Proposition 3.2 and Theorem 3.5) Let $S$ be a nonzero linear code over $\mathcal{K}_j$ of length $2$, and $G$ be a generator matrix of $S$ with row vectors $G_1,\ldots,G_\rho$ satisfying
$\|G_j\|_{f_j(x)}=t_j$, where $0\leq t_1\leq\ldots\leq t_\rho\leq p^s-1$.

\vskip 2mm \par
   (i) Every codeword in $S$ can be uniquely expressed as $\sum_{j=1}^\rho b_j(x)G_j$ with $b_j(x)\in \mathcal{K}_j/\langle f_j(x)^{p^s-t_j}\rangle$ for all $j=1,\ldots,\rho$.

\vskip 2mm \par
   (ii) The number of codewords in $S$ is equal to
\begin{center}
   $|S|=|\mathcal{K}_j/\langle f_j(x)\rangle|^{\sum_{j=1}^\rho(p^s-t_j)}=|\mathcal{T}_j|^{\sum_{j=1}^\rho(p^s-t_j)}
=p^{md_j\sum_{j=1}^\rho(p^s-t_j)}$.
\end{center}
\end{lemma}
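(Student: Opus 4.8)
The plan is to derive both parts from the finite-chain-ring structure of $\mathcal{K}_j$ established in Lemma~\ref{le3.1}, noting first that (ii) is a formal consequence of (i): once every codeword is written uniquely as $\sum_{i=1}^{\rho} b_i(x) G_i$ with $b_i(x)$ ranging freely and independently over $\mathcal{K}_j/\langle f_j(x)^{p^s-t_i}\rangle$, the number of codewords is the product $\prod_{i=1}^{\rho}|\mathcal{K}_j/\langle f_j(x)^{p^s-t_i}\rangle| = \prod_{i=1}^{\rho}p^{md_j(p^s-t_i)} = p^{md_j\sum_{i=1}^{\rho}(p^s-t_i)}$ by Lemma~\ref{le3.1}(iv). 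So all of the work is in (i), which I would split into legitimizing the coefficient ranges, proving existence, and proving uniqueness.

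For the coefficient ranges and existence, the key structural observation is that a row of $f_j(x)$-degree $t_i$ factors as $G_i = f_j(x)^{t_i}w_i$ with $w_i\in\mathcal{K}_j^2$ having at least one unit coordinate. Indeed, writing $G_i=(\xi_0,\xi_1)$, the coordinate attaining the minimum degree $t_i$ equals $f_j(x)^{t_i}$ times a unit by the $f_j(x)$-adic expansion of Lemma~\ref{le3.1}(ii), while the other coordinate has degree $\ge t_i$ and is hence divisible by $f_j(x)^{t_i}$; dividing out $f_j(x)^{t_i}$ leaves a unit in the first position. Since $f_j(x)^{p^s}=0$ in $\mathcal{K}_j$ by Lemma~\ref{le3.1}(i), this yields $f_j(x)^{p^s-t_i}G_i=0$, so $b_i(x)G_i$ depends only on $b_i(x)$ modulo $\langle f_j(x)^{p^s-t_i}\rangle$; this justifies restricting $b_i(x)$ to $\mathcal{K}_j/\langle f_j(x)^{p^s-t_i}\rangle$. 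Existence of at least one such representation is then immediate from the definition of a generator matrix (every codeword is a $\mathcal{K}_j$-linear combination of the rows), after reducing each coefficient modulo $f_j(x)^{p^s-t_i}$.

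The main difficulty is uniqueness, equivalently injectivity of the map $\Theta\colon\prod_{i=1}^{\rho}\mathcal{K}_j/\langle f_j(x)^{p^s-t_i}\rangle\to S$ sending $(b_i(x))_i$ to $\sum_{i=1}^{\rho}b_i(x)G_i$. Here mere minimality of the generating set is \emph{not} sufficient; one must exploit that the generator matrices enumerated in Lemma~\ref{le3.4} are in the standard (pivot) form of \cite{s18}, meaning each row $G_i$ carries a distinguished coordinate in which it contributes $f_j(x)^{t_i}\cdot(\text{unit})$ while, after ordering the rows by increasing degree $t_1\le\cdots\le t_\rho$, the remaining rows contribute terms of strictly larger $f_j(x)$-degree (or zero) in that position. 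Assuming $\sum_i b_i(x)G_i=0$ and reading this relation off at the pivot positions in order of increasing $t_i$, each step isolates one coefficient and reduces to the chain-ring annihilator fact $a(x)f_j(x)^{t}=0\iff a(x)\in\langle f_j(x)^{p^s-t}\rangle$, a direct consequence of the unique $f_j(x)$-adic expansion and the ideal lattice of Lemma~\ref{le3.1}(ii)--(iii); this forces successively $b_i(x)\in\langle f_j(x)^{p^s-t_i}\rangle$, i.e. $b_i(x)=0$ in the quotient. I expect this pivot bookkeeping to be the genuine obstacle, precisely because uniqueness (and hence the clean count in (ii)) relies on the standard-form structure rather than on minimality alone. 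The most economical route is therefore to verify the pivot property for each of the nine shapes listed in Lemma~\ref{le3.4} — in every case a single coordinate carries a leading $f_j(x)^{t_i}\cdot(\text{unit})$ that pins down the corresponding coefficient — and then run the elimination, or simply to invoke \cite{s18} Proposition~3.2 and Theorem~3.5 directly.
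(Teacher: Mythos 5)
Your proposal is essentially correct, but there is no proof in the paper to measure it against: Lemma \ref{le3.5} is imported from \cite{s18} with only a citation, so you are supplying an argument the authors omit. Your reduction of (ii) to (i), the factorization $G_i=f_j(x)^{t_i}w_i$ with $w_i$ having a unit coordinate (which both legitimizes the coefficient ranges and gives existence), and the pivot--by--pivot elimination for uniqueness all go through for each of the shapes in Lemma \ref{le3.4}. The most valuable observation in your write-up is the one you flag yourself: minimality of the generating set alone does \emph{not} give the unique representation, and in fact the lemma is literally false for an arbitrary generator matrix in the sense of the paper's definition. For instance, with $\pi=f_j(x)$ and $p^s\geq 3$, the rows $(1,0)$ and $(\pi,\pi^2)$ form a minimal generating set of $S=\mathcal{K}_j\times \pi^2\mathcal{K}_j$ with degree profile $(0,1)$, yet $|S|=p^{md_j(2p^s-2)}$ rather than the $p^{md_j(2p^s-1)}$ the formula would predict; so the statement must be read as applying to the standard-form matrices of Lemma \ref{le3.4}, which is the only way the paper uses it. One small imprecision in your elimination: it is not true that at the pivot of a later row the earlier rows contribute entries of strictly larger $f_j(x)$-degree (in shape (vi) the entry $c(x)$ of the first row sits over the pivot $f_j(x)^t$ of the second and may be a unit); what saves the argument is the sequential order you already prescribe, since by the time you reach the pivot of $G_l$ you have shown $b_i(x)G_i=0$ for all $i<l$, so those terms vanish from the relation before $b_l(x)$ is read off. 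With that reading your proof is complete.
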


   The following lemma will be needed in the proof of Lemma \ref{le3.7}.

\begin{lemma} \label{le3.6} (cf. \cite{s19} Lemma 2.2 and Corollary 2.3]) Using the notations above, we have the following conclusions.

\vskip 2mm \par
   (i) For any $\xi\in \mathcal{K}_j$ with $\xi\neq 0$, there is a unique integer $k$, $0\leq k\leq p^s-1$, such
that $\xi=f_j(x)^kc(x)$ for some $c(x)\in \mathcal{K}_j^{\times}$. In this case, $\|\mathcal{K}_j\|_{f_j(x)}=k$ and $c(x)$ is unique
modulo $f_j(x)^{p^s-k}$, i.e. $c(x)\in (\mathcal{K}_j/\langle f_j(x)^{p^s-k}\rangle)^{\times}$.

\vskip 2mm \par
   (ii) Let $1\leq t\leq l\leq p^s$ and $\xi\in \mathcal{K}_j$.
Then $f_j(x)^t\xi\in f_j(x)^l \mathcal{K}_j$ if and only if $\|\xi\|_{f_j(x)}\geq l-t$, i.e. $\xi\in f_j(x)^{l-t}\mathcal{K}_j$.
\end{lemma}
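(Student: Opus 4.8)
The plan is to derive both parts directly from the finite chain ring structure of $\mathcal{K}_j$ recorded in Lemma \ref{le3.1}, in particular from the uniqueness of the $f_j(x)$-adic expansion and from the fact that $\langle f_j(x)\rangle$ is the unique maximal ideal with $f_j(x)$ of nilpotency index $p^s$. No new machinery is needed; the whole statement is essentially a repackaging of the $f_j(x)$-degree and the description of ideals as $\langle f_j(x)^l\rangle$.

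For part (i), given $\xi\neq 0$ I would set $k=\|\xi\|_{f_j(x)}$, which by Lemma \ref{le3.1}(ii) is a well-defined integer with $0\leq k\leq p^s-1$, and write the $f_j(x)$-adic expansion $\xi=\sum_{i=k}^{p^s-1}b_i(x)f_j(x)^i$ with $b_k(x)\neq 0$. Factoring out $f_j(x)^k$ gives $\xi=f_j(x)^kc(x)$ with $c(x)=\sum_{i=0}^{p^s-1-k}b_{k+i}(x)f_j(x)^i$, and since the lowest-order coefficient $b_k(x)$ of $c(x)$ is nonzero, Lemma \ref{le3.1}(ii) tells me $\|c(x)\|_{f_j(x)}=0$, i.e. $c(x)\in\mathcal{K}_j^{\times}$. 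Uniqueness of $k$ is just uniqueness of the $f_j(x)$-degree. For the uniqueness of $c(x)$ modulo $f_j(x)^{p^s-k}$, I would suppose $f_j(x)^kc(x)=f_j(x)^kc'(x)$ in $\mathcal{K}_j$; passing to representatives in $\mathbb{F}_{p^m}[x]$ this says $f_j(x)^{p^s}\mid f_j(x)^k(c(x)-c'(x))$, whence $f_j(x)^{p^s-k}\mid(c(x)-c'(x))$ because $f_j(x)$ is irreducible, giving $c(x)\equiv c'(x)\ (\mathrm{mod}\ f_j(x)^{p^s-k})$. Combined with $c(x)\in\mathcal{K}_j^{\times}$ this places $c(x)$ in $(\mathcal{K}_j/\langle f_j(x)^{p^s-k}\rangle)^{\times}$.

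For part (ii), I would reduce to a statement about pure powers of $f_j(x)$ using part (i). If $\xi=0$ both conditions hold trivially since $l\leq p^s$, so assume $\xi\neq 0$ and write $\xi=f_j(x)^kc(x)$ with $k=\|\xi\|_{f_j(x)}$ and $c(x)$ a unit. Then $f_j(x)^t\xi=f_j(x)^{t+k}c(x)$, and because $c(x)$ is invertible this lies in the ideal $f_j(x)^l\mathcal{K}_j=\langle f_j(x)^l\rangle$ if and only if $f_j(x)^{t+k}$ does. By Lemma \ref{le3.1}(ii),(iii) an element lies in $\langle f_j(x)^l\rangle$ exactly when its $f_j(x)$-degree is at least $l$, and $\|f_j(x)^{t+k}\|_{f_j(x)}=\min\{t+k,p^s\}$; hence membership is equivalent to $t+k\geq l$, i.e. $\|\xi\|_{f_j(x)}=k\geq l-t$. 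The same ideal characterization rewrites $\|\xi\|_{f_j(x)}\geq l-t$ as $\xi\in f_j(x)^{l-t}\mathcal{K}_j$.

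The routine bookkeeping aside, the one point demanding care is the interaction of the exponents with the nilpotency index $p^s$ in part (ii): when $t+k\geq p^s$ the element $f_j(x)^{t+k}$ is already $0$ and so lies in every ideal, and I must check this is consistent with $k\geq l-t$, which indeed holds since then $k\geq p^s-t\geq l-t$. Handling this boundary case, together with justifying the step "$f_j(x)^kc(x)=f_j(x)^kc'(x)$ implies $f_j(x)^{p^s-k}\mid(c(x)-c'(x))$" in part (i) through the nilpotency index rather than naive cancellation, is where the argument genuinely invokes the chain ring hypotheses and is the only place an error could slip in.
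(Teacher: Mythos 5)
Your argument is correct, but it is worth noting that the paper does not prove this lemma at all: it is imported verbatim from Norton--S\u{a}l\u{a}gean (\cite{s19}, Lemma 2.2 and Corollary 2.3) with no argument given. What you have written is therefore a self-contained derivation from Lemma \ref{le3.1} rather than a reconstruction of anything in the text. The derivation is sound: in (i) you correctly identify $k$ with the $f_j(x)$-degree, extract the unit $c(x)$ from the tail of the $f_j(x)$-adic expansion via the criterion $\|c(x)\|_{f_j(x)}=0 \Leftrightarrow c(x)\in\mathcal{K}_j^{\times}$, and — crucially — you prove the congruence $c(x)\equiv c'(x)\ (\mathrm{mod}\ f_j(x)^{p^s-k})$ by lifting to $\mathbb{F}_{p^m}[x]$ and using irreducibility of $f_j(x)$ there, rather than attempting to cancel $f_j(x)^k$ inside the quotient ring where it is a zero divisor. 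In (ii) the reduction to pure powers via the unit $c(x)$, the identification of $\langle f_j(x)^l\rangle$ with the set of elements of $f_j(x)$-degree at least $l$ (which follows from Lemma \ref{le3.1}(ii)--(iii)), and the explicit check of the boundary case $t+k\geq p^s$ against the constraint $l\leq p^s$ are all handled correctly. The only cosmetic remark is that the displayed statement's ``$\|\mathcal{K}_j\|_{f_j(x)}=k$'' is a typo for ``$\|\xi\|_{f_j(x)}=k$'', which your proof implicitly and correctly reads in the intended way.
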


   Then we determine linear codes over $\mathcal{K}_j$ of length $2$, i.e. $\mathcal{K}_j$-submodules of
$\mathcal{K}_j^2$, satisfying Condition (\ref{eq5}) in Lemma \ref{le3.2}.
For any rational number $h$, let $\lceil h\rceil$ be the lest integer greater than or equal to $h$, i.e.
$\lceil h\rceil={\rm min}\{l\in \mathbb{Z} \mid l\geq h\}$. For example, we have $\lceil -\frac{1}{2}\rceil=0$, $\lceil \frac{1}{2}\rceil=1$ and $\lceil \frac{5}{2}\rceil=3$.

\begin{lemma}\label{le3.7}
Using the notations above, every linear code $S$ over
$\mathcal{K}_j$ of length $2$ satisfying Condition (\ref{eq5}) in Lemma \ref{le3.2}
has one and only one of the following matrices $G$ as their generator matrices:

\vskip 2mm \par
   (I) $G=(f_j(x) b(x),1)$,
$b(x)\in f_j(x)^{\lceil \frac{p^s}{2}\rceil-1}(\mathcal{K}_j/\langle f_j(x)^{p^s-1}\rangle)$.

\vskip 2mm \par
   (II) $G=(f_j(x)^{k+1}b(x),f_j(x)^k)$, $b(x)\in f_j(x)^{\lceil\frac{p^s-k}{2}\rceil-1}(\mathcal{K}_j/\langle f_j(x)^{p^s-k-1}\rangle)$ and $1\leq k\leq p^s-1$.

\vskip 2mm \par
   (III) $G=\left(\begin{array}{cc}f_j(x)^k & 0\cr
0 &f_j(x)^{k}\end{array}\right)$, $0\leq k\leq p^s$.

\vskip 2mm \par
   (IV) $G=\left(\begin{array}{cc}c(x) & 1\cr
f_j(x)^{t} & 0\end{array}\right)$, $c(x)\in f_j(x)^{\lceil\frac{t}{2}\rceil}(\mathcal{K}_j/\langle f_j(x)^t\rangle)$
and $1\leq t\leq p^s-1$.

\vskip 2mm \par
   (V) $G=\left(\begin{array}{cc}f_j(x)^kc(x) & f_j(x)^k\cr
f_j(x)^{k+t} & 0\end{array}\right)$, $c(x)\in f_j(x)^{\lceil\frac{t}{2}\rceil}(\mathcal{K}_j/\langle f_j(x)^t\rangle)$, $1\leq t\leq p^s-k-1$
and $1\leq k\leq p^s-2$.
\end{lemma}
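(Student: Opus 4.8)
The plan is to exploit the complete and mutually exclusive list of length-$2$ linear codes over $\mathcal{K}_j$ supplied by Lemma \ref{le3.4}: since every such code $S$ has exactly one of the nine generator matrices (i)--(ix), it suffices to intersect each family with Condition (\ref{eq5}) and record the surviving parameters. The key reduction I would establish first is that Condition (\ref{eq5}) is equivalent to $\sigma(S)\subseteq S$ for the $\mathcal{K}_j$-linear shift map $\sigma:\mathcal{K}_j^2\to\mathcal{K}_j^2$, $\sigma(a_0,a_1)=(0,a_0)$. As $\sigma$ is additive and satisfies $\sigma(b\,(a_0,a_1))=b\,\sigma(a_0,a_1)$, and $S$ is spanned over $\mathcal{K}_j$ by the rows $G_1,\dots,G_\rho$ of $G$, this is in turn equivalent to the finite set of membership tests $\sigma(G_i)\in S$ for every row $G_i$. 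Each such test amounts to solving a $\mathcal{K}_j$-linear equation $\sigma(G_i)=\sum_l b_l G_l$, which I would handle through the unique-expression description of $S$ in Lemma \ref{le3.5}(i) and the divisibility criterion of Lemma \ref{le3.6}(ii).

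Next I would eliminate the four families (i), (ii), (vi), (vii) whose first column is pivoted by a power $f_j(x)^k$ (with $k=0$ in (i), (vi)). For these, the set $\{a_1\mid(0,a_1)\in S\}$ is either $\{0\}$ or $\langle f_j(x)^{k+t}\rangle$, whereas applying $\sigma$ to the row with first entry $f_j(x)^k$ forces $f_j(x)^k$ (up to a unit) to belong to that set; this is impossible because $t\geq1$ forces $f_j(x)^k\notin\langle f_j(x)^{k+t}\rangle$ and $f_j(x)^k\neq0$. Hence none of (i), (ii), (vi), (vii) can satisfy Condition (\ref{eq5}).

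It remains to treat the five families (iii), (iv), (v), (viii), (ix), producing (I)--(V). The diagonal form (v) has $S=f_j(x)^k\mathcal{K}_j\times f_j(x)^k\mathcal{K}_j$, which is $\sigma$-closed for every $k$, so it survives unrestricted and gives (III). For the principal codes (iii) and (iv), testing $\sigma$ on the single generator $(f_j(x)^{k+1}b(x),f_j(x)^k)$ (with $k=0$ for (iii)) forces the scalar $c=f_j(x)b(x)$ up to the annihilator $\langle f_j(x)^{p^s-k}\rangle$; substituting and observing that the annihilator cross-term is killed, the test collapses to the single condition $f_j(x)^{k+2}b(x)^2=0$. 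For the two-generator codes (viii) and (ix), I would check that $\sigma$ of the second row lies in $S$ automatically, while $\sigma$ of the first row reduces to $c(x)^2\in\langle f_j(x)^t\rangle$. Writing the free parameter as $b(x)=f_j(x)^\beta b_0(x)$ or $c(x)=f_j(x)^\alpha c_0(x)$ with $b_0(x),c_0(x)$ units and translating each ``squared'' condition into its $f_j(x)$-valuation yields $\beta\geq\lceil\frac{p^s}{2}\rceil-1$, $\beta\geq\lceil\frac{p^s-k}{2}\rceil-1$, and $\alpha\geq\lceil\frac{t}{2}\rceil$, which are precisely the constraints in (I), (II), (IV) and (V). The ``one and only one'' assertion is then inherited from Lemma \ref{le3.4}, since I am only restricting parameter ranges inside its pairwise-disjoint forms.

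The step I expect to be the main obstacle is the membership analysis for (viii) and (ix): solving $\sigma(G_1)=b_0G_1+b_1G_2$ over the chain ring $\mathcal{K}_j$ requires careful tracking of the annihilator ambiguity $\langle f_j(x)^{p^s-k}\rangle$ introduced by multiplying through by $f_j(x)^k$, a verification that the resulting cross-terms vanish, and---most importantly---a proof that the ceiling bounds are \emph{sharp}, i.e. that lowering the valuation by one already exhibits a codeword $(a_0,a_1)\in S$ with $(0,a_0)\notin S$. Pinning down the ceiling function exactly, together with the boundary indices $k,t$ at the ends of their admissible ranges and the degenerate choice $b(x)=0$, is where the argument has to be carried out most carefully.
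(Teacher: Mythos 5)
Your proposal is correct and follows essentially the same route as the paper: run through the nine mutually exclusive generator-matrix forms of Lemma \ref{le3.4}, use the $\mathcal{K}_j$-linearity of the map $(a_0,a_1)\mapsto(0,a_0)$ to reduce Condition (\ref{eq5}) to membership tests on the rows of $G$, discard cases (i), (ii), (vi), (vii), and convert the surviving conditions $f_j(x)^{k+2}b(x)^2=0$ and $c(x)^2\in\langle f_j(x)^t\rangle$ into the ceiling-valuation bounds via Lemma \ref{le3.6}(ii). Your explicit reduction to $\sigma(G_i)\in S$ for each row is a slightly cleaner packaging of what the paper does case by case, but the substance is identical.
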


\begin{proof} For notations simplicity, we denote $\pi=f_j(x)$
and use lowercase letters, say $a,b,c,\ldots$, to denote elements of
the finite chain ring $\mathcal{K}_j$.
   By Lemma \ref{le3.4} we only need to consider the following nine cases.

\vskip 2mm\par
   (i) $G=(1,a)$, where $a\in \mathcal{K}_j$. Suppose that $S$ satisfies Condition (\ref{eq5}). Then
$(0,1)\in S$. Since $G$ is the generator matrix of $S$, there exists $b\in \mathcal{K}_j$ such that $(0,1)=b(1,a)=(b,ba)$,
i.e. $0=b$ and $1=ba$, which is impossible. Hence $S$ does not satisfy Condition (\ref{eq5}) in this case.

\vskip 2mm\par
   (ii) $G=(\pi^k,\pi^ka)$, where $a\in \mathcal{K}_j/\langle \pi^{p^s-k}\rangle$ and $1\leq k\leq p^s-1$.
Suppose that $S$ satisfies Condition (\ref{eq5}). Then
$(0,\pi^k)\in S$. So there exists $b\in \mathcal{K}_j$ such that $(0,\pi^k)=b(\pi^k,\pi^ka)=(\pi^kb,\pi^kab)$,
which implies $0=\pi^kb$ and $\pi^k=\pi^kba$. Hence $\pi^k=0$. But $1\leq k\leq p^s-1$, we get a contradiction. Hence $S$ does not satisfy Condition (\ref{eq5}) in this case.

\vskip 2mm\par
   (iii) $G=(\pi b,1)$, where $b\in \mathcal{K}_j/\langle \pi^{p^s-1}\rangle$. Then $S$ satisfies Condition (\ref{eq5}) if and only if
there exists $a\in \mathcal{K}_j$ such that $(0,\pi b)=a(\pi b,1)=(\pi ba,a)$, i.e. $0=\pi ba$ and $\pi b=a$, which are equivalent to
that $b$ satisfies $\pi^2b^2=0$. By Lemma \ref{le3.6}(ii) we see that $b\in \mathcal{K}_j/\langle \pi^{p^s-1}\rangle$ satisfying  $\pi^2b^2=0$ if and only if $b^2\in \pi^{p^s-2}A$, i.e., $2\|b\|_\pi=\|b^2\|_\pi\geq p^s-2$, and hence $\|b\|_\pi\geq \lceil \frac{1}{2}(p^s-2)\rceil
=\lceil \frac{p^s}{2}\rceil-1$.
Therefore, $b\in \pi^{\lceil \frac{p^s}{2}\rceil-1} \mathcal{K}_j\cap (\mathcal{K}_j/\langle \pi^{p^s-1}\rangle)=\pi^{\lceil \frac{p^s}{2}\rceil-1}(\mathcal{K}_j/\langle \pi^{p^s-1}\rangle)$.

\vskip 2mm\par
  (iv) $G=(\pi^{k+1} b,\pi^k)$, where $b\in \mathcal{K}_j/\langle \pi^{p^s-k-1}\rangle$ and $1\leq k\leq p^s-1$. Then $S$ satisfies Condition (\ref{eq5}) if and only if
there exists $a\in \mathcal{K}_j$ such that $(0,\pi^{k+1} b)=a(\pi^{k+1} b,\pi^k)=(\pi^{k+1} ab,\pi^ka)$, i.e., $0=\pi^{k+1} ab$ and $\pi^{k+1} b=\pi^ka$, which are equivalent to
that $b$ satisfies $\pi^{k+2}b^2=0$. Then by
$3\leq k+2\leq p^s+1$, we have one of the following two subcases:

\par
   (iv-1) When $k+2\geq p^s$, i.e., $k=p^s-2$ or $p^s-1$, then $\pi^{k+2}=0$ and hence $\pi^{k+2}b^2=0$ for every $b\in \mathcal{K}_j/\langle \pi^{p^s-k-1}\rangle$.

\par
   (iv-2) When $k+2\leq p^s-1$, i.e., $k\leq p^s-3$ (and $p^s\geq 4$), then $b\in \mathcal{K}_j/\langle \pi^{p^s-k-1}\rangle$ satisfying  $\pi^{k+2}b^2=0$ if and only if $b^2\in \pi^{p^s-k-2} \mathcal{K}_j$.
From this and by Lemma \ref{le3.6}(ii), we deduce that $\|b\|_\pi\geq\lceil\frac{1}{2}(p^s-k-2)\rceil$. So
 $b\in \pi^{\lceil\frac{1}{2}(p^s-k-2)\rceil}(\mathcal{K}_j/\langle \pi^{p^s-k-1}\rangle)$.

\vskip 2mm \par
   (v) $G=\left(\begin{array}{cc}\pi^k & 0\cr
0 &\pi^{k}\end{array}\right)$, where $0\leq k\leq p^s$. It is clear that $S$ satisfies Condition (\ref{eq5})
for any $k$.

\vskip 2mm \par
   (vi) $G=\left(\begin{array}{cc}1 & c\cr
0 &\pi^{t}\end{array}\right)$, where $c\in \mathcal{K}_j/\langle \pi^t\rangle$ and $1\leq t\leq p^s-1$. Suppose
that $S$ satisfies Condition (\ref{eq5}). Then there exist $a,b\in \mathcal{K}_j$ such that $(0,1)=a(1,c)+b(0,\pi^t)=(a,ac+\pi^tb)$,
i.e. $0=a$ and $1=ac+\pi^tb$, which imply $1=\pi^tb$, and we get a contradiction.
Hence $S$ does not satisfy Condition (\ref{eq5}) in this case.

\vskip 2mm \par
   (vii) $G=\left(\begin{array}{cc}\pi^k & \pi^kc\cr
0 &\pi^{k+t}\end{array}\right)$, where $c\in \mathcal{K}_j/\langle \pi^t\rangle$, $1\leq t\leq p^s-k-1$
and $1\leq k\leq p^s-2$. Suppose
that $S$ satisfies Condition (\ref{eq5}). Then there exist $a,b\in \mathcal{K}_j$ such that $(0,\pi^k)=a(\pi^k,\pi^kc)+b(0,\pi^{k+t})=(\pi^ka,\pi^kac+\pi^{k+t}b)$,
i.e. $0=\pi^ka$ and $\pi^k=\pi^kac+\pi^{k+t}b$, which imply $\pi^k=\pi^{k+t}b$, and we get a contradiction.
Hence $S$ does not satisfy Condition (\ref{eq5}) in this case.

\vskip 2mm \par
   (viii) $G=\left(\begin{array}{cc}c & 1\cr
\pi^{t} & 0\end{array}\right)$, where $c\in \pi(\mathcal{K}_j/\langle \pi^t\rangle)$ and $1\leq t\leq p^s-1$. It is clear that $(0,\pi^t)=\pi^t(c,1)-c(\pi^t,0)\in S$. Then $S$ satisfies Condition (\ref{eq5}) if and only if
there exist $a,b\in \mathcal{K}_j$ such that $(0,c)=a(c,1)+b(\pi^t,0)=(ac+\pi^tb,a)$, i.e. $0=ac+\pi^tb$ and $c=a$,
which are equivalent to that $c^2=-\pi^tb\in \pi^t\mathcal{K}_j$, i.e. $2\|c\|_\pi\geq t$. Therefore,
$c\in \pi^{\lceil\frac{t}{2}\rceil}(\mathcal{K}_j/\langle \pi^t\rangle)$ by Lemma \ref{le3.6}(ii).

\vskip 2mm \par
   (ix) $G=\left(\begin{array}{cc}\pi^kc & \pi^k\cr
\pi^{k+t} & 0\end{array}\right)$, where $c\in \pi(\mathcal{K}_j/\langle \pi^t\rangle)$, $1\leq t\leq p^s-k-1$
and $1\leq k\leq p^s-2$. Obviously, $(0,\pi^{k+t})=\pi^t(\pi^kc,\pi^k)-c(\pi^{k+t},0)\in S$. Then $S$ satisfies Condition (\ref{eq5}) if and only if
there exist $a,b\in \mathcal{K}_j$ such that $(0,\pi^kc)=a(\pi^kc,\pi^k)+b(\pi^{k+t},0)=(\pi^kac+\pi^{k+t}b,\pi^ka)$, i.e. $0=\pi^kac+\pi^{k+t}b$ and $\pi^kc=\pi^ka$,
which are equivalent to that $\pi^kc^2=-\pi^{k+t}b\in \pi^{k+t}\mathcal{K}_j$.
Then by Lemma \ref{le3.6}(ii) we deduce that $c^2\in \pi^t\mathcal{K}_j$. Hence $c\in \pi^{\lceil\frac{t}{2}\rceil}(\mathcal{K}_j/\langle \pi^t\rangle)$.
\end{proof}

    Now, we can list all distinct ideals of the
ring $\mathcal{K}_j+u\mathcal{K}_j$.

\begin{theorem}\label{th3.8}
  Using the notations above, all distinct ideals $C_j$ of the
ring $\mathcal{K}_j+u\mathcal{K}_j$ $(u^2=0)$ are given by one of the following five cases:

\vskip 2mm \par
 (I) $p^{\left(p^s-\lceil\frac{p^s}{2}\rceil\right)md_j}$ ideals:
$$C_j=\left\langle f_j(x)b(x)+u\right\rangle \ {\rm with} \ |C_j|=p^{md_jp^s},$$
where $b(x)\in f_j(x)^{\lceil \frac{p^s}{2}\rceil-1}(\mathcal{K}_j/\langle f_j(x)^{p^s-1}\rangle)$.

\vskip 2mm \par
 (II) $\sum_{k=1}^{p^s-1}p^{\left(p^s-k-\lceil\frac{1}{2}(p^s-k)\rceil\right)md_j}$ ideals:
$$C_j=\left\langle f_j(x)^{k+1}b(x)+uf_j(x)^k\right\rangle \ {\rm with} \ |C_j|=p^{md_j(p^s-k)},$$
where $b(x)\in f_j(x)^{\lceil \frac{p^s-k}{2}\rceil-1}(\mathcal{K}_j/\langle f_j(x)^{p^s-k-1}\rangle)$ and $1\leq k\leq p^s-1$.

\vskip 2mm \par
 (III) $p^s+1$ ideals:
$C_j=\left\langle f_j(x)^k\right\rangle \ {\rm with} \ |C_j|=p^{2md_j(p^s-k)}, \ 0\leq k\leq p^s.$

\vskip 2mm \par
 (IV) $\sum_{t=1}^{p^s-1}p^{\left(t-\lceil\frac{t}{2}\rceil\right)md_j}$ ideals:
$$C_j=\left\langle f_j(x)b(x)+u,f_j(x)^{t}\right\rangle \ {\rm with} \ |C_j|=p^{md_j(2p^s-t)},$$
where
$b(x)\in f_j(x)^{\lceil \frac{t}{2}\rceil-1}(\mathcal{K}_j/\langle f_j(x)^{t-1}\rangle)$,
$1\leq t\leq p^s-1$.

\vskip 2mm \par
 (V) $\sum_{k=1}^{p^s-2}\sum_{t=1}^{p^s-k-1}p^{\left(t-\lceil\frac{t}{2}\rceil\right)md_j}$ ideals:
$$C_j=\left\langle f_j(x)^{k+1}b(x)+uf_j(x)^k,f_j(x)^{k+t}\right\rangle \ {\rm with} \ |C_j|=p^{md_j(2p^s-2k-t)},$$
where
$b(x)\in f_j(x)^{\lceil \frac{t}{2}\rceil-1}(\mathcal{K}_j/\langle f_j(x)^{t-1}\rangle)$,
$1\leq t\leq p^s-k-1$ and $1\leq k\leq p^s-2$.

\vskip 2mm \par
 Moreover, let $N_{(p^m,d_j,p^s)}$ be the number of ideals in $\mathcal{K}_j+u\mathcal{K}_j$. Then
$$N_{(p^m,d_j,p^{s})}=\left\{\begin{array}{ll}\sum_{i=0}^{2^{s-1}}(1+4i)2^{(2^{s-1}-i)md_j}, & {\rm if} \ p=2; \cr & \cr
                                        \sum_{i=0}^{\frac{p^s-1}{2}}(3+4i)p^{(\frac{p^s-1}{2}-i)md_j}, & {\rm if} \ p \ {\rm is} \ {\rm odd}. \end{array}\right.$$
\end{theorem}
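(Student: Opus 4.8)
The plan is to read off the five families of ideals directly from the submodule classification already in hand, then compute sizes and cardinalities, and finally assemble the enumeration formula. First I would combine Lemma \ref{le3.2} with Lemma \ref{le3.7}. By Lemma \ref{le3.2} the ideals $C_j$ of $\mathcal{K}_j+u\mathcal{K}_j$ are exactly the images $\varsigma(S)$ of the linear codes $S$ over $\mathcal{K}_j$ of length $2$ satisfying Condition (\ref{eq5}), and Lemma \ref{le3.7} lists all such $S$ by generator matrix in five cases. Since $\varsigma(a_0,a_1)=a_0+ua_1$, each row $(a_0,a_1)$ of a generator matrix becomes the generator $a_0+ua_1$ of the ideal. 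Applying this row by row turns cases (I)--(V) of Lemma \ref{le3.7} into the five families (I)--(V) of the theorem: in (I) and (II) a single row gives a principal ideal; in (III) the two rows give $\langle f_j(x)^k, uf_j(x)^k\rangle=\langle f_j(x)^k\rangle$, the second generator being redundant; and in (IV), (V) I would rewrite the entry $c(x)\in f_j(x)^{\lceil t/2\rceil}(\mathcal{K}_j/\langle f_j(x)^t\rangle)$ as $c(x)=f_j(x)b(x)$ with $b(x)$ ranging over $f_j(x)^{\lceil t/2\rceil-1}(\mathcal{K}_j/\langle f_j(x)^{t-1}\rangle)$, so that the generators take the uniform shape displayed in the statement. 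One checks that $b\mapsto f_j(x)b$ is a bijection of the relevant sets.

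Second, I would pin down $|C_j|$ and the number of ideals in each family. Because $\varsigma$ is a $\mathcal{K}_j$-module isomorphism, $|C_j|=|S|$, and Lemma \ref{le3.5}(ii) gives $|S|=p^{md_j\sum_\rho(p^s-t_\rho)}$ once the $f_j(x)$-degrees $t_\rho$ of the rows are known. Reading off these degrees ($0$ in (I); $k$ in (II); $k,k$ in (III); $0,t$ in (IV); $k,k+t$ in (V), using $\|0\|_{f_j(x)}=p^s$) yields the five exponents $p^s$, $p^s-k$, $2(p^s-k)$, $2p^s-t$, $2p^s-2k-t$ exactly as claimed. The number of ideals in each family equals the number of admissible parameters $b(x)$ (or $c(x)$), and these parameters lie in sets of the form $f_j(x)^l(\mathcal{K}_j/\langle f_j(x)^N\rangle)$, whose cardinality is $p^{md_j(N-l)}$ by Lemma \ref{le3.1}(v); substituting the relevant $l,N$ gives the five counts stated in (I)--(V).

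Finally, I would sum the five counts to obtain $N_{(p^m,d_j,p^s)}$. Writing $Q=p^{md_j}$ and using $i-\lceil i/2\rceil=\lfloor i/2\rfloor$, the exponents collapse: families (II) and (IV) both equal $\sum_{i=1}^{p^s-1}Q^{\lfloor i/2\rfloor}$, so
\[
N_{(p^m,d_j,p^s)}=Q^{\lfloor p^s/2\rfloor}+2\sum_{i=1}^{p^s-1}Q^{\lfloor i/2\rfloor}+(p^s+1)+\sum_{k=1}^{p^s-2}\sum_{t=1}^{p^s-k-1}Q^{\lfloor t/2\rfloor}.
\]
I would then swap the order of summation in the double sum, turning it into $\sum_{t=1}^{p^s-2}(p^s-t-1)Q^{\lfloor t/2\rfloor}$, and collect all terms according to the common power $Q^j$. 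The main obstacle is exactly this bookkeeping: for each fixed $j$ the indices $i,t$ contributing $Q^j$ are $\{2j,2j+1\}$, and I must track carefully which of these fall inside their summation ranges, since the top index behaves differently according to the parity of $p^s$. Because $p^s$ is even precisely when $p=2$ and odd otherwise, this parity split is what produces the two separate formulas. Carrying out the collection shows the coefficient of $Q^j$ is the arithmetic progression $2p^s+1-4j$ on $0\leq j\leq\lfloor p^s/2\rfloor$; reindexing $i=\lfloor p^s/2\rfloor-j$ converts this into $\sum_{i}(3+4i)Q^{\frac{p^s-1}{2}-i}$ for $p$ odd and $\sum_{i}(1+4i)Q^{2^{s-1}-i}$ for $p=2$, as asserted.
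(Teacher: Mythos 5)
Your proposal is correct and follows essentially the same route as the paper: it combines Lemma \ref{le3.2} with the classification in Lemma \ref{le3.7}, transports generator matrices through $\varsigma$, computes $|C_j|$ via Lemma \ref{le3.5}(ii) from the $f_j(x)$-degrees of the rows, and counts parameters via Lemma \ref{le3.1}(v). The only difference is that you carry out the final collection of terms to reach the closed form for $N_{(p^m,d_j,p^s)}$ (and your coefficient $2p^s+1-4j$ checks out in all boundary cases), whereas the paper stops at the unsimplified sum.
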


\begin{proof}
Let $C_j$ be a nontrivial ideal of $\mathcal{K}_j+u\mathcal{K}_j$. By Lemma \ref{le3.2}, there is a unique
$\mathcal{K}_j$-submodule $S_j$ of $\mathcal{K}_j^2$ satisfying Condition (\ref{eq5}):
$(0,a_0)\in S_j$ for all $(a_0,a_1)\in S_j$, such that $C_j=\varsigma(S_j)$. By
Lemma \ref{le3.7}, we see that $S_j$ has one and
only one of the following matrices $G_j$ as their generator matrices:
\vskip 2mm \par
   (i) $G_j=(f_j(x) b(x),1)$,
$b(x)\in f_j(x)^{\lceil \frac{p^s}{2}\rceil-1}(\mathcal{K}_j/\langle f_j(x)^{p^s-1}\rangle)$.

\vskip 2mm \par
   (ii) $G_j=(f_j(x)^{k+1}b(x),f_j(x)^k)$, $b(x)\in f_j(x)^{\lceil\frac{p^s-k}{2}\rceil-1}(\mathcal{K}_j/\langle f_j(x)^{p^s-k-1}\rangle)$ and $1\leq k\leq p^s-1$.

\vskip 2mm \par
   (iii) $G_j=\left(\begin{array}{cc}f_j(x)^k & 0\cr
0 &f_j(x)^{k}\end{array}\right)$, $0\leq k\leq p^s$.

\vskip 2mm \par
   (iv) $G_j=\left(\begin{array}{cc}c(x) & 1\cr
f_j(x)^{t} & 0\end{array}\right)$, $c(x)\in f_j(x)^{\lceil\frac{t}{2}\rceil}(\mathcal{K}_j/\langle f_j(x)^t\rangle)$
and $1\leq t\leq p^s-1$.

\vskip 2mm \par
   (v) $G_j=\left(\begin{array}{cc}f_j(x)^kc(x) & f_j(x)^k\cr
f_j(x)^{k+t} & 0\end{array}\right)$, $c(x)\in f_j(x)^{\lceil\frac{t}{2}\rceil}(\mathcal{K}_j/\langle f_j(x)^t\rangle)$, $1\leq t\leq p^s-k-1$
and $1\leq k\leq p^s-2$.

\vskip 2mm \par
   (I) Let $G_j$ be given in (i). By Lemma \ref{le3.2} we have
$C_j=\varsigma(S_j)=\langle \varsigma(f_j(x) b(x),1)\rangle=\langle f_j(x) b(x)+u\rangle$. As $\|(f_j(x) b(x),1)\|_{f_j(x)}=0$,
by Lemma \ref{le3.5}(ii) the number of elements in $S_j$ is equal to $|S_j|=p^{md_j(p^s-0)}=p^{md_jp^s}$. Hence
$|C_j|=|S_j|=p^{md_jp^s}$ by Lemma \ref{le3.2}.

\par
  In this case, by Lemma \ref{le3.1}(v) and Lemma \ref{le3.1} we deduce that the number of ideals is equal to
$|f_j(x)^{\lceil \frac{p^s}{2}\rceil-1}(\mathcal{K}_j/\langle f_j(x)^{p^s-1}\rangle|
=p^{md_j(p^s-\lceil \frac{p^s}{2}\rceil)}$.

\par
   Case (II) can be proved similarly as that of (I).

\par
   (III) Let $G_j$ be given in (iii). By Lemma \ref{le3.2}, it follows that
\begin{center}
$C_j=\varsigma(S_j)=\langle \varsigma(f_j(x)^k,0),\varsigma(0,f_j(x)^k)\rangle=\langle f_j(x)^k,uf_j(x)^k\rangle=\langle f_j(x)^k\rangle$.
\end{center}
 As $\|(f_j(x)^k,0)\|_{f_j(x)}=\|(0,f_j(x)^k)\|_{f_j(x)}=k$,
by Lemmas \ref{le3.2} and \ref{le3.5} we deduce that $|C_j|=|S_j|=p^{md_j((p^s-k)+(p^s-k))}=p^{2md_j(p^s-k)}$.

\par
   (IV) Let $G_j$ be given in (iv). By Lemma \ref{le3.2} we have \\
$C_j=\varsigma(S_j)=\langle \varsigma(c(x),1),\varsigma(f_j(x)^t,0)\rangle=\langle c(x)+u,f_j(x)^t\rangle$.
 As $\|(c(x),1)\|_{f_j(x)}=0$ and $\|(f_j(x)^t,0)\|_{f_j(x)}=t$,
by Lemmas \ref{le3.2} and \ref{le3.5} we deduce that $|C_j|=|S_j|=p^{md_j((p^s-0)+(p^s-t))}=p^{md_j(2p^s-t)}$.

\par
   In this case, by Lemma \ref{le3.1}(v) and Lemma \ref{le3.2} we deduce that the number of ideals is equal to
$|f_j(x)^{\lceil \frac{t}{2}\rceil}(\mathcal{K}_j/\langle f_j(x)^{t}\rangle)|=p^{md_j(t-\lceil \frac{t}{2}\rceil)}$.

\par
  Furthermore, by $\lceil \frac{t}{2}\rceil\geq 1$ we can write
$c(x)=f_j(x)b(x)$ for any \\
$c(x)\in f_j(x)^{\lceil \frac{t}{2}\rceil}(\mathcal{K}_j/\langle f_j(x)^{t}\rangle)$,  where $b(x)\in f_j(x)^{\lceil \frac{t}{2}\rceil-1}(\mathcal{K}_j/\langle f_j(x)^{t-1}\rangle)$ and $b(x)$ is uniquely
determined by $c(x)$.

\par
   Case (V) can be proved similarly as that of (IV).

\par
  As stated above, we conclude that the number of ideals of $\mathcal{K}_j+u\mathcal{K}_j$ is equal to
$N_{(p^m,d_j,p^s)}=1+p^s+\sum_{k=0}^{p^s-1}p^{\left(p^s-k-\lceil\frac{1}{2}(p^s-k)\rceil\right)md_j}+
\sum_{k=0}^{p^s-2}\sum_{t=1}^{p^s-k-1}p^{\left(t-\lceil\frac{t}{2}\rceil\right)md_j}.$

\end{proof}

   Therefore, by Theorems \ref{th2.5} and \ref{th3.8} we can determine all distinct $\lambda$-constacyclic codes over $R$ of length $np^s$.

\begin{corollary}\label{co3.9}
Using the notations above, every $\lambda$-constacyclic
code ${\mathcal C}$ over $R$ of length $np^s$ can be constructed by the following two steps:

\vskip2 mm\par
  (i) For each $j=1,\ldots,r$, choose an ideal $C_j$ of
$\mathcal{K}_j+u\mathcal{K}_j$ listed in Theorem \ref{th3.8}.

\vskip2 mm\par
  (ii) Set ${\mathcal C}=\bigoplus_{j=1}^r\varepsilon_j(x)C_j$ (mod $x^{p^sn}-\lambda$).

\vskip 2mm \noindent
The number of codewords in ${\mathcal C}$ is
equal to $|{\mathcal C}|=\prod_{j=1}^r|C_j|$.

\par
   Therefore, the number of $\lambda$-constacyclic
codes over $R$ of length $np^s$ is equal to $\prod_{j=1}^rN_{(p^m,d_j,p^s)}$.
\end{corollary}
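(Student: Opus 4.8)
The plan is to assemble the corollary directly from the two structural results already in hand, namely the direct-sum decomposition of Theorem \ref{th2.5} and the complete list of ideals of each local factor in Theorem \ref{th3.8}. First I would invoke Theorem \ref{th2.5}(iii): for any $\lambda$-constacyclic code $\mathcal{C}$ over $R$ of length $np^s$ there is, for each $j$, a \emph{unique} ideal $C_j$ of $\mathcal{K}_j+u\mathcal{K}_j$ with $\mathcal{C}=\bigoplus_{j=1}^r\varepsilon_j(x)C_j$, and conversely every such choice of ideals yields a code. Since Theorem \ref{th3.8} exhaustively lists the ideals $C_j$ together with their cardinalities, Steps (i) and (ii) of the construction are exactly a restatement of this correspondence, so that part of the claim requires only a citation of the two theorems.

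For the cardinality $|\mathcal{C}|=\prod_{j=1}^r|C_j|$ I would argue through the isomorphism $\Phi$ of Lemma \ref{le2.4} rather than manipulate the internal sum directly. Under $\Phi$ the code $\mathcal{C}$ is the image of the direct product $C_1\times\ldots\times C_r$ inside $(\mathcal{K}_1+u\mathcal{K}_1)\times\ldots\times(\mathcal{K}_r+u\mathcal{K}_r)$; because $\Phi$ is a ring isomorphism it is in particular a bijection, whence $|\mathcal{C}|=|C_1\times\ldots\times C_r|=\prod_{j=1}^r|C_j|$. Equivalently one may check that the orthogonality relations $\varepsilon_j(x)\varepsilon_l(x)=0$ of Lemma \ref{le2.3}(i) make $\bigoplus_{j=1}^r\varepsilon_j(x)C_j$ an internal direct sum whose $j$-th summand has size $|C_j|$, so that the cardinalities multiply; the individual values $|C_j|$ are then read off from the five cases of Theorem \ref{th3.8}.

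Finally, for the enumeration of codes I would again use the uniqueness in Theorem \ref{th2.5}(iii): the assignment $\mathcal{C}\mapsto(C_1,\ldots,C_r)$ is a bijection from the set of $\lambda$-constacyclic codes onto the Cartesian product of the ideal-sets of the rings $\mathcal{K}_j+u\mathcal{K}_j$. Hence the number of codes equals the product over $j$ of the number of ideals of $\mathcal{K}_j+u\mathcal{K}_j$, and this last factor is $N_{(p^m,d_j,p^s)}$ by Theorem \ref{th3.8}, giving $\prod_{j=1}^r N_{(p^m,d_j,p^s)}$. The only point requiring genuine attention is the directness of the decomposition underpinning the product formula: one must confirm that no collapsing occurs between distinct summands (guaranteed precisely by the orthogonal idempotents) and that $\mathcal{C}\mapsto(C_1,\ldots,C_r)$ is both well defined and injective, so that the code count is a clean product with no overcounting. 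Everything else is a direct reassembly of the preceding theorems.
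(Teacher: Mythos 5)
Your proposal is correct and follows exactly the route the paper intends: the paper presents Corollary \ref{co3.9} as an immediate consequence of Theorems \ref{th2.5} and \ref{th3.8}, and your argument simply fills in the details via the isomorphism $\Phi$ of Lemma \ref{le2.4} and the orthogonal idempotents of Lemma \ref{le2.3}, which is precisely the machinery the paper built for this purpose. No substantive difference in approach.
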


   Using the notations of Corollary \ref{co3.9}(ii), ${\mathcal C}=\bigoplus_{j=1}^r\varepsilon_j(x)C_j$ is called the \textit{canonical form decomposition} of the $\lambda$-constacyclic code ${\mathcal C}$.
   As the end of this section, we consider the special situation of $r=1$, i.e. $x^n-\lambda_0$ is irreducible
in $\mathbb{F}_{p^m}[x]$.

\begin{corollary}\label{co3.10}
\textit{Let $x^n-\lambda_0$ be an irreducible polynomial in
$\mathbb{F}_{p^m}[x]$ and denote $\mathcal{A}=\mathbb{F}_{p^m}[x]/\langle x^{np^s}-\lambda_0^{p^s}\rangle$.
Then all distinct $\lambda_0^{p^s}$-constacyclic codes over $\mathbb{F}_{p^m}+u\mathbb{F}_{p^m}$ of length $np^s$, i.e.
all ideals of the ring $\mathcal{A}+u\mathcal{A}$, are given by
the following five cases}:

\vskip 2mm \par
 (I) $p^{\left(p^s-\lceil\frac{p^s}{2}\rceil\right)mn}$ codes:
$$\mathcal{C}=\langle (x^n-\lambda_0)b(x)+u\rangle \ {\rm with} \ |\mathcal{C}|=p^{mnp^s},$$
where $b(x)\in (x^n-\lambda_0)^{\lceil \frac{p^s}{2}\rceil-1}(\mathcal{A}/\langle (x^n-\lambda_0)^{p^s-1}\rangle)$.

\vskip 2mm \par
 (II) $\sum_{k=1}^{p^s-1}p^{\left(p^s-k-\lceil\frac{1}{2}(p^s-k)\rceil\right)mn}$ codes:
$$\mathcal{C}=\langle (x^n-\lambda_0)^{k+1}b(x)+u(x^n-\lambda_0)^k\rangle \ {\rm with} \ |\mathcal{C}|=p^{mn(p^s-k)},$$
where $b(x)\in (x^n-\lambda_0)^{\lceil \frac{p^s-k}{2}\rceil-1}(\mathcal{A}/\langle (x^n-\lambda_0)^{p^s-k-1}\rangle)$ and $1\leq k\leq p^s-1$.

\vskip 2mm \par
 (III) $p^s+1$ codes:
$\mathcal{C}=\langle (x^n-\lambda_0)^k\rangle \ {\rm with} \ |\mathcal{C}|=p^{2mn(p^s-k)}, \ 0\leq k\leq p^s.$

\vskip 2mm \par
 (IV) $\sum_{t=1}^{p^s-1}p^{\left(t-\lceil\frac{t}{2}\rceil\right)mn}$ codes:
$$\mathcal{C}=\langle (x^n-\lambda_0)b(x)+u,(x^n-\lambda_0)^{t}\rangle \ {\rm with} \ |\mathcal{C}|=p^{mn(2p^s-t)},$$
where
$b(x)\in (x^n-\lambda_0)^{\lceil\frac{t}{2}\rceil-1}(\mathcal{A}/\langle (x^n-\lambda_0)^{t-1}\rangle)$,
$1\leq t\leq p^s-1$.

\vskip 2mm \par
 (V) $\sum_{k=1}^{p^s-2}\sum_{t=1}^{p^s-k-1}p^{\left(t-\lceil\frac{t}{2}\rceil\right)mn}$ codes:
$$\mathcal{C}=\langle (x^n-\lambda_0)^{k+1}b(x)+u(x^n-\lambda_0)^k,(x^n-\lambda_0)^{k+t}\rangle \ {\rm with} \ |\mathcal{C}|=p^{mn(2p^s-2k-t)},$$
where
$b(x)\in (x^n-\lambda_0)^{\lceil\frac{t}{2}\rceil-1}(\mathcal{A}/\langle (x^n-\lambda_0)^{t-1}\rangle)$,
$1\leq t\leq p^s-k-1$ and $1\leq k\leq p^s-2$.

\vskip 2mm \par
  Moreover, the number of $\lambda_0^{p^s}$-constacyclic codes over $\mathbb{F}_{p^m}+u\mathbb{F}_{p^m}$ of length $np^s$ is equal to
$N_{(p^m,n,p^{s})}=\left\{\begin{array}{ll}\sum_{i=0}^{2^{s-1}}(1+4i)2^{(2^{s-1}-i)mn}, & {\rm if} \ p=2; \cr & \cr
                                        \sum_{i=0}^{\frac{p^s-1}{2}}(3+4i)p^{(\frac{p^s-1}{2}-i)mn}, & {\rm if} \ p \ {\rm is} \ {\rm odd}. \end{array}\right.$
\end{corollary}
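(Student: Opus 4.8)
The plan is to obtain Corollary \ref{co3.10} as the special case $r=1$ of the classification already established in Theorems \ref{th2.5} and \ref{th3.8} together with Corollary \ref{co3.9}; essentially no new work beyond a substitution is required, so this will be a short deduction rather than an independent argument.

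First I would observe that the hypothesis that $x^n-\lambda_0$ is irreducible in $\mathbb{F}_{p^m}[x]$ means precisely that the factorization (\ref{eq1}) has a single factor, i.e. $r=1$ with $f_1(x)=x^n-\lambda_0$ and $d_1=\deg(f_1(x))=n$. Consequently $\mathcal{K}_1=\mathbb{F}_{p^m}[x]/\langle f_1(x)^{p^s}\rangle=\mathbb{F}_{p^m}[x]/\langle x^{np^s}-\lambda_0^{p^s}\rangle=\mathcal{A}$, so that $\mathcal{K}_1+u\mathcal{K}_1=\mathcal{A}+u\mathcal{A}$ and $\lambda=\lambda_0^{p^s}$. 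I would then invoke Lemma \ref{le2.3}(i): since the idempotents sum to $1$ and there is only one of them, $\varepsilon_1(x)=1$ in $\mathcal{A}$. Hence in the canonical form decomposition of Corollary \ref{co3.9}(ii) one has $\mathcal{C}=\varepsilon_1(x)C_1=C_1$, so a $\lambda_0^{p^s}$-constacyclic code over $\mathbb{F}_{p^m}+u\mathbb{F}_{p^m}$ of length $np^s$ is exactly an ideal $C_1$ of $\mathcal{A}+u\mathcal{A}$.

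The remaining step is to transcribe the five families of ideals and the enumeration formula from Theorem \ref{th3.8}, substituting $d_j=n$, $f_j(x)=x^n-\lambda_0$ and $\mathcal{K}_j=\mathcal{A}$ throughout: each exponent $md_j$ becomes $mn$, and the code sizes together with the number of codes in each family are read off directly. The count $N_{(p^m,n,p^s)}$ then coincides with the expression in Theorem \ref{th3.8} with $d_j$ replaced by $n$, which also matches the enumeration in Corollary \ref{co3.9} for $r=1$, namely $\prod_{j=1}^rN_{(p^m,d_j,p^s)}=N_{(p^m,n,p^s)}$.

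The only point that warrants a line of comment — and the nearest thing to an obstacle — is to confirm that the five cases remain distinct and exhaustive under the specialization $r=1$. This follows because the generator-matrix classification of Lemma \ref{le3.7} is already an exhaustive and mutually exclusive list of the relevant $\mathcal{K}_j$-submodules for an arbitrary finite chain ring $\mathcal{K}_j$; applying it with $\mathcal{K}_1=\mathcal{A}$ therefore yields a complete and non-redundant list of ideals $C_1$ of $\mathcal{A}+u\mathcal{A}$, so no families collapse or coincide and the statement holds verbatim.
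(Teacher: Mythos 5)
Your proposal is correct and follows exactly the paper's own route: the paper's proof is a one-line substitution of $\mathcal{A}$, $x^n-\lambda_0$ and $n$ for $\mathcal{K}_j$, $f_j(x)$ and $d_j$ in Theorem \ref{th3.8}, and you simply spell out the justification ($r=1$, $\varepsilon_1(x)=1$, $\mathcal{K}_1=\mathcal{A}$) in more detail. No further comment is needed.
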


\begin{proof}
    The conclusions follows by substituting $\mathcal{K}_j$, $f_j(x)$ and $d_j$ by $\mathcal{A}$, $x^n-\lambda_0$ and $n$ in Theorem \ref{th3.8}, respectively.
\end{proof}


\section{Dual codes of $\lambda$-constacyclic codes over $R$ of length $np^s$}

\noindent \par
In this section, we give the dual code of every
$\lambda$-constacyclic code over $R$ of length $np^s$.

\par
  For any polynomial $f(x)=\sum_{i=0}^dc_ix^i\in \mathbb{F}_{p^m}[x]$ of degree $d\geq 1$,
the \textit{reciprocal polynomial} of $f(x)$ is defined as $\widetilde{f}(x)=\widetilde{f(x)}=x^df(\frac{1}{x})=\sum_{i=0}^dc_ix^{d-i}$.
 $f(x)$ is said to be \textit{self-reciprocal} if $\widetilde{f}(x)=\delta f(x)$ for some $\delta\in \mathbb{F}_{p^m}^{\times}$.
It is known that $\widetilde{\widetilde{f}(x)}=f(x)$ if $f(0)\neq 0$, and $\widetilde{f(x)g(x)}=\widetilde{f}(x)\widetilde{g}(x)$ for
any monic polynomials $f(x), g(x)\in\mathbb{F}_{p^m}[x]$ with positive degrees satisfying $f(0),g(0)\in \mathbb{F}_{p^m}^{\times}$.

\par
   By $\lambda_0^{p^s}=\lambda$, we have $(\lambda_0^{-1})^{p^s}=\lambda^{-1}$. Then using the notions and conclusions in Section 2, we have
$$x^n-\lambda_0^{-1}=-\lambda_0^{-1}\widetilde{(x^n-\lambda_0)}=-\lambda_0^{-1}\widetilde{f}_1(x)\ldots\widetilde{f}_r(x),$$
\begin{equation}\label{eq6}x^{np^s}-\lambda^{-1}=-\lambda^{-1}\widetilde{(x^{np^s}-\lambda)}=-\lambda^{-1}\widetilde{f}_1(x)^{p^s}\ldots\widetilde{f}_r(x)^{p^s},
\end{equation}
In the following, we adopt the following notations and definitions.

\begin{notation}\label{no4.1}Let $\lambda, \lambda_0\in \mathbb{F}_{p^m}^{\times}$ satisfying $\lambda_0^{p^s}=\lambda$. For any $1\leq j\leq r$ we denote

\vskip 2mm \par
  $\bullet$ $\widehat{\mathcal{A}}=\mathbb{F}_{p^m}[x]/\langle x^{np^s}-\lambda^{-1}\rangle$,
  $\widehat{\mathcal{A}}[u]/\langle u^2\rangle=\widehat{\mathcal{A}}+u\widehat{\mathcal{A}}$ ($u^2=0$);

\vskip 2mm \par
  $\bullet$ $\widehat{\mathcal{K}}_j=\mathbb{F}_{p^m}[x]/\langle \widetilde{f}_j(x)^{p^s}\rangle$,
  $\widehat{\mathcal{K}}_j[u]/\langle u^2\rangle=\widehat{\mathcal{K}}_j+u\widehat{\mathcal{K}}_j$ ($u^2=0$);

\vskip 2mm \par
 $\bullet$   $\widehat{\Psi}: \widehat{\mathcal{A}}+u\widehat{\mathcal{A}}\rightarrow R[x]/\langle x^{np^s}-\lambda^{-1}\rangle$
via
\begin{center}
$\widehat{\Psi}: g_0(x)+ug_1(x)\mapsto \sum_{i=0}^{np^s-1}(g_{i,0}+ug_{i,1})x^i$
\end{center}
($\forall g_k(x)=\sum_{i=0}^{np^s-1}g_{i,k}x^i
\in \widehat{\mathcal{A}}$ with $g_{i,k}\in \mathbb{F}_{p^m}$, $0\leq i\leq np^s-1$ and $k=0,1$).
\end{notation}

\par
   Similar to Lemma \ref{le2.1}, it can be easily verified that $\widehat{\Psi}$ is a ring isomorphism from $\widehat{\mathcal{A}}+u\widehat{\mathcal{A}}$
onto $\mathcal{R}_{\lambda^{-1}}=R[x]/\langle x^{np^s}-\lambda^{-1}\rangle$.
Then we will identify $\widehat{\mathcal{A}}+u\widehat{\mathcal{A}}$
with $\mathcal{R}_{\lambda^{-1}}$ under $\widehat{\Psi}$ in the rest of the paper.
As $x^{np^s}=\lambda^{-1}$ in the ring $\widehat{\mathcal{A}}$, we have
$$x^{-1}=\lambda x^{np^s-1} \ {\rm in} \ \widehat{\mathcal{A}}\subset\mathcal{R}_{\lambda^{-1}}.$$
   Now, we define a map $\tau:\mathcal{A}\rightarrow \widehat{\mathcal{A}}$ via
\begin{center}
$\tau: a(x)\mapsto a(x^{-1})=a(\lambda x^{np^s-1})$ (mod $x^{np^s}-\lambda^{-1}$), for all $a(x)\in \mathcal{A}$.
\end{center}
Then one can easily verify that
$\tau$ is a ring isomorphism from $\mathcal{A}$ onto $\widehat{\mathcal{A}}$ and can be extended to a ring isomorphism
from $\mathcal{A}+u\mathcal{A}$ onto $\widehat{\mathcal{A}}+u\widehat{\mathcal{A}}$ in the natural way that
\begin{center}
$\tau: \rho(x)\mapsto \rho(x^{-1})=a(x^{-1})+ub(x^{-1})$ ($\forall \rho(x)=a(x)+ub(x)$, $a(x),b(x)\in \mathcal{A}$).
\end{center}
Now, let $1\leq j\leq r$. By Equation (\ref{eq6}), $\widetilde{f}_j(x)^{p^s}$ is a divisor of $x^{np^s}-\lambda^{-1}$
in $\mathbb{F}_{p^m}[x]$. This implies $x^{np^s}\equiv\lambda^{-1}$ (mod $\widetilde{f}_j(x)^{p^s}$). Hence
$x^{-1}=\lambda x^{np^s-1}$ in the rings $\widehat{\mathcal{K}}_j$ and $\widehat{\mathcal{K}}_j+u \widehat{\mathcal{K}}_j$ as well. Moreover, we define

\vskip 2mm \par
  $\bullet$ $\widehat{\varepsilon}_j(x)=\tau(\varepsilon_j(x))
=\varepsilon_j(x^{-1})=\varepsilon_j(\lambda x^{np^s-1}) \ ({\rm mod} \ x^{np^s}-\lambda^{-1});$

\vskip 2mm \par
  $\bullet$  $\widehat{\Phi}: (\widehat{\mathcal{K}}_1+u\widehat{\mathcal{K}}_1)\times\ldots\times(\widehat{\mathcal{K}}_r+u\widehat{\mathcal{K}}_r)\rightarrow \widehat{\mathcal{A}}+u\widehat{\mathcal{A}}$ via
\begin{center}
$\widehat{\Phi}: (\xi_1+u\eta_1,\ldots,\xi_r+u\eta_r)\mapsto
\sum_{j=1}^r\widehat{\varepsilon}_j(x)(\xi_j+u\eta_j)$  (mod $x^{np^s}-\lambda^{-1}$)
\end{center}
($\forall \xi_j,\eta_j\in \widehat{\mathcal{K}}_j$, $j=1,\ldots,r$);

\vskip 2mm \par
$\bullet$ $\tau_j: \mathcal{K}_j+u\mathcal{K}_j\rightarrow \widehat{\mathcal{K}}_j+u\widehat{\mathcal{K}}_j$ via
\begin{center}
$\tau_j: \xi\mapsto a(x^{-1})+ub(x^{-1})=a(\lambda x^{np^s-1})+ub(\lambda x^{np^s-1})$ (mod $\widetilde{f}_j(x)^{p^s}$),
\end{center}
$\forall \xi=a(x)+ub(x)\in \mathcal{K}_j+u\mathcal{K}_j$ with $a(x),b(x)\in \mathcal{K}_j$.

\vskip 2mm \noindent
  Then similar to Lemma \ref{le2.4}, we see that $\widehat{\Phi}$ is a ring isomorphism from $ (\widehat{\mathcal{K}}_1+u\widehat{\mathcal{K}}_1)\times\ldots\times(\widehat{\mathcal{K}}_r+u\widehat{\mathcal{K}}_r)$ onto $\widehat{\mathcal{A}}+u\widehat{\mathcal{A}}$. Moreover, we have the following.

\vskip 3mm \noindent
\begin{lemma}\label{le4.2} Using the notations above, $\tau_j$ is a ring isomorphism from
$\mathcal{K}_j+u\mathcal{K}_j$ onto $\widehat{\mathcal{K}}_j+u\widehat{\mathcal{K}}_j$ satisfying:
$\tau(\varepsilon_j(x)\xi)=\widehat{\varepsilon}_j(x)\tau_j(\xi)$ for all $\xi\in \mathcal{K}_j+u\mathcal{K}_j$.
\end{lemma}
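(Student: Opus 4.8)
The plan is to verify the three assertions of the lemma in order: that $\tau_j$ is well defined, that it is a ring isomorphism, and finally the intertwining identity $\tau(\varepsilon_j(x)\xi)=\widehat{\varepsilon}_j(x)\tau_j(\xi)$. All of this rests on one computation with reciprocal polynomials, which I record first.

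Since $f_j(x)$ is monic and divides $x^n-\lambda_0$ with $\lambda_0\neq 0$, we have $f_j(0)\neq 0$, so $x$ is a unit in $\widehat{\mathcal{K}}_j$; moreover $\widetilde{f}_j(x)^{p^s}$ divides $x^{np^s}-\lambda^{-1}$ by Equation (\ref{eq6}), whence $x^{np^s}=\lambda^{-1}$ and therefore $x^{-1}=\lambda x^{np^s-1}$ in $\widehat{\mathcal{K}}_j$. The reciprocal relation $\widetilde{f}_j(x)=x^{d_j}f_j(1/x)$ then yields, in $\widehat{\mathcal{K}}_j$, the identity $f_j(\lambda x^{np^s-1})=f_j(x^{-1})=x^{-d_j}\widetilde{f}_j(x)$ with $x^{-d_j}$ a unit. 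Raising to the $p^s$-th power and using $\widetilde{f}_j(x)^{p^s}=0$ in $\widehat{\mathcal{K}}_j$ gives the crucial fact $f_j(\lambda x^{np^s-1})^{p^s}=0$ in $\widehat{\mathcal{K}}_j$.

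This is exactly what well-definedness needs: if $a(x)\equiv a'(x)\pmod{f_j(x)^{p^s}}$ then $a(x)-a'(x)=f_j(x)^{p^s}g(x)$, and substituting $x\mapsto\lambda x^{np^s-1}$ and reducing modulo $\widetilde{f}_j(x)^{p^s}$ annihilates the right-hand side, so $\tau_j$ depends only on the classes of $a(x),b(x)$ modulo $f_j(x)^{p^s}$. Being induced by a polynomial substitution, the resulting map $\mathcal{K}_j\to\widehat{\mathcal{K}}_j$ is a ring homomorphism, and its natural $u$-fixing extension to $\mathcal{K}_j+u\mathcal{K}_j$ (with $u^2=0$) is again a ring homomorphism. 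For bijectivity I would exhibit the inverse as the analogous map induced by $x\mapsto\lambda^{-1}x^{np^s-1}$, the inverse of $x$ in $\mathcal{K}_j$; since $\widetilde{\widetilde{f}}_j(x)=f_j(x)$ (because $f_j(0)\neq 0$), the same computation shows this map is well defined, and the two substitutions compose to the identity on each side, so $\tau_j$ is a ring isomorphism.

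For the intertwining identity, write $\xi=a(x)+ub(x)$ with $a(x),b(x)\in\mathcal{K}_j$. Using that $\tau$ is a ring homomorphism fixing $u$, that polynomial substitution is multiplicative, and the defining relation $\widehat{\varepsilon}_j(x)=\tau(\varepsilon_j(x))=\varepsilon_j(x^{-1})$, I obtain
\begin{align*}
\tau(\varepsilon_j(x)\xi)&=\varepsilon_j(x^{-1})a(x^{-1})+u\,\varepsilon_j(x^{-1})b(x^{-1})\\
&=\widehat{\varepsilon}_j(x)\bigl(a(x^{-1})+ub(x^{-1})\bigr)
\end{align*}
in $\widehat{\mathcal{A}}+u\widehat{\mathcal{A}}$. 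This equals $\widehat{\varepsilon}_j(x)\tau_j(\xi)$ once one checks that multiplication by the idempotent $\widehat{\varepsilon}_j(x)$ is insensitive to reducing $a(x^{-1}),b(x^{-1})$ modulo $\widetilde{f}_j(x)^{p^s}$, which is the analogue for (\ref{eq6}) of the congruence $\varepsilon_j(x)c(x)\equiv\varepsilon_j(x)\bigl(c(x)\bmod f_j(x)^{p^s}\bigr)$ established inside the proof of Lemma \ref{le2.3}(iii) and holds because $\widehat{\varepsilon}_j(x)\widetilde{f}_j(x)^{p^s}\equiv 0\pmod{x^{np^s}-\lambda^{-1}}$. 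The most delicate point is precisely this bookkeeping between the two moduli $x^{np^s}-\lambda^{-1}$ and $\widetilde{f}_j(x)^{p^s}$, together with the verification that $\tau$ carries $\varepsilon_j$ to the idempotent of the summand $\widehat{\mathcal{K}}_j$; the latter follows by substituting $x\mapsto x^{-1}$ into $\varepsilon_j(x)=v_j(x)^{p^s}F_j(x)^{p^s}=1-w_j(x)^{p^s}f_j(x)^{p^s}$ and using $f_l(x^{-1})=x^{-d_l}\widetilde{f}_l(x)$, which shows $\widehat{\varepsilon}_j(x)\equiv 1\pmod{\widetilde{f}_j(x)^{p^s}}$ and $\widehat{\varepsilon}_j(x)\equiv 0\pmod{\widetilde{f}_l(x)^{p^s}}$ for $l\neq j$. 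Everything else reduces to routine substitution computations.
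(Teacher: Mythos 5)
Your proof is correct and follows essentially the same route as the paper's: both hinge on the identity $f_j(x^{-1})^{p^s}=\lambda x^{p^s(n-d_j)}\widetilde{f}_j(x)^{p^s}$ in $\widehat{\mathcal{K}}_j$ (a unit times $\widetilde{f}_j(x)^{p^s}$), from which well-definedness, the isomorphism property, and the intertwining identity all follow. You merely fill in the details (explicit inverse via $x\mapsto\lambda^{-1}x^{np^s-1}$, and the modulus bookkeeping for the idempotent) that the paper dispatches with ``one can easily verify.''
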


\begin{proof} Since $x^{-1}=\lambda x^{p^sn-1}$ in $\widehat{\mathcal{K}}_j=\mathbb{F}_{p^m}[x]/\langle \widetilde{f}_j(x)^{p^s}\rangle$,
$\tau_j$ is well defined. By
$$f_j(x^{-1})^{p^s}=\lambda x^{p^sn}f_j(x^{-1})^{p^s}=\lambda x^{p^s(n-d_j)}(x^{d_j}f_j(x^{-1}))^{p^s}
=\lambda x^{p^s(n-d_j)}\widetilde{f}_j(x)^{p^s}$$
in $\widehat{\mathcal{K}}_j$ and $\lambda x^{p^s(n-d_j)}\in \widehat{\mathcal{K}}_j^\times$,
we have $\langle f_j(x^{-1})^{p^s}\rangle=\langle \widetilde{f}_j(x)^{p^s}\rangle$.
From this, one can easily verify that
$\tau_j$ is a ring isomorphism from
$\mathcal{K}_j+u\mathcal{K}_j$ onto $\widehat{\mathcal{K}}_j+u\widehat{\mathcal{K}}_j$. Finally, for any
$\xi=a(x)+ub(x)\in \mathcal{K}_j+u\mathcal{K}_j$, where $a(x),b(x)\in \mathcal{K}_j$, by the definitions of
$\tau$ and $\tau_j$ we have that $\tau(\varepsilon_j(x)\xi)=\varepsilon_j(x^{-1})(a(x^{-1})+ub(x^{-1}))=\widehat{\varepsilon}_j(x)\tau_j(\xi)$.
\end{proof}

\begin{lemma}\label{le4.3}
\textit{Let $a=(a_0,a_1,\ldots,a_{p^sn-1}), b=(b_0,b_1,\ldots,b_{p^sn-1})\in R^{np^s}$
where $a_i,b_i\in R$ for all $i=0,1,\ldots,np^s-1$. We denote
$$a(x)=\sum_{i=0}^{np^s-1}a_ix^i\in \mathcal{A}+u\mathcal{A}, \
b(x)=\sum_{i=0}^{np^s-1}b_ix^i\in \widehat{\mathcal{A}}+u\widehat{\mathcal{A}}.$$
Then $[a,b]=\sum_{i=0}^{np^s-1}a_ib_i=0$, if $\tau(a(x))\cdot b(x)=0$ in $\widehat{\mathcal{A}}+u\widehat{\mathcal{A}}$.}
\end{lemma}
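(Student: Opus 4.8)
The plan is to recognize the Euclidean inner product $[a,b]=\sum_{i=0}^{np^s-1}a_ib_i$ as a single coefficient of the product $\tau(a(x))\cdot b(x)$ computed in the ring $\widehat{\mathcal{A}}+u\widehat{\mathcal{A}}\cong R[x]/\langle x^{np^s}-\lambda^{-1}\rangle$. Concretely, I claim that $[a,b]$ equals the constant term (the coefficient of $x^0$) of $\tau(a(x))\cdot b(x)$ when this product is written as a polynomial of degree less than $np^s$. Once this identification is in hand, the conclusion is immediate: if $\tau(a(x))\cdot b(x)=0$, then every coefficient of the reduced product vanishes, in particular the constant term, forcing $[a,b]=0$.

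To carry this out, first I would use the definition of the extended map $\tau$ together with the identity $x^{-1}=\lambda x^{np^s-1}$ valid in $\widehat{\mathcal{A}}$ (and in $\widehat{\mathcal{A}}+u\widehat{\mathcal{A}}$) to write, for $a_i=a_{i,0}+ua_{i,1}\in R$,
$$\tau(a(x))=a(x^{-1})=\sum_{i=0}^{np^s-1}a_ix^{-i}\quad\text{in }\widehat{\mathcal{A}}+u\widehat{\mathcal{A}}.$$
Multiplying by $b(x)=\sum_{l=0}^{np^s-1}b_lx^l$ and collecting monomials gives
$$\tau(a(x))\cdot b(x)=\sum_{i=0}^{np^s-1}\sum_{l=0}^{np^s-1}a_ib_l\,x^{l-i},$$
where each power $x^{l-i}$ is to be reduced modulo $x^{np^s}-\lambda^{-1}$. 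Writing $l-i=q\cdot np^s+r$ with $0\le r\le np^s-1$, and using $x^{np^s}=\lambda^{-1}$, one has $x^{l-i}=\lambda^{-q}x^{r}$ in $\widehat{\mathcal{A}}$.

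Next I would isolate the constant term, which receives contributions precisely from those pairs $(i,l)$ with $r=0$, i.e. with $l\equiv i\ (\mathrm{mod}\ np^s)$. Since $0\le i,l\le np^s-1$, the difference $l-i$ lies strictly between $-np^s$ and $np^s$, so $l\equiv i\ (\mathrm{mod}\ np^s)$ forces $l=i$, hence $q=0$ and $\lambda^{-q}=1$. Thus each such pair contributes exactly $a_ib_i$, and the coefficient of $x^0$ in $\tau(a(x))\cdot b(x)$ equals $\sum_{i=0}^{np^s-1}a_ib_i=[a,b]$. The hypothesis $\tau(a(x))\cdot b(x)=0$ then yields $[a,b]=0$.

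The computation is routine; the one point requiring care — and the only place a mistake could creep in — is the bookkeeping of the scalar factors $\lambda^{-q}$ produced when reducing powers of $x$, and confirming that no pair with $l\ne i$ can feed the constant term. This is exactly where the index range $0\le i,l\le np^s-1$ is used, guaranteeing that $l-i$ is never a nonzero multiple of $np^s$. Everything goes through verbatim over the ring $R$ rather than a field, since $R$ is commutative and $\lambda\in\mathbb{F}_{p^m}^{\times}\subseteq R$, so no integral-domain or divisibility assumption is needed.
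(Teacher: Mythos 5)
Your proposal is correct and follows essentially the same route as the paper: both arguments identify $[a,b]$ as the constant coefficient of the reduced product $\tau(a(x))\cdot b(x)$ in $R[x]/\langle x^{np^s}-\lambda^{-1}\rangle$ (the paper via the explicit rewriting $x^{-i}=\lambda x^{np^s-i}$, you via the equivalent reduction $x^{l-i}=\lambda^{-q}x^{r}$), and both observe that $|l-i|<np^s$ forces only the diagonal pairs $l=i$ to contribute, with scalar factor $1$. No gap.
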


\begin{proof} By $x^{np^s}=\lambda^{-1}$ and $x^{-1}=\lambda x^{np^s-1}$ in $\widehat{\mathcal{A}}+u\widehat{\mathcal{A}}=R[x]/\langle x^{np^s}-\lambda^{-1}\rangle$,
we have
$$x^{-i}=(\lambda x^{np^s-1})^i=\lambda^i (x^{np^s})^{i-1}x^{np^s-i}=\lambda x^{np^s-i}
\ (\forall 1\leq i\leq np^s-1),$$
and hence $\tau(a(x))=a(x^{-1})=a_0+\lambda\sum_{i=1}^{np^s-1}a_ix^{np^s-i}$. Therefore,
$\tau(a(x))\cdot b(x)=[a,b]+\sum_{i=1}^{p^sn-1}c_ix^i$ for some $c_1,\ldots,c_{p^sn-1}\in R$.
Hence $[a,b]=0$ when $\tau(a(x))\cdot b(x)=0$ in $\widehat{\mathcal{A}}+u\widehat{\mathcal{A}}$.
\end{proof}

   Now, we can represent the dual code of each $\lambda$-constacyclic code over $R$ of length $np^s$
from its canonical form decomposition.

\begin{theorem}\label{th4.4}
Let $\mathcal{C}$ be
a $\lambda$-constacyclic code over $R$ of length $np^s$ with canonical form decomposition $\mathcal{C}=\bigoplus_{j=1}^r\varepsilon_j(x)C_j$ $({\rm mod} \ x^{p^sn}-\lambda)$, where $C_j$ is an ideal of $\mathcal{K}_j+u\mathcal{K}_j$ listed by Theorem \ref{th3.8}. Then the dual code
$\mathcal{C}^{\bot}$ of $\mathcal{C}$ is a $\lambda^{-1}$-constacyclic code over $R$ of length $np^s$ with canonical form decomposition
\begin{center}
$\mathcal{C}^{\bot}=\bigoplus_{j=1}^r\widehat{\varepsilon}_j(x)\widehat{D}_j$ $({\rm mod} \ x^{np^s}-\lambda^{-1})$,
\end{center}
where $\widehat{D}_j$ is an ideal of $\widehat{\mathcal{R}}_j+u\widehat{\mathcal{R}}_j$ given by one of the following five cases:

\vskip 2mm\par
  (I) $\widehat{D}_j=\left\langle -\lambda x^{p^sn-d_j}\widetilde{f}_j(x)b(x^{-1})+u\right\rangle$, if $C_j=\langle f_j(x)b(x)+u\rangle$
  where
$b(x)\in f_j(x)^{\lceil \frac{p^s}{2}\rceil-1}(\mathcal{K}_j/\langle f_j(x)^{p^s-1}\rangle)$.

\vskip 2mm\par
  (II) $\widehat{D}_j=\left\langle -\lambda x^{p^sn-d_j}\widetilde{f}_j(x)b(x^{-1})+u, \widetilde{f}_j(x)^{p^s-k}\right\rangle$, if $C_j=\langle f_j(x)^{k+1}b(x)+uf_j(x)^k\rangle$ where $b(x)\in f_j(x)^{\lceil \frac{p^s-k}{2}\rceil-1}(\mathcal{K}_j/\langle f_j(x)^{p^s-k-1}\rangle)$ and $1\leq k\leq p^s-1$.

\vskip 2mm\par
  (III) $\widehat{D}_j=\left\langle \widetilde{f}_j(x)^{p^s-k}\right\rangle$, if $C_j=\langle f_j(x)^k\rangle$ where $0\leq k\leq p^s$.

\vskip 2mm\par
  (IV) $\widehat{D}_j=\left\langle -\lambda x^{p^sn-d_j}\widetilde{ f}_j(x)^{p^s-t+1}b(x^{-1})+u\widetilde{f}_j(x)^{p^s-t}\right\rangle$, if $C_j=\langle f_j(x)b(x)$ $+u,f_j(x)^{t}\rangle$ where  $b(x)\in f_j(x)^{\lceil\frac{t}{2}\rceil-1}(\mathcal{K}_j/\langle f_j(x)^{t-1}\rangle)$,
$1\leq t\leq p^s-1$.

\vskip 2mm\par
  (V) $\widehat{D}_j=\left\langle-\lambda x^{p^sn-d_j}\widetilde{f}_j(x)^{p^s-k-t+1}b(x^{-1})+u\widetilde{f}_j(x)^{p^s-k-t},\widetilde{f}_j(x)^{p^s-k}\right\rangle$,
  if $C_j=\langle f_j(x)^{k+1}b(x)+uf_j(x)^k,f_j(x)^{k+t}\rangle$
where
$b(x)\in f_j(x)^{\lceil\frac{t}{2}\rceil-1}(\mathcal{K}_j/\langle f_j(x)^{t-1}\rangle)$,
$1\leq t\leq p^s-k-1$ and $1\leq k\leq p^s-2$.
\end{theorem}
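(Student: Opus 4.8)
The plan is to reduce the computation of $\mathcal{C}^{\bot}$ to its components and then identify each component as an annihilator. Since the dual of a $\gamma$-constacyclic code is a $\gamma^{-1}$-constacyclic code (see Section 1), $\mathcal{C}^{\bot}$ is a $\lambda^{-1}$-constacyclic code, hence an ideal of $\mathcal{R}_{\lambda^{-1}} = \widehat{\mathcal{A}} + u\widehat{\mathcal{A}}$. Applying Theorem \ref{th2.5} with $\lambda^{-1}$, $\widehat{\varepsilon}_j(x)$ and $\widehat{\mathcal{K}}_j + u\widehat{\mathcal{K}}_j$ in place of $\lambda$, $\varepsilon_j(x)$ and $\mathcal{K}_j + u\mathcal{K}_j$ (valid by Equation (\ref{eq6}) and the hatted analogues of Lemmas \ref{le2.3} and \ref{le2.4}), there are unique ideals $\widehat{D}_j$ of $\widehat{\mathcal{K}}_j + u\widehat{\mathcal{K}}_j$ with $\mathcal{C}^{\bot} = \bigoplus_{j=1}^r \widehat{\varepsilon}_j(x)\widehat{D}_j$. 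It therefore suffices to show that these unique ideals coincide with the five explicit ideals listed in the statement.

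I would first establish the containment $\bigoplus_{j=1}^r \widehat{\varepsilon}_j(x)\widehat{D}_j \subseteq \mathcal{C}^{\bot}$, where now $\widehat{D}_j$ denotes the listed candidate. By Lemma \ref{le4.2} we have $\tau(\varepsilon_j(x)c) = \widehat{\varepsilon}_j(x)\tau_j(c)$ for $c \in C_j$, so for a codeword $\varepsilon_j(x)c$ and a candidate dual word $\widehat{\varepsilon}_l(x)d$ with $d \in \widehat{D}_l$ the product $\tau(\varepsilon_j(x)c)\cdot\widehat{\varepsilon}_l(x)d$ equals $\widehat{\varepsilon}_j(x)\tau_j(c)\widehat{\varepsilon}_l(x)d$. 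Using $\widehat{\varepsilon}_j(x)\widehat{\varepsilon}_l(x) = 0$ for $j \neq l$ and $\widehat{\varepsilon}_j(x)^2 = \widehat{\varepsilon}_j(x)$, this vanishes when $j \neq l$ and reduces to $\widehat{\varepsilon}_j(x)(\tau_j(c)d)$ when $j = l$. By Lemma \ref{le4.3} the Euclidean inner product then vanishes once $\tau_j(c)d = 0$ in $\widehat{\mathcal{K}}_j + u\widehat{\mathcal{K}}_j$; thus the whole task collapses to the local orthogonality $\tau_j(c)d = 0$ for each generator $c$ of $C_j$ and each generator $d$ of the candidate $\widehat{D}_j$.

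For this local step I would use the reciprocal identity proved inside Lemma \ref{le4.2}, namely $f_j(x^{-1}) = \lambda x^{np^s-d_j}\widetilde{f}_j(x)$ in $\widehat{\mathcal{K}}_j$, a unit multiple of $\widetilde{f}_j(x)$. Applying $\tau_j$ to a generator of $C_j$ therefore turns each power $f_j(x)^l$ into the same power $\widetilde{f}_j(x)^l$ (up to a unit) and replaces $b(x)$ by $b(x^{-1})$, so that $\tau_j(C_j)$ is again an ideal of the same shape in the hatted ring. Writing $A = \lambda x^{np^s-d_j}\widetilde{f}_j(x)b(x^{-1})$ (times a suitable power of $\widetilde{f}_j(x)$ in cases (II) and (V)), the product of $\tau_j(c)$ with the listed ``$\pm A + u$'' generator collapses through $(A+u)(-A+u) = -A^2$ to a multiple of $\widetilde{f}_j(x)^e$ with $e \geq p^s$, while products with the purely $\widetilde{f}_j(x)$-power generator vanish because $\widetilde{f}_j(x)^{p^s} = 0$. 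The $\lceil\cdot\rceil$ conditions on $b$ — for instance $b(x) \in f_j(x)^{\lceil\frac{p^s}{2}\rceil-1}(\mathcal{K}_j/\langle f_j(x)^{p^s-1}\rangle)$ in case (I) — are precisely what guarantees that $2\|b(x^{-1})\|_{\widetilde{f}_j(x)}$ is large enough to push $e$ up to $p^s$, via Lemma \ref{le3.6}(ii).

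Equality would then follow from a cardinality count. Since $R$ is a finite chain ring (hence Frobenius), $|\mathcal{C}|\cdot|\mathcal{C}^{\bot}| = |R|^{np^s} = p^{2mnp^s}$, and since $\sum_{j=1}^r d_j = n$ this factors as $\prod_{j=1}^r p^{2md_jp^s}$. In each of the five cases one checks directly from the orders in Theorem \ref{th3.8} that $|\widehat{D}_j|\cdot|C_j| = p^{2md_jp^s}$ (for example, for $C_j = \langle f_j(x)^{k+1}b(x)+uf_j(x)^k\rangle$ one has $p^{md_j(p^s-k)}\cdot p^{md_j(p^s+k)}$), whence $\prod_{j=1}^r|\widehat{D}_j| = |\mathcal{C}^{\bot}|$. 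Together with the direct-sum containment this forces $\bigoplus_{j=1}^r\widehat{\varepsilon}_j(x)\widehat{D}_j = \mathcal{C}^{\bot}$, and the uniqueness in Theorem \ref{th2.5} identifies each component. The main obstacle I anticipate is not conceptual but the bookkeeping in the local orthogonality step: correctly carrying the unit factor $\lambda x^{np^s-d_j}$ and the substitution $b(x)\mapsto b(x^{-1})$ through each case, and verifying in the two-generator cases (IV) and (V) that every one of the cross products vanishes.
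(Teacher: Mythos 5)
Your proposal is correct and follows essentially the same route as the paper: the paper defines annihilating ideals $B_j$ of $C_j$ in $\mathcal{K}_j+u\mathcal{K}_j$ with $C_jB_j=\{0\}$ and $|C_j||B_j|=p^{2mp^sd_j}$, sets $\widehat{D}_j=\tau_j(B_j)$, uses Lemmas \ref{le4.2} and \ref{le4.3} to get $\bigoplus_j\widehat{\varepsilon}_j(x)\widehat{D}_j\subseteq\mathcal{C}^{\bot}$, upgrades to equality via the Frobenius-ring cardinality identity $|\mathcal{C}||\mathcal{C}^{\bot}|=|R|^{np^s}$, and finally makes $\tau_j(B_j)$ explicit with the reciprocal identity $f_j(x^{-1})^l=\lambda x^{np^s-ld_j}\widetilde{f}_j(x)^l$. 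Your only organizational difference is verifying the local orthogonality directly in the hatted ring rather than in $\mathcal{K}_j+u\mathcal{K}_j$ before transporting by $\tau_j$, which is the same computation.
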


  \begin {proof} For each integer $j$, $1\leq j\leq r$, let $B_j$ be an ideal of $\mathcal{K}_j+\mathcal{K}_j$ given
by one of the following five cases:

\vskip 2mm \par
  (i) $B_j=\langle -f_j(x)b(x)+u\rangle$, if $C_j=\langle f_j(x)b(x)+u\rangle$ is given by Theorem \ref{th3.8}(I).

\vskip 2mm \par
  (ii) $B_j=\langle -f_j(x)b(x)+u, f_j(x)^{p^s-k}\rangle$, if $C_j=\langle f_j(x)^{k+1}b(x)+uf_j(x)^k\rangle$ is given by Theorem \ref{th3.8}(II).

\vskip 2mm \par
  (iii) $B_j=\langle f_j(x)^{p^s-k}\rangle$, if $C_j=\langle f_j(x)^k\rangle$ is given by Theorem \ref{th3.8}(III).

\vskip 2mm \par
  (iv) $B_j=\langle -f_j(x)^{p^s-t+1}b(x)+uf_j(x)^{p^s-t}\rangle$, if $C_j=\langle f_j(x)b(x)+u,f_j(x)^{t}\rangle$ is given by Theorem \ref{th3.8}(IV).

\vskip 2mm \par
  (v) $B_j=\langle-f_j(x)^{p^s-k-t+1}b(x)+uf_j(x)^{p^s-k-t},f_j(x)^{p^s-k}\rangle$,
  if $C_j=\langle f_j(x)^{k+t}$, $f_j(x)^{k+1}b(x)+uf_j(x)^k\rangle$ is given by Theorem \ref{th3.8}(V).

\vskip 2mm \par
  By Theorem \ref{th3.8}, one can easily verify that
\begin{equation}
\label{eq7}C_j\cdot B_j=\{0\} \ {\rm and} \ |C_j||B_j|=p^{2mp^sd_j}.
\end{equation}
   Now, we denote $\widehat{D}_j=\tau_j(B_j)$ which is an ideal of the ring   $\widehat{\mathcal{K}}_j+u\widehat{\mathcal{K}}_j$, and
set $\mathcal{D}=\sum_{j=1}^r\widehat{\varepsilon}_j(x)\widehat{D}_j=\widehat{\Phi}(\widehat{D}_1\times\ldots\times \widehat{D}_r)$. Then $\mathcal{D}$ is an
ideal of the ring $\widehat{\mathcal{A}}+u\widehat{\mathcal{A}}$, i.e. a $\lambda^{-1}$-constacyclic code over $R$ of
length $np^s$, and $\mathcal{D}=\sum_{j=1}^r\tau(\varepsilon_j(x))\tau_j(B_j)=\sum_{j=1}^r\tau(\varepsilon_j(x)B_j)=\tau(\sum_{j=1}^r\varepsilon_j(x)B_j)$
by Lemma \ref{le4.2}. From this, by Lemma \ref{le2.3}(i) and Equation (\ref{eq7}) we deduce
\begin{eqnarray*}
\tau(\mathcal{C})\cdot \mathcal{D} &=& \tau\left(\sum_{j=1}^r\varepsilon_j(x)C_j\right)\cdot \mathcal{D}=\tau\left(\left(\sum_{j=1}^r\varepsilon_j(x)C_j\right)\left(\sum_{j=1}^r\varepsilon_j(x)B_j\right)\right)\\
  &=& \tau\left(\sum_{j=1}^r\varepsilon_j(x)(C_j\cdot B_j)\right)=\{0\}.
\end{eqnarray*}
This implies that $[\xi,\eta]=0$ for all $\xi\in \mathcal{C}$ and $\eta\in \mathcal{D}$ by Lemma \ref{le4.3}. Hence
$\mathcal{D}\subseteq \mathcal{C}^{\bot}$. Moreover, by Corollary \ref{co3.9} and Equation (\ref{eq7}) it follows that
\begin{eqnarray*}
|\mathcal{C}||\mathcal{D}|&=&(\prod_{j=1}^r|C_j|)(\prod_{j=1}^r|B_j|)
=\prod_{j=1}^r|C_j||B_j|=p^{2mp^s\sum_{j=1}^rd_j}\\
 &=&p^{2mp^sn}=|\mathbb{F}_{p^m}+u\mathbb{F}_{p^m}|^{np^s}=|R|^{np^s}.
\end{eqnarray*}
Since $R$ is a Frobenius ring, we conclude that $\mathcal{D}=\mathcal{C}^{\bot}$ (see \cite{s15}).

\par
  Finally, we give the explicit representation for each $\mathcal{C}^{\bot}$. By
$\lambda x^{np^s}=1$ in $\mathcal{R}_{\lambda^{-1}}$ and ${\rm deg}(f_j(x))=d_j<n$, for any integer $l$, $1\leq l\leq p^s-1$, we have
\begin{equation}\label{eq8}\tau_j(f_j(x)^l)=f_j(x^{-1})^l=\lambda x^{np^s-ld_j}(x^{d_j}f_j(x^{-1}))^l=\lambda x^{np^s-ld_j}\widetilde{f}_j(x)^l.
\end{equation}

\par
   $\diamond$ When $B_j$ is given in Case (i), by Equation (\ref{eq8}) we have
\begin{eqnarray*}
\widehat{D}_j&=&\tau_j(B_j)=\langle \tau_j(-f_j(x)b(x)+u)\rangle=\langle -\tau_j(f_j(x))\tau_j(b(x))+u\rangle\\
 &=&\langle -\lambda x^{np^s-d_j}\widetilde{f}_j(x)b(x^{-1})+u\rangle
\end{eqnarray*}

\par
   $\diamond$ When $B_j$ is given in Case (ii), by Equation (\ref{eq8}) we have
\begin{eqnarray*}
\widehat{D}_j&=&\tau_j(B_j)=\langle \tau_j(-f_j(x)b(x)+u), \tau_j(f_j(x)^{p^s-k}) \rangle\\
  &=&\langle -\lambda x^{np^s-d_j}\widetilde{f}_j(x)b(x^{-1})+u, \lambda x^{np^s-(p^s-k)d_j}\widetilde{f}_j(x)^{p^s-k}\rangle\\
  &=& \langle -\lambda x^{np^s-d_j}\widetilde{f}_j(x)b(x^{-1})+u, \widetilde{f}_j(x)^{p^s-k}\rangle.
\end{eqnarray*}

\par
   $\diamond$ When $B_j$ is given in Case (iii), it can be proved similarly as that of Case (ii).

\par
   $\diamond$ When $B_j$ is given in Case (iv), by Equation (\ref{eq8}) we have
\begin{eqnarray*}
\widehat{D}_j&=&\tau_j(B_j)=\langle \tau_j(-f_j(x)^{p^s-t+1}b(x)+uf_j(x)^{p^s-t}) \rangle\\
  &=&\langle -\lambda x^{np^s-(p^s-t+1)d_j}\widetilde{f}_j(x)^{p^s-t+1}b(x^{-1})+u \cdot \lambda x^{np^s-(p^s-t)d_j}\widetilde{f}_j(x)^{p^s-t}   \rangle\\
  &=& \langle -x^{-d_j}\widetilde{f}_j(x)^{p^s-t+1}b(x^{-1})+u \widetilde{f}_j(x)^{p^s-t}\rangle,
\end{eqnarray*}
where $x^{-d_j}=\lambda x^{np^s-d_j}$ in $\widehat{\mathcal{K}}_j=\mathbb{F}_{p^m}[x]/\langle \widetilde{f}_j(x)^{p^s}\rangle$ as
$\widetilde{f}_j(x)^{p^s}|(x^{np^s}-\lambda^{-1})$.

\par
   $\diamond$ When $B_j$ is given in Case (v), it can be proved similarly as that of Cases (ii) and (iv).
\end{proof}

\begin{example} By \cite{s21} Example 10.1, $x^{2^i}-3$ is irreducible over $\mathbb{F}_5$ for any
integer $i\geq 0$. It is clear that $3^5=3$ and $x^{2^i}-3=x^{2^i}+2$ in $\mathbb{F}_5[x]$. Now, we consider
the special case of $i=2$. By $p=5$, $s=1$ and $n=4$, we have

\vskip 2mm \noindent
  $\diamondsuit$ $\mathcal{T}=\{\sum_{i=0}^3a_ix^i\mid a_0,a_1,a_2,a_3\in \mathbb{F}_5\}$ and $|\mathcal{T}|=5^4$;

\vskip 2mm \noindent
  $\diamondsuit$
    $\mathcal{A}=\mathbb{F}_5[x]/\langle (x^4+2)^5\rangle=\{\sum_{j=0}^4b_j(x)(x^4+2)^j\mid b_j(x)\in \mathcal{T}, \ j=0,1,2,3,4\}$;

\vskip 2mm \noindent
  $\diamondsuit$ $(x^4+2)^l(\mathcal{A}/\langle (x^4+2)^t\rangle)=\{\sum_{j=l}^{t-1}b_j(x)(x^4+2)^j\mid b_j(x)\in \mathcal{T}, \ j=l,\ldots,t-1\}$
for any $0\leq l\leq t\leq 4$, where we set $(x^4+2)^t(\mathcal{A}/\langle (x^4+2)^t\rangle)=\{0\}$.
\end{example}

\par
  By Corollary 3.10, the number of $3$-constacyclic codes over $\mathbb{F}_5+u\mathbb{F}_5$  of length $4\cdot 5=20$ is equal to $N_{(5,4,5)}=3\cdot 5^8+7\cdot 5^4+11=1176261$.

\par
  As $\lambda=3$, $\widetilde{(x^4+2)}=1+2x^4=2(x^4+3)$ and $2\cdot (-\lambda)=4$, by Corollary \ref{co3.10} and Theorem \ref{th4.4} we see that all distinct $3$-constacyclic codes over $\mathbb{F}_5+u\mathbb{F}_5$  of length $20$ and their dual codes are
given by one the following five cases:

\vskip 2mm \par
 (I) $5^{4\cdot 2}=390625$ codes:
$$\mathcal{C}=\langle (x^4+2)b(x)+u\rangle \ {\rm with} \ |\mathcal{C}|=5^{20}$$
and $\mathcal{C}^{\bot}=\langle 4x^{16}(x^4+3)b(x^{-1})+u\rangle$ which is a $2$-constacyclic codes over $\mathbb{F}_5+u\mathbb{F}_5$  of length $20$,
  where
$$b(x)\in (x^4+2)^{2}(\mathcal{A}/\langle (x^4+2)^{4}\rangle)=\{g(x)(x^4+2)^2+h(x)(x^4+2)^3\mid g(x),h(x)\in \mathcal{T}\}.$$

\vskip 2mm\par
  (II) $5^{4\cdot 2}+5^{4\cdot 1}+5^{4\cdot 1}+5^{4\cdot 0}=391876$ codes:
$$\mathcal{C}=\langle (x^4+2)^{k+1}b(x)+u(x^4+2)^k\rangle \ {\rm with} \ |\mathcal{C}|=5^{4(5-k)}$$
and $\mathcal{C}^{\bot}=\langle 4x^{16}(x^4+3)b(x^{-1})+u, (x^4+3)^{5-k}\rangle$,
where
\begin{center}
$b(x)\in (x^4+2)^{\lceil\frac{5-k}{2}\rceil-1}(\mathcal{A}/\langle (x^4+2)^{5-k-1}\rangle)$ and $1\leq k\leq 4$.
\end{center}

\vskip 2mm \par
 (III) $6$ codes:
$\mathcal{C}=\langle (x^4+2)^k\rangle \ {\rm with} \ |\mathcal{C}|=5^{8(5-k)}$
and $\mathcal{C}^{\bot}=\langle (x^4+3)^{5-k}\rangle$, where $0\leq k\leq 5$.

\vskip 2mm \par
 (IV) $5^{4\cdot 0}+5^{4\cdot 1}+5^{4\cdot 1}+5^{4\cdot 2}=391876$ codes:
$$\mathcal{C}=\langle (x^4+2)b(x)+u,(x^4+2)^{t}\rangle \ {\rm with} \ |\mathcal{C}|=5^{4(10-t)}$$
and $\mathcal{C}^{\bot}=\langle 4x^{16}(x^4+3)^{5-t+1}b(x^{-1})+u(x^4+3)^{5-t}\rangle$, where
\begin{center}
$b(x)\in (x^4+2)^{\lceil \frac{t}{2}\rceil-1}(\mathcal{A}/\langle (x^4+2)^{t-1}\rangle)$,
$1\leq t\leq 4$.
\end{center}

\vskip 2mm \par
 (V) $(5^{4\cdot 0}+5^{4\cdot 1}+5^{4\cdot 1})+(5^{4\cdot 0}+5^{4\cdot 1})+5^{4\cdot 0}=1878$ codes:
$$\mathcal{C}=\langle (x^4+2)^{k+1}b(x)+u(x^4+2)^k,(x^4+2)^{k+t}\rangle \ {\rm with} \ |\mathcal{C}|=5^{4(10-2k-t)}$$
and $\mathcal{C}^{\bot}=\langle 4x^{16}(x^4+3)^{5-k-t+1}b(x^{-1})+u(x^4+3)^{p^s-k-t},(x^4+3)^{p^s-k}\rangle$,
where
$b(x)\in (x^4+2)^{\lceil\frac{t}{2}\rceil-1}(\mathcal{A}/\langle (x^4+2)^{t-1}\rangle)$,
$1\leq t\leq 4-k$ and $1\leq k\leq 3$.

\vskip 2mm \par
  The only self-dual $3$-constacyclic codes over $\mathbb{F}_5+u\mathbb{F}_5$  of length $20$
is $\langle u\rangle=u\mathbb{F}_5^{20}$ (corresponding to $b(x)=0$ in Case (I)).


\section{Self-dual cyclic codes and negacyclic codes of length $np^s$ over $R$} \label{}

\noindent \par
In this section, we investigate self-dual cyclic codes and negacyclic codes of length $np^s$ over $R=\mathbb{F}_{p^m}+u\mathbb{F}_{p^m}$.
Let $\nu\in\{1,-1\}$ and $\mathcal{R}_{\nu}=R[x]/\langle x^{np^s}-\nu\rangle$. Then
$\mathcal{C}$ is a \textit{cyclic code} (resp. \textit{negacyclic code}) if and only if $\mathcal{C}$ is an ideal
of the ring $\mathcal{R}_{\nu}$, i.e. a $\nu$-constacyclic code
over $R$ of length $np^s$, where $\nu=1$ (resp. $\nu=-1$).

\par
  Using the notations in Sections 1--4, by $\nu^{-1}=\nu$ we have $\mathcal{R}_{\nu}=\mathcal{R}_{\nu^{-1}}$ and
$\mathcal{A}=\widehat{\mathcal{A}}=\mathbb{F}_{p^m}[x]/\langle x^{p^sn}-\nu\rangle$.
Hence the map $\tau:\mathcal{A}\rightarrow \mathcal{A}$ defined by $\tau(a(x))=a(x^{-1})=a(\nu x^{np^s-1})$ (mod $x^{np^s}-\nu$) for all $a(x)\in \mathcal{A}$
is a ring automorphism on $\mathcal{A}$ satisfying $\tau^{-1}=\tau$.

\par
  By Equations (\ref{eq1}) in Section 2 and (\ref{eq6}) in Section 4, we have
$$x^{np^s}-\nu=f_1(x)^{p^s}\ldots f_r(x)^{p^s}, \
x^{np^s}-\nu=-\nu\widetilde{(x^{p^sn}-\nu})=-\nu\widetilde{f}_1(x)^{p^s}\ldots \widetilde{f}_r(x)^{p^s}.$$
Since $f_1(x),\ldots, f_r(x)$ are pairwise coprime monic irreducible polynomials in $\mathbb{F}_{p^m}[x]$, for
each $1\leq j\leq r$ there is a unique integer $j^{\prime}$, $1\leq j^{\prime}\leq r$, such that
$\widetilde{f}_j(x)=\delta_jf_{j^{\prime}}(x)$ where $\delta_j=f_j(0)^{-1}\in \mathbb{F}_{p^m}^{\times}$.
In the following, we still denote the bijection $j\mapsto j^{\prime}$ on $\{1,\ldots,r\}$
by $\tau$, i.e.,
$$\widetilde{f}_j(x)=\delta_jf_{\tau(j)}(x).$$
   Whether $\tau$ denotes the automorphism of ${\mathcal A}$ or this map on $\{1,2$, $\ldots,r\}$ is determined by context.
The next lemma shows the compatibility of the two uses of $\tau$.

\begin{lemma}\label{le5.1}
Using the notations above, we have the following conclusions.

\vskip 2mm \par
   (i) \textit{$\tau$ is a permutation on the set $\{1,\ldots,r\}$ satisfying $\tau^{-1}=\tau$}.

\vskip 2mm \par
   (ii) \textit{After a suitable rearrangement of $f_1(x),\ldots,f_r(x)$, there are nonnegative integers $\rho,\epsilon$ such that
$\rho+2\epsilon=r$, $\tau(j)=j$ for all $j=1,\ldots,\rho$,  $\tau(\rho+i)=\rho+\epsilon+i$  and
$\tau(\rho+\epsilon+i)=\rho+i$ for all $i=1,\ldots,\epsilon$}.

\vskip 2mm \par
   (iii) \textit{For any $1\leq j\leq r$, $\tau(\varepsilon_j(x))=\widehat{\varepsilon}_{j}(x)=\varepsilon_j(x^{-1})=\varepsilon_{\tau(j)}(x)$ in the ring ${\mathcal A}$}.

\vskip 2mm \par
   (iv) \textit{For any integer $j$, $1\leq j\leq r$, the map
$\tau_j:\mathcal{K}_j+u\mathcal{K}_j\rightarrow \mathcal{K}_{\tau(j)}+u\mathcal{K}_{\tau(j)}$ defined
by
$\tau_j(a(x)+ub(x))=a(x^{-1})+ub(x^{-1}) \ (\forall a(x),b(x)\in \mathcal{K}_j)$
is a ring isomorphism
from $\mathcal{K}_j+u\mathcal{K}_j$ onto $\mathcal{K}_{\tau(j)}+u\mathcal{K}_{\tau(j)}$ with inverse
$\tau_j^{-1}=\tau_{\tau(j)}$}.

\par
   Moreover,
  we have $\tau(\varepsilon_j(x)\xi)=\varepsilon_{\tau(j)}(x)\tau_j(\xi)$ for all $\xi\in \mathcal{K}_j+u\mathcal{K}_j$.
\end{lemma}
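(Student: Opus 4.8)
The plan is to treat the index-permutation claims (i)--(ii) by the reciprocal-polynomial calculus and then deduce the idempotent claims (iii)--(iv) from the uniqueness of the primitive idempotents of $\mathcal{A}$. For (i), I would first note that $\nu\neq 0$ forces $f_j(0)\neq 0$ for every $j$, so the reciprocal $\widetilde{f}_j(x)$ is again irreducible of degree $d_j$ and its monic scalar multiple $\delta_j\widetilde{f}_j(x)$ divides $x^{np^s}-\nu$ by the factorization $x^{np^s}-\nu=-\nu\prod_l\widetilde{f}_l(x)^{p^s}$; unique factorization in $\mathbb{F}_{p^m}[x]$ then makes $\tau(j)$ well defined via $f_{\tau(j)}(x)=\delta_j\widetilde{f}_j(x)$. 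Applying the reciprocal a second time and using $\widetilde{\widetilde{f}}(x)=f(x)$ (valid because $f_j(0)\neq 0$) together with $\widetilde{c\,g}(x)=c\,\widetilde{g}(x)$ for scalars $c$, I would compare the two sides as monic irreducibles to conclude $\tau(\tau(j))=j$. Thus $\tau$ is an involution on the finite set $\{1,\ldots,r\}$, hence a permutation with $\tau^{-1}=\tau$, which is (i). Part (ii) is then the standard cycle decomposition of an involution: writing $\rho$ for the number of fixed points and $\epsilon$ for the number of transpositions gives $\rho+2\epsilon=r$, and relabelling the $f_j(x)$ so that the fixed indices come first and each transposition $\{a,b\}$ is indexed as $\{\rho+i,\rho+\epsilon+i\}$ yields the asserted form.

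The heart of the argument is (iii). By Lemma \ref{le2.3} the elements $\varepsilon_1(x),\ldots,\varepsilon_r(x)$ form the unique complete set of orthogonal primitive idempotents of $\mathcal{A}$, and each $\varepsilon_j(x)$ is characterized by $\varepsilon_j(x)\equiv 1\pmod{f_j(x)^{p^s}}$ and $\varepsilon_j(x)\equiv 0\pmod{f_l(x)^{p^s}}$ for $l\neq j$. Since $\tau$ is a ring automorphism of $\mathcal{A}$, the set $\{\tau(\varepsilon_j(x))\}$ is again such a system, so it remains only to identify the image. I would apply $\tau$ to the expression $\varepsilon_j(x)=1-w_j(x)^{p^s}f_j(x)^{p^s}$ and use the reciprocal identity behind Equation (\ref{eq8}), namely that $f_j(x^{-1})^{p^s}$ equals a unit times $\widetilde{f}_j(x)^{p^s}$, hence a unit times $f_{\tau(j)}(x)^{p^s}$; this shows $\varepsilon_j(x^{-1})\equiv 1\pmod{f_{\tau(j)}(x)^{p^s}}$, and the orthogonality congruences transform in the same way to give $\varepsilon_j(x^{-1})\equiv 0$ modulo $f_{\tau(l)}(x)^{p^s}$ for $\tau(l)\neq\tau(j)$. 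By the uniqueness characterization this pins down $\tau(\varepsilon_j(x))=\varepsilon_j(x^{-1})=\varepsilon_{\tau(j)}(x)$, which is (iii).

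Finally (iv) follows quickly. Because $f_j(x^{-1})^{p^s}$ and $f_{\tau(j)}(x)^{p^s}$ generate the same ideal, the substitution $x\mapsto x^{-1}$ descends to a well-defined ring isomorphism $\mathcal{K}_j\to\mathcal{K}_{\tau(j)}$, which I extend coefficientwise over $u$ to the isomorphism $\tau_j$; applying the substitution twice recovers the original polynomial, and since $\tau(\tau(j))=j$ from (i) this gives $\tau_j^{-1}=\tau_{\tau(j)}$. The relation $\tau(\varepsilon_j(x)\xi)=\varepsilon_{\tau(j)}(x)\tau_j(\xi)$ is then immediate from the multiplicativity of $\tau$ and part (iii). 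I expect the only real obstacle to be (iii): correctly transporting the modular characterization of $\varepsilon_j(x)$ through the reciprocal substitution while keeping track of the unit factors, the remaining parts being formal consequences of $\tau$ being an involutive automorphism.
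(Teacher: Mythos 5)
Your proposal is correct and follows essentially the same route as the paper: (i) via the double-reciprocal computation showing $\tau$ is an involution, (ii) as the cycle structure of that involution, (iii) by pushing the defining relations of $\varepsilon_j(x)$ through $x\mapsto x^{-1}$ using that $f_j(x^{-1})^{p^s}$ is a unit times $f_{\tau(j)}(x)^{p^s}$, and (iv) by the well-definedness of the substitution map together with multiplicativity. The only cosmetic difference is in (iii), where you identify the image by the CRT/uniqueness characterization of the idempotent while the paper explicitly recomputes the B\'ezout coefficients after substitution and matches them against Equation (\ref{eq3}); both arguments rest on the same identity.
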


\begin{proof} (i) By $\widetilde{f}_j(x)=\delta_jf_{\tau(j)}(x)$ for all $j$, it follows that
$f_{\tau(j)}(x)=\delta_j^{-1}\widetilde{f}_j(x)$. This implies
$\widetilde{f_{\tau(j)}}(x)=\delta_j^{-1}\widetilde{\widetilde{f}_j(x)}=\delta_j^{-1}f_j(x)$, and
so $\delta_{\tau(j)}=\delta_j^{-1}$.
Hence
$f_{\tau^2(j)}(x)=f_{\tau(\tau(j))}(x)=\delta_{\tau(j)}^{-1}\widetilde{f}_{\tau(j)}(x)
=\delta_{\tau(j)}^{-1}\delta_j^{-1}\widetilde{\widetilde{f}_j(x)}=f_j(x)$. This implies
$\tau^2(j)=j$ for all $j\in\{1,\ldots,r\}$. Therefore, $\tau^{-1}=\tau$.

\par
  (ii) It follows from (i) immediately.

\par
  (iii) From the equation $v_j(x)F_j(x)+w_j(x)f_j(x)=1$ in Section 2, we deduce that
$v_j(x^{-1})F_j(x^{-1})+w_j(x^{-1})f_j(x^{-1})=\tau(v_j(x)F_j(x)+w_j(x)f_j(x))=1$ in $\mathcal{A}$. By
Equation (\ref{eq8}), we have $f_j(x^{-1})^{p^s}=\nu x^{np^s-p^sd_j}\widetilde{f}_j(x)^{p^s}=\nu \delta_j^{p^s}x^{p^s(n-d_j)}f_{\tau(j)}(x)^{p^s}$. From this, by $F_j(x)=\frac{x^n-\nu}{f_j(x)}$ and $\nu^{p^s}=\nu$ we deduce that
\begin{eqnarray*}
(F_j(x^{-1}))^{p^s}&=&\left(\frac{x^{-n}-\nu}{f_j(x^{-1})}\right)^{p^s}
 =\frac{x^{-np^s}-\nu}{f_j(x^{-1})^{p^s}}
 =\frac{-\nu x^{-np^s}(x^{np^s}-\nu)}{\nu \delta_j^{p^s}x^{p^s(n-d_j)}f_{\tau(j)}(x)^{p^s}}\\
 &=&-\nu\delta_j^{-p^s}x^{p^s(d_j-n)}F_{\tau(j)}(x)^{p^s}.
\end{eqnarray*}
Hence $\left(-\nu\delta_j^{-p^s}x^{p^s(d_j-n)}v_j(x^{-1})^{p^s}\right)F_{\tau(j)}(x)^{p^s}+\left(\nu x^{np^s-p^sd_j}w(x^{-1})^{p^s}\right)f_{\tau(j)}(x)^{p^s}
=\left(v_j(x^{-1})F_j(x^{-1})+w_j(x^{-1})f_j(x^{-1})\right)^{p^s}=1$. Then by Equation (\ref{eq3}) we have
\begin{eqnarray*}
\tau(\varepsilon_j(x))&=&\varepsilon_j(x^{-1})=v_j(x^{-1})^{p^s}F_j(x^{-1})^{p^s}\\
 &=&\left(-\nu\delta_j^{-p^s}x^{p^s(d_j-n)}v_j(x^{-1})^{p^s}\right)F_{\tau(j)}(x)^{p^s}\\
 &=&1-\left(\nu x^{np^s-p^sd_j}w(x^{-1})^{p^s}\right)f_{\tau(j)}(x)^{p^s}
\end{eqnarray*}
in $\mathcal{A}$. This implies $\tau(\varepsilon_j(x))=\varepsilon_{\tau(j)}(x)$ by
Notation \ref{no2.2}.

\par
  (iv) By Notation \ref{no4.1} and $\widetilde{f}_j(x)=\delta_jf_{\tau(j)}(x)$, we have
\begin{eqnarray*}
\widehat{\mathcal{K}}_j=\mathbb{F}_{p^m}[x]/\langle \widetilde{f}_j(x)^{p^s}\rangle=\mathbb{F}_{p^m}[x]/\langle \delta_j^{p^s}f_{\tau(j)}(x)^{p^s}\rangle
=\mathbb{F}_{p^m}[x]/\langle f_{\tau(j)}(x)^{p^s}\rangle=\mathcal{K}_{\tau(j)}.
\end{eqnarray*}
Then by Lemma \ref{le4.2} we deduce that
$\tau_j$
is a ring isomorphism
from $\mathcal{K}_j+u\mathcal{K}_j$ onto $\mathcal{K}_{\tau(j)}+u\mathcal{K}_{\tau(j)}$ and
$\tau(\varepsilon_j(x)\xi)=\varepsilon_{\tau(j)}(x)\tau_j(\xi)$ for all $\xi\in \mathcal{K}_j+u\mathcal{K}_j$.
\end{proof}

\vskip 3mm\par
   Then by Lemma \ref{le5.1} and Theorem \ref{th4.4}, we have the following conclusion for dual codes of cyclic and negacyclic codes
over $R$ of length $np^s$.

\vskip 3mm\noindent
\begin{corollary}\label{co5.2} Let $\nu\in\{1,-1\}$ and $\mathcal{C}$ be a $\nu$-constacyclic code over $R$ of length $np^s$ with canonical form decomposition $\mathcal{C}=\bigoplus_{j=1}^r\varepsilon_j(x)C_j$ $({\rm mod} \ x^{p^sn}-\nu)$, where $C_j$ is an ideal of $\mathcal{K}_j+u\mathcal{K}_j$ listed by Theorem \ref{th3.8}. Then the dual code $\mathcal{C}^{\bot}$ of $\mathcal{C}$ is a $\nu$-constacyclic code over $R$ of length $np^s$ with canonical form decomposition
\begin{center}
$\mathcal{C}^{\bot}=\bigoplus_{j=1}^r\varepsilon_{\tau(j)}(x)D_{\tau(j)}$ $({\rm mod} \ x^{np^s}-\nu)$,
\end{center}
where $D_{\tau(j)}$ is an ideal of $\mathcal{K}_{\tau(j)}+u\mathcal{K}_{\tau(j)}$ given by one of the following five cases:

\vskip 2mm\par
  (I) If $C_j=\langle f_j(x)b(x)+u\rangle$
  where
$b(x)\in f_j(x)^{\lceil \frac{p^s}{2}\rceil-1}(\mathcal{K}_j/\langle f_j(x)^{p^s-1}\rangle)$,
\begin{center}
$D_{\tau(j)}=\langle -\nu \delta_jx^{p^sn-d_j}f_{\tau(j)}(x)b(x^{-1})+u\rangle$.
\end{center}

\vskip 2mm\par
  (II) If $C_j=\langle f_j(x)^{k+1}b(x)+uf_j(x)^k\rangle$ where $1\leq k\leq p^s-1$ and $b(x)\in f_j(x)^{\lceil \frac{p^s-k}{2}\rceil-1}(\mathcal{K}_j/\langle f_j(x)^{p^s-k-1}\rangle)$,
\begin{center}
$D_{\tau(j)}=\langle -\nu\delta_j x^{p^sn-d_j}f_{\tau(j)}(x)b(x^{-1})+u, f_{\tau(j)}(x)^{p^s-k}\rangle$.
\end{center}

\vskip 2mm\par
  (III) If $C_j=\langle f_j(x)^k\rangle$ where $0\leq k\leq p^s$, $D_{\tau(j)}=\langle f_{\tau(j)}(x)^{p^s-k}\rangle$.

\vskip 2mm\par
  (IV) If $C_j=\langle f_j(x)b(x)+u,f_j(x)^{t}\rangle$ where $1\leq t\leq p^s-1$ \\
  and $b(x)\in f_j(x)^{\lceil\frac{t}{2}\rceil-1}(\mathcal{K}_j/\langle f_j(x)^{t-1}\rangle)$,
\begin{center}
$D_{\tau(j)}=\langle -\nu\delta_j x^{p^sn-d_j}f_{\tau(j)}(x)^{p^s-t+1}b(x^{-1})+uf_{\tau(j)}(x)^{p^s-t}\rangle$.
\end{center}

\vskip 2mm\par
  (V) If $C_j=\langle f_j(x)^{k+1}b(x)+uf_j(x)^k,f_j(x)^{k+t}\rangle$
where $1\leq t\leq p^s-k-1$, $1\leq k\leq p^s-2$,
 and $b(x)\in f_j(x)^{\lceil\frac{t}{2}\rceil-1}(\mathcal{K}_j/\langle f_j(x)^{t-1}\rangle)$,
\begin{center}
$D_{\mu(j)}=\langle -\nu\delta_j x^{p^sn-d_j}f_{\tau(j)}(x)^{p^s-k-t+1}b(x^{-1})+uf_{\tau(j)}(x)^{p^s-k-t},f_{\tau(j)}(x)^{p^s-k}\rangle$.
\end{center}
\end{corollary}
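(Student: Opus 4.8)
The plan is to derive Corollary~\ref{co5.2} directly from Theorem~\ref{th4.4} by specializing to $\lambda=\nu\in\{1,-1\}$ and then rewriting everything in terms of $f_{\tau(j)}(x)$ rather than $\widetilde{f}_j(x)$. First I would observe that when $\lambda=\nu$ we have $\lambda^{-1}=\nu$, so $\lambda_0=\nu$ as well (since $\nu^{p^s}=\nu$ because $\nu\in\{1,-1\}$ and $p$ is odd when $\nu=-1\neq 1$, while $\nu=1$ is trivial). Consequently $\widehat{\mathcal{A}}=\mathcal{A}$, $\mathcal{R}_{\lambda^{-1}}=\mathcal{R}_{\nu}$, and by the computation in the proof of Lemma~\ref{le5.1}(iv) we have $\widehat{\mathcal{K}}_j=\mathcal{K}_{\tau(j)}$. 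Thus the dual-code decomposition $\mathcal{C}^{\bot}=\bigoplus_{j=1}^r\widehat{\varepsilon}_j(x)\widehat{D}_j$ produced by Theorem~\ref{th4.4} becomes $\bigoplus_{j=1}^r\varepsilon_{\tau(j)}(x)D_{\tau(j)}$ once we apply Lemma~\ref{le5.1}(iii), which identifies $\widehat{\varepsilon}_j(x)=\varepsilon_{\tau(j)}(x)$, and set $D_{\tau(j)}=\widehat{D}_j$ viewed as an ideal of $\mathcal{K}_{\tau(j)}+u\mathcal{K}_{\tau(j)}$.

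The next step is to convert each of the five generator expressions in Theorem~\ref{th4.4} from the reciprocal-polynomial form into the $f_{\tau(j)}$ form. The single substitution needed is the relation $\widetilde{f}_j(x)=\delta_j f_{\tau(j)}(x)$ with $\delta_j=f_j(0)^{-1}$, which is exactly the defining identity introduced before Lemma~\ref{le5.1}. For each power $\widetilde{f}_j(x)^\ell$ appearing in a generator I would replace it by $\delta_j^\ell f_{\tau(j)}(x)^\ell$; since $\delta_j^\ell$ is a unit in $\mathcal{K}_{\tau(j)}$, any generator of the form $\langle \widetilde{f}_j(x)^\ell\rangle$ equals $\langle f_{\tau(j)}(x)^\ell\rangle$, which immediately handles Case~(III). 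For the mixed generators in Cases~(I), (II), (IV), (V), the leading coefficient $-\lambda=-\nu$ picks up exactly one factor of $\delta_j$ from the lowest-degree $\widetilde{f}_j$ term, producing the coefficient $-\nu\delta_j$ that appears in the statement; the remaining unit factors $\delta_j^{\ell}$ can be absorbed because multiplying a generating element of a principal or two-generator ideal by a unit does not change the ideal.

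The verification is then essentially bookkeeping: in each case I would write out $\widehat{D}_j=\tau_j(B_j)$ using the explicit images from the proof of Theorem~\ref{th4.4} (equations involving $x^{np^s-\ell d_j}\widetilde{f}_j(x)^\ell$), substitute $\widetilde{f}_j(x)=\delta_j f_{\tau(j)}(x)$, and simplify the power of $x$ using $\lambda x^{np^s}=\nu x^{np^s}=1$ in $\mathcal{K}_{\tau(j)}$, exactly as was done in Case~(iv) of that proof where $x^{-d_j}=\nu x^{np^s-d_j}$. This confirms that the generators listed in Corollary~\ref{co5.2}(I)--(V) agree with those of Theorem~\ref{th4.4} after the identification $\widehat{\mathcal{K}}_j=\mathcal{K}_{\tau(j)}$.

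The main obstacle, such as it is, will be purely clerical rather than conceptual: keeping track of the various powers of the unit $\delta_j$ and of $x$ when converting $\widetilde{f}_j(x)^\ell$ to $f_{\tau(j)}(x)^\ell$, and checking that in each of the five cases these unit factors can be discarded without altering the ideal. I expect Case~(V), with two generators and three independent exponents $k$, $t$, and $p^s-k-t$, to require the most careful tracking, but it reduces to the same unit-absorption argument already illustrated in Cases~(ii) and (iv) of the proof of Theorem~\ref{th4.4}, so no genuinely new idea is needed.
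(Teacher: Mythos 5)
Your proposal is correct and follows essentially the same route as the paper, which derives Corollary \ref{co5.2} directly from Theorem \ref{th4.4} and Lemma \ref{le5.1} (the paper states this derivation without writing out the details you supply). Your bookkeeping with the substitution $\widetilde{f}_j(x)=\delta_j f_{\tau(j)}(x)$ and the absorption of the unit powers $\delta_j^{\ell}$ is exactly the intended verification.
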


   Now, we can determine self-dual cyclic and negacyclic codes of length $p^sn$ over $R$ by the following theorem.

\begin{theorem}\label{th5.3}
Using the notations in Lemma \ref{le5.1}$({\rm ii})$ and $x^{-1}=\nu x^{np^s-1}$, all distinct
self-dual $\nu$-constacyclic codes of length $p^sn$ over $R$ are given by
$$\mathcal{C}=\bigoplus_{j=1}^r\varepsilon_j(x)C_j \ ({\rm mod} \ x^{np^s}-\nu),$$
where $C_j$ is an ideal of $\mathcal{K}_j+u\mathcal{K}_j$ given by one of the following two cases:

\vskip 2mm\par
  (i) If $1\leq j\leq \rho$, $C_j$ is given by one of the following three subcases.

\vskip 2mm \par
   (i-1) $C_j=\langle f_j(x)b(x)+u\rangle$, where
$b(x)\in f_j(x)^{\lceil \frac{p^s}{2}\rceil-1}(\mathcal{K}_j/\langle f_j(x)^{p^s-1}\rangle)$ satisfying
$b(x)+\nu\delta_jx^{p^sn-d_j}b(x^{-1})\equiv 0 \ ({\rm mod} \ f_j(x)^{p^s-1})$.

\vskip 2mm \par
   (i-2) $C_j=\langle f_j(x)^k\rangle$ where $k$ is an integer satisfying $2k=p^s$.

\vskip 2mm \par
   (i-3) $C_j=\langle f_j(x)^{k+1}b(x)+uf_j(x)^k,f_j(x)^{k+t}\rangle$, where
$1\leq t\leq p^s-k-1$, $1\leq k\leq p^s-2$,
 and $b(x)\in f_j(x)^{\lceil\frac{t}{2}\rceil-1}(\mathcal{K}_j/\langle f_j(x)^{t-1}\rangle)$ satisfying
$p^s=2k+t$ and
$b(x)+\nu\delta_jx^{p^sn-d_j}b(x^{-1})\equiv 0 \ ({\rm mod} \ f_j(x)^{t-1})$.

\vskip 2mm\par
  (ii) If $j=\rho+i$ where $1\leq i\leq \epsilon$, the pair $(C_j,C_{j+\epsilon})$ of ideals is given by one of the following five subcases.

\vskip 2mm\par
  (ii-1) $C_j=\langle f_j(x)b(x)+u\rangle$ and $C_{j+\epsilon}=\langle -\nu\delta_jx^{p^sn-d_j}f_{j+\epsilon}(x)b(x^{-1})+u\rangle$
  where
$b(x)\in f_j(x)^{\lceil \frac{p^s}{2}\rceil-1}(\mathcal{K}_j/\langle f_j(x)^{p^s-1}\rangle)$.

\vskip 2mm\par
  (ii-2) $C_j=\langle f_j(x)^{k+1}b(x)+uf_j(x)^k\rangle$ and
\begin{center}
$C_{j+\epsilon}=\langle -\nu\delta_j x^{p^sn-d_j}f_{j+\epsilon}(x)b(x^{-1})+u, f_{j+\epsilon}(x)^{p^s-k}\rangle$,
\end{center}
where $b(x)\in f_j(x)^{\lceil \frac{p^s-k}{2}\rceil-1}(\mathcal{K}_j/\langle f_j(x)^{p^s-k-1}\rangle)$ and $1\leq k\leq p^s-1$.

\vskip 2mm\par
  (ii-3) $C_j=\langle f_j(x)^k\rangle$ and $C_{j+\epsilon}=\langle f_{j+\epsilon}(x)^{p^s-k}\rangle$, where $0\leq k\leq p^s$.

\vskip 2mm\par
  (ii-4) $C_j=\langle f_j(x)b(x)+u,f_j(x)^{t}\rangle$ and
\begin{center}
$C_{j+\epsilon}=\langle -\nu\delta_j x^{p^sn-d_j}f_{j+\epsilon}(x)^{p^s-t+1}b(x^{-1})+uf_{j+\epsilon}(x)^{p^s-t}\rangle$,
\end{center}
where $b(x)\in f_j(x)^{\lceil\frac{t}{2}\rceil-1}(\mathcal{K}_j/\langle f_j(x)^{t-1}\rangle)$ and $1\leq t\leq p^s-1$.

\vskip 2mm\par
  (ii-5) $C_j=\langle f_j(x)^{k+1}b(x)+uf_j(x)^k,f_j(x)^{k+t}\rangle$ and
\begin{center}
$C_{j+\epsilon}=\langle -\nu\delta_j x^{p^sn-d_j}f_{j+\epsilon}(x)^{p^s-k-t+1}b(x^{-1})+uf_{j+\epsilon}(x)^{p^s-k-t},f_{j+\epsilon}(x)^{p^s-k}\rangle$,
\end{center}
where $b(x)\in f_j(x)^{\lceil\frac{t}{2}\rceil-1}(\mathcal{K}_j/\langle f_j(x)^{t-1}\rangle)$, $1\leq t\leq p^s-k-1$ and $1\leq k\leq p^s-2$.
\end{theorem}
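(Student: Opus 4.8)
The plan is to reduce the self-duality equation $\mathcal{C} = \mathcal{C}^\bot$ to a system of componentwise conditions and then solve them case by case. First I would invoke Corollary \ref{co5.2}, which gives the canonical form decomposition $\mathcal{C}^\bot = \bigoplus_{j=1}^r \varepsilon_{\tau(j)}(x) D_{\tau(j)}$ with each $D_{\tau(j)}$ produced from $C_j$ by an explicit dualizing formula. Reindexing the summands through the involution $\tau$ (legitimate since $\tau^{-1}=\tau$ by Lemma \ref{le5.1}(i)), the $l$-th summand of $\mathcal{C}^\bot$ becomes $\varepsilon_l(x) D_l$, where $D_l$ is the ideal of $\mathcal{K}_l + u\mathcal{K}_l$ obtained from $C_{\tau(l)}$. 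Since $\mathcal{C}$ and $\mathcal{C}^\bot$ are both $\nu$-constacyclic codes, the uniqueness of the canonical form decomposition in Theorem \ref{th2.5}(iii) shows that $\mathcal{C} = \mathcal{C}^\bot$ if and only if $C_l = D_l$ for every $l = 1,\ldots,r$.

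Then I would split the indices using the normal form of $\tau$ from Lemma \ref{le5.1}(ii): the indices $1,\ldots,\rho$ are fixed by $\tau$, while $\rho+1,\ldots,\rho+\epsilon$ are interchanged in pairs with $\rho+\epsilon+1,\ldots,\rho+2\epsilon$. For a $\tau$-fixed index $j$ ($1\le j\le\rho$), the single condition $C_j = D_j$ requires $C_j$ to coincide with its own dual, computed by Corollary \ref{co5.2} with $\tau(j)=j$ and $f_{\tau(j)}=f_j$. For a pair $\{j,j+\epsilon\}$, the two conditions $C_j = D_j$ and $C_{j+\epsilon} = D_{j+\epsilon}$ say respectively that $C_j$ is the dual of $C_{j+\epsilon}$ and that $C_{j+\epsilon}$ is the dual of $C_j$. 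Because $\mathcal{C}^{\bot\bot}=\mathcal{C}$ holds ($R$ being a finite Frobenius ring) and the component dualizing maps satisfy $\tau_j^{-1}=\tau_{\tau(j)}$ by Lemma \ref{le5.1}(iv), these two conditions are equivalent: the componentwise dualization is an involution. Hence in the paired case $C_j$ may be chosen arbitrarily among the five families of Theorem \ref{th3.8}, and $C_{j+\epsilon}$ is then forced to be the ideal obtained from $C_j$ by Corollary \ref{co5.2}; reading off the five types of $C_j$ yields exactly the five subcases (ii-1)--(ii-5).

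The heart of the argument is the fixed-point case, where I would run through the five families of Theorem \ref{th3.8} for $C_j$ and impose $C_j = D_j$. A size comparison disposes of two families at once: a type-(II) ideal has dual of type-(IV), with $|C_j| = p^{md_j(p^s-k)} \neq p^{md_j(p^s+k)} = |D_j|$ for $k\ge 1$, and symmetrically for type-(IV), so neither can be self-dual, which is why both are absent from case (i). For type-(III), $\langle f_j(x)^k\rangle = \langle f_j(x)^{p^s-k}\rangle$ forces $2k=p^s$, giving (i-2). For types (I) and (V) I would use that each family of Theorem \ref{th3.8} is parametrized \emph{bijectively} by its defining data, so equality of ideals is equality of data: writing $D_j$ in the same normal form as $C_j$, the pure-power generators match iff $p^s = 2k+t$ (so the dual carries the same $k,t$), and dividing the mixed generators by the common factor $f_j(x)^{k+1}$ converts their equality into the stated congruence $b(x)+\nu\delta_j x^{p^sn-d_j}b(x^{-1})\equiv 0$ modulo $f_j(x)^{t-1}$ (respectively modulo $f_j(x)^{p^s-1}$ in type (I)); here the already-established bound $\|b\|_{f_j}\ge\lceil\frac{1}{2}(\cdot)\rceil$ guarantees that the relevant squares vanish so that the division is clean. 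This produces (i-1) and (i-3) and completes the classification. I expect the delicate bookkeeping in matching the two-generator form of type (V) — tracking the exponents $p^s-k-t+1,\,p^s-k-t,\,p^s-k$ against $k+1,\,k,\,k+t$ under $p^s=2k+t$ — to be the main obstacle, though it is purely computational once the bijective parametrization is in hand.
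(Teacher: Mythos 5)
Your proposal is correct and follows essentially the same route as the paper's proof: reduce self-duality to the componentwise conditions $C_{\tau(j)}=D_{\tau(j)}$ via Corollary \ref{co5.2} and the uniqueness of the canonical decomposition, treat the $\tau$-fixed indices by a case analysis over the five families of Theorem \ref{th3.8}, and treat the swapped pairs by choosing $C_j$ freely and forcing $C_{j+\epsilon}$ to be its component dual, using that the componentwise dualization is an involution. Your explicit size argument ruling out types (II) and (IV) at fixed indices is a detail the paper leaves implicit, but it does not change the structure of the argument.
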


\begin{proof}
We have $\mathcal{C}^{\bot}=\bigoplus_{j=1}^r\varepsilon_{\tau(j)}(x)D_{\tau(j)}$,
where $D_{\tau(j)}$ is an ideal of $\mathcal{K}_{\tau(j)}+u\mathcal{K}_{\tau(j)}$ given by Corollary \ref{co5.2} for $j=1,\ldots,r$.
Since $\tau$ is a bijection on the set $\{1,\ldots,r\}$, we have $\mathcal{C}=\bigoplus_{j=1}^r\varepsilon_{\tau(j)}(x)C_{\tau(j)}$.
From this and by
Theorem \ref{th2.5}(iii), we deduce that $\mathcal{C}$ is self-dual if and only if $C_{\tau(j)}=D_{\tau(j)}$ for all $j=1,\ldots,r$.
Then by Lemma \ref{le5.1}(ii), we have one of the following two cases.

\par
   (i) Let $1\leq j\leq \rho$. Then $\tau(j)=j$. By Corollary \ref{co5.2}, $C_j$ satisfies the condition $C_{\tau(j)}=D_{\tau(j)}$ if and only
if $C_j$ is given one of the following subcases:

\par
   (i-1) $C_j=\langle f_j(x)b(x)+u\rangle=\langle  -\nu\delta_jx^{p^sn-d_j}f_{\tau(j)}(x)b(x^{-1})+u\rangle$,
    where $b(x)\in f_j(x)^{\lceil \frac{p^s}{2}\rceil-1}(\mathcal{K}_j/\langle f_j(x)^{p^s-1}\rangle)$ satisfying $b(x)f_j(x)=-\nu\delta_jx^{p^sn-d_j}b(x^{-1})f_j(x)$ in $\mathcal{K}_j$, i.e.
$b(x)+\nu\delta_jx^{p^sn-d_j}b(x^{-1})\equiv 0 \ ({\rm mod} \ f_j(x)^{p^s-1}).$

\par
   (i-2) $C_j=\langle f_j(x)^k\rangle=\langle f_{\tau(j)}(x)^{p^s-k}\rangle$, where $0\leq k\leq p^s$ satisfying
$p^s-k=k$, i.e., $2k=p^s$.

\par
   (i-3) $C_j=\langle f_j(x)^{k+1}b(x)+uf_j(x)^k,f_j(x)^{k+t}\rangle$ and
\begin{center}
$C_j=\langle -\nu\delta_j x^{p^sn-d_j}f_{\tau(j)}(x)^{p^s-k-t+1}b(x^{-1})+uf_{\tau(j)}(x)^{p^s-k-t},f_{\tau(j)}(x)^{p^s-k}\rangle$,
\end{center}
where $1\leq t\leq p^s-k-1$, $1\leq k\leq p^s-2$,
 and $b(x)\in f_j(x)^{\lceil\frac{t}{2}\rceil-1}(\mathcal{K}_j/\langle f_j(x)^{t-1}\rangle)$ satisfying
$k+t=p^s-k$, i.e. $p^s=2k+t$, and
$b(x)+\nu\delta_jx^{p^sn-d_j}b(x^{-1})\equiv 0 \ ({\rm mod} \ f_j(x)^{t-1}).$

\par
   (ii) Let $j=\rho+i$ where $1\leq i\leq \epsilon$. Then $\tau(j)=j+\epsilon$. In this case,
we choose an arbitrary ideal $C_j$ of $\mathcal{K}_j+u\mathcal{K}_j$ listed in Theorem \ref{th3.8} and
let $C_{j+\epsilon}=C_{\tau(j)}=D_{\tau(j)}=\tau_j(B_j)$, where $D_{\tau(j)}$ is given by Corollary \ref{co5.2} and
$B_j$ is given by the proof of Theorem \ref{th4.4} with $\tau(j)=j+\epsilon$, respectively.
Then the condition $C_{\tau(j)}=D_{\tau(j)}$ is satisfied for for all $j=\rho+1,\ldots,\rho+\epsilon$.

\par
  Moreover, by Equation (\ref{eq7}) in the proof
of Theorem \ref{th4.4} we have $C_j\cdot B_j=\{0\}$ and $|C_j||B_j|=p^{2mp^sd_j}$, where $B_j$ is an ideal
of $\mathcal{K}_j+u\mathcal{K}_j$. By Lemma \ref{le5.1}(ii) and (iv), we know that $\tau_j$ is a ring isomorphism form $\mathcal{K}_j+u\mathcal{K}_j$
onto $\mathcal{K}_{\tau(j)}+u\mathcal{K}_{\tau(j)}=\mathcal{K}_{j+\epsilon}+u\mathcal{K}_{j+\epsilon}$ with
inverse $\tau_j^{-1}=\tau_{\tau(j)}=\tau_{j+\epsilon}$.
This implies that $\tau_j(B_j)$ and $\tau_j(C_j)$ are ideals of $\mathcal{K}_{j+\epsilon}+u\mathcal{K}_{j+\epsilon}$
satisfying
\begin{center}
$\tau_j(B_j)\cdot \tau_j(C_j)=\{0\}$ and $|\tau_j(B_j)||\tau_j(C_j)|=|B_j||C_j|=p^{2mp^sd_j}=p^{2mp^sd_{j+\epsilon}}$.
\end{center}
From this, by $C_{j+\epsilon}=\tau_j(B_j)$ and the proof of Theorem \ref{th4.4} we deduce that $B_{j+\epsilon}=\tau_j(C_j)$. Hence
$D_{\tau(j+\epsilon)}=\tau_{j+\epsilon}(B_{j+\epsilon})
=\tau_{j+\epsilon}(\tau_j(C_j))=C_j=C_{\tau(j+\epsilon)}$
for all $j=\rho+1,\ldots,\rho+\epsilon$.

\par
  As stated above, we conclude that the condition $C_{\tau(j)}=D_{\tau(j)}$ is satisfied for all $j=\rho+1,\ldots,\rho+\epsilon,
\rho+\epsilon+1,\ldots,\rho+2\epsilon$.
\end{proof}

\section{Negacyclic codes over $\mathbb{F}_5+u\mathbb{F}_5$ of
length $n 5^s$ where $n=2\cdot 3^t$}

\noindent \par
In this section, let $t$ be an arbitrary positive integer. We consider negacyclic codes of length $2\cdot 3^t\cdot5^s$ over $\mathbb{F}_{5}+u\mathbb{F}_5$. In this case, we have $p=5$, $m=1$,
$n=2\cdot 3^t$ and $\lambda=-1$.
  By \cite{s3} Section 4, we know that
$$x^{2\cdot 3^t}+1=\prod_{i=1}^{t+1}f_i(x)f_{t+1+i}(x)$$
is the factorization of $x^{2\cdot 3^t}+1$ into monic irreducible factors in $\mathbb{F}_5[x]$, where

\vskip 2mm\par
  $f_1(x)=x+2$ with degree $d_1={\rm deg}(f_1(x))=1$,  $f_{t+2}(x)=3\widetilde{f}_1(x)=x+3$;

\vskip 2mm\par
  $f_i(x)=x^{2\cdot 3^{i-2}}+2x^{3^{i-2}}+4$  with degree $d_i={\rm deg}(f_i(x))=2\cdot 3^{i-2}$,

\par
  $f_{t+1+i}(x)=4\widetilde{f}_i(x)=x^{2\cdot 3^{i-2}}+3x^{3^{i-2}}+4$, for $i=2,\ldots,t+1$.

\vskip 2mm\noindent
Hence $\rho=0$, $\epsilon=t+1$, $\widetilde{f}_i(x)=\delta_if_{t+1+i}(x)$ where $\delta_1=3$ and $\delta_i=4$ for all $i=2,\ldots,t+1$. Then
$$x^{2\cdot5^s\cdot 3^t}+1=(x^{2\cdot 3^t}+1)^{5^s}=\prod_{i=1}^{t+1}f_i(x)^{5^s}f_{t+1+i}(x)^{5^s}.$$

\noindent
   For any integer $i$, $1\leq i\leq t+1$, we find polynomials $a_i(x),b_i(x)\in \mathbb{F}_5[x]$ satisfying
$a_i(x)\frac{x^{2\cdot 3^t}+1}{f_i(x)}+b_i(x)f_i(x)=1$. Then
set
$$\varepsilon_i(x)\equiv a_i(x^{5^s})\frac{x^{2\cdot 3^t\cdot 5^s}+1}{f_i(x^{5^s})}=1-b_i(x^{5^s})f_i(x^{5^s}) \ ({\rm mod} \ x^{2\cdot 3^t\cdot 5^s}+1),$$
$$\varepsilon_{t+1+i}(x)=\varepsilon_i(x^{-1})=\varepsilon_i(4x^{2\cdot 3^t\cdot 5^s-1}) \
({\rm mod} \ x^{2\cdot 3^t\cdot 5^s}+1).$$

\par
    Denote $\mathcal{K}_j=\mathbb{F}_5[x]/\langle f_j(x)^{5^s}\rangle$ and $\mathcal{T}_j=\{\sum_{l=0}^{d_j-1}a_lx^l\mid a_0,a_1,\ldots,a_{d_j-1}\in \mathbb{F}_5\}$ for all $j=1,2,\ldots,2(t+1)$. By Theorem \ref{th3.8}, Corollary \ref{co3.9} and Lemma \ref{le5.1},
all distinct negacyclic code over $\mathbb{F}_5+u\mathbb{F}_5$ of length
$2\cdot 3^t\cdot5^s$ are given by:
\begin{equation}\label{eq9}\mathcal{C}=\bigoplus_{j=1}^{2(t+1)}\varepsilon_j(x)C_j,
\end{equation}
where $C_j$ is an ideal of the
ring $\mathcal{K}_j+u\mathcal{K}_j$ $(u^2=0)$, $1\leq j\leq 2(t+1)$, given by one of the following five cases:

\vskip 2mm \par
 (I-$j$) $5^{\left(5^s-\lceil\frac{5^s}{2}\rceil\right)d_j}$ ideals:
$$C_j=\langle f_j(x)b(x)+u\rangle \ {\rm with} \ |C_j|=p^{5^sd_j},$$
where $b(x)\in f_j(x)^{\lceil \frac{5^s}{2}\rceil-1}(\mathcal{K}_j/\langle f_j(x)^{5^s-1}\rangle)$.

\vskip 2mm \par
 (II-$j$) $\sum_{k=1}^{5^s-1}5^{\left(5^s-k-\lceil\frac{1}{2}(5^s-k)\rceil\right)d_j}$ ideals:
$$C_j=\langle f_j(x)^{k+1}b(x)+uf_j(x)^k\rangle \ {\rm with} \ |C_j|=5^{(5^s-k)d_j},$$
where $b(x)\in f_j(x)^{\lceil \frac{5^s-k}{2}\rceil-1}(\mathcal{K}_j/\langle f_j(x)^{5^s-k-1}\rangle)$ and $1\leq k\leq 5^s-1$.

\vskip 2mm \par
 (III-$j$) $5^s+1$ ideals:
$C_j=\langle f_j(x)^k\rangle \ {\rm with} \ |C_j|=p^{2(5^s-k)d_j}, \ 0\leq k\leq 5^s.$

\vskip 2mm \par
 (IV-$j$) $\sum_{t=1}^{5^s-1}5^{\left(t-\lceil\frac{t}{2}\rceil\right)d_j}$ ideals:
$$C_j=\langle f_j(x)b(x)+u,f_j(x)^{t}\rangle \ {\rm with} \ |C_i|=5^{(2\cdot5^s-t)d_j},$$
where
$b(x)\in f_j(x)^{\lceil \frac{t}{2}\rceil-1}(\mathcal{K}_j/\langle f_j(x)^{t-1}\rangle)$,
$1\leq t\leq 5^s-1$.

\vskip 2mm \par
 (V-$j$) $\sum_{k=1}^{5^s-2}\sum_{t=1}^{5^s-k-1}5^{\left(t-\lceil\frac{t}{2}\rceil\right)d_j}$ ideals:
$$C_j=\langle f_j(x)^{k+1}b(x)+uf_j(x)^k,f_j(x)^{k+t}\rangle \ {\rm with} \ |C_j|=5^{(2\cdot 5^s-2k-t)d_j},$$
where
$b(x)\in f_j(x)^{\lceil \frac{t}{2}\rceil-1}(\mathcal{K}_j/\langle f_j(x)^{t-1}\rangle)$,
$1\leq t\leq 5^s-k-1$ and $1\leq k\leq 5^s-2$.

\vskip 2mm \par
   Moreover, the number of codewords contained in the code $\mathcal{C}$ given by Equation (\ref{eq9})
is equal to $\prod_{i=1}^{2(t+1)}|C_j|$, and the number of all negacyclic codes
of length $2\cdot 3^t\cdot 5^s$ over $\mathbb{F}_5+u\mathbb{F}_5$ is equal to
$\prod_{j=1}^{t+1}N_{(5,d_j,5^s)}^2$
where $d_1=1$, $d_j=2\cdot 3^{j-2}$ and
$$N_{(5,d_j,5^{s})}=\sum_{i=0}^{\frac{5^s-1}{2}}(3+4i)5^{(\frac{5^s-1}{2}-i)d_j}, \
j=1,2,\ldots,t+1.$$

\par
   By Theorem \ref{th5.3}, the number of self-dual negacyclic codes
over $\mathbb{F}_5+u\mathbb{F}_5$ of length $2\cdot 3^t\cdot 5^s$ is equal to
$\prod_{j=1}^{t+1}N_{(5,d_j,5^s)}$ and all distinct self-dual negacyclic codes
over $\mathbb{F}_5+u\mathbb{F}_5$ of
length $2\cdot 3^t\cdot 5^s$  are given by the following:
$$\mathcal{C}=\bigoplus_{j=1}^{t+1}\left(\varepsilon_j(x)C_j\oplus \varepsilon_{t+1+j}(x)C_{t+1+j}\right),$$
where $C_j$ is an ideal of the
ring $\mathcal{K}_j+u\mathcal{K}_j$ $(u^2=0)$ given by (I-j)--(V-j) above
  and $C_{t+1+j}$ is an ideal of the
ring $\mathcal{K}_{t+1+j}+u\mathcal{K}_{t+1+j}$ $(u^2=0)$, $1\leq j\leq t$, given by the one of the following five cases:

\vskip 2mm\par
  ($j$-1) If $C_j$ is given in Case (I-$j$), $C_{t+1+j}=\langle \delta_jx^{2\cdot 3^t\cdot 5^s-d_j}f_{t+1+j}(x)b(x^{-1})+u\rangle$.

 \vskip 2mm\par
  ($j$-2) If $C_j$ is given in Case (II-$j$),
$C_{t+1+j}=\langle \delta_j x^{2\cdot 3^t\cdot 5^s-d_j}f_{t+1+j}(x)b(x^{-1})+u, f_{t+1+j}(x)^{5^s-k}\rangle$.

\vskip 2mm\par
  ($j$-3) If $C_j$ is given in Case (III-$j$), $C_{t+1+j}=\langle f_{t+1+j}(x)^{5^s-k}\rangle$.

\vskip 2mm\par
  ($j$-4) If $C_j$ is given in Case (IV-$j$),
$C_{t+1+j}=\langle \delta_j x^{2\cdot 3^t\cdot 5^s-d_j}f_{t+1+j}(x)^{5^s-t+1}b(x^{-1})$ $+uf_{t+1+j}(x)^{5^s-t}\rangle$.

\vskip 2mm\par
  ($j$-5) If $C_j$ is given in Case (V-$j$),
\begin{eqnarray*}
C_{t+1+j}&=&\langle\delta_j x^{2\cdot 3^t\cdot 5^s-d_j}f_{t+1+j}(x)^{5^s-k-t+1}b(x^{-1})+uf_{t+1+j}(x)^{5^s-k-t},\\
         &&f_{t+1+j}(x)^{5^s-k}\rangle.
\end{eqnarray*}

\par
   Especially, we consider negacyclic codes over $\mathbb{F}_5+u\mathbb{F}_5$ of length $6\cdot 5=30$ corresponding to the special case of $s=t=1$.
The number of all negacyclic codes over $\mathbb{F}_5+u\mathbb{F}_5$ of length $30$ is equal to
$N_{(5,1,5)}^2N_{(5,2,5)}^2=121^2\cdot 2061^2=62190883161$, and the number of self-dual negacyclic codes over $\mathbb{F}_5+u\mathbb{F}_5$ of length $30$ is equal to
$N_{(5,1,5)}N_{(5,2,5)}=121\cdot 2061=249381$.

\par
  Obviously, $x^6+1=f_1(x)f_2(x)f_3(x)f_4(x)$ is the factorization of $x^6+1$ into monic irreducible factors in $\mathbb{F}_5[x]$,
where $f_1(x)=x+2$, $f_2(x)=x^2+2x+4$, $f_3(x)=x+3$ and $f_4(x)=x^2+3x+4$ satisfying
$\widetilde{f}_1(x)=\delta_1f_3(x)$ and $\widetilde{f}_2(x)=\delta_1f_4(x)$ with $\delta_1=3$
and $\delta_2=4$. First, we obtain the following:

\vskip 2mm\noindent
   $\varepsilon_1(x)=2x^{25}+ x^{20}+3x^{15}+4x^{10}+2x^5+1$;
   $\varepsilon_2(x)=2x^{25}+4x^{20}+4x^{15}+x^{10}+2x^{5}+2$;

\vskip 2mm\noindent
   $\varepsilon_3(x)=3x^{25}+x^{20}+2x^{15}+4x^{10}+3x^{5}+1$;
   $\varepsilon_4(x)=3x^{25}+4x^{20}+ x^{15}+x^{10}+3x^{5}+2$.

\vskip 2mm\par
   Next, let $\mathcal{K}_j=\mathbb{F}_5[x]/\langle f_j(x)^5\rangle$ for $j=1,2,3,4$. Then all distinct self-dual negacyclic codes over $\mathbb{F}_5+u\mathbb{F}_5$ of length $30$ are given by:
$$\mathcal{C}=\sum_{j=1}^4\varepsilon_j(x)C_j \ ({\rm mod} \ x^{30}+1),$$
where $C_j$ is an ideal of the ring $\mathcal{K}_j+u\mathcal{K}_j$ such that:

\par
  (i) the pair $(C_1,C_3)$ of ideals is given by one of the following five cases.

\vskip 2mm \par
  (i-1) $5^2=25$ pairs:

\par
  $C_1=\langle (x+2)b(x)+u\rangle$ and $C_3=\langle 3x^{29}(x+3)b(x^{-1})+u\rangle$,
where
\begin{center}
$b(x)\in (x+2)^{2}(\mathcal{K}_1/\langle (x+2)^{4}\rangle)=\{g\cdot(x+2)^2+h\cdot (x+2)^3\mid g,h\in \mathbb{F}_5\}$.
\end{center}

\vskip 2mm \par
  (i-2) $5^{2}+5^{1}+5^{1}+5^{0}=36$ pairs:

\par
  $C_1=\langle (x+2)^{k+1}b(x)+u(x+2)^k\rangle$ and $C_{3}=\langle 3 x^{29}(x+3)b(x^{-1})+u, (x+3)^{5-k}\rangle$,
where $b(x)\in (x+2)^{\lceil \frac{5-k}{2}\rceil-1}(\mathcal{K}_1/\langle (x+2)^{5-k-1}\rangle)$ and $1\leq k\leq 4$.

\vskip 2mm \par
  (i-3) $6$ pairs:
  $C_1=\langle (x+2)^k\rangle$ and $C_{3}=\langle (x+3)^{5-k}\rangle$, where $0\leq k\leq 5$.

\vskip 2mm \par
  (i-4) $5^{0}+5^{1}+5^{1}+5^{2}=36$ pairs:

\par
  $C_1=\langle (x+2)b(x)+u,(x+2)^{t}\rangle$ and $C_{3}=\langle 3 x^{29}(x+3)^{5-t+1}b(x^{-1})+u(x+3)^{5-t}\rangle$,
where
$b(x)\in (x+2)^{\lceil \frac{t}{2}\rceil-1}(\mathcal{K}_1/\langle (x+2)^{t-1}\rangle)$,
$1\leq t\leq 4$.

\vskip 2mm \par
  (i-5) $(5^{0}+5^{1}+5^{1})+(5^{0}+5^{1})+5^{0}=18$ pairs:

\par
  $C_1=\langle (x+2)^{k+1}b(x)+u(x+2)^k,(x+2)^{k+t}\rangle$ and
\begin{center}
  $C_{3}=\langle 3 x^{29}(x+3)^{5-k-t+1}b(x^{-1})+u(x+3)^{5-k-t},(x+3)^{5-k}\rangle$,
\end{center}
where
$b(x)\in (x+2)^{\lceil \frac{t}{2}\rceil-1}(\mathcal{K}_1/\langle (x+2)^{t-1}\rangle)$,
$1\leq t\leq 5-k-1$ and $1\leq k\leq 3$.

\vskip 2mm \par
  (ii) $(C_2,C_4)$ is given by one of the following five cases.

\vskip 2mm \par
  (ii-1) $5^{2\cdot 2}=625$ pairs:

\par
  $C_2=\langle (x^2+2x+4)b(x)+u\rangle$ and $C_4=\langle 3x^{28}(x^2+3x+4)b(x^{-1})+u\rangle$,
where
$b(x)\in (x^2+2x+4)^{2}(\mathcal{K}_2/\langle (x^2+2x+4)^{4}\rangle)$, i.e.,

\noindent
 $b(x)\in \{(g_0+g_1x)\cdot(x^2+2x+4)^2+(h_0+h_1x)\cdot (x^2+2x+4)^3\mid g_0,g_1,h_0,h_1\in \mathbb{F}_5\}$.

\vskip 2mm \par
  (ii-2) $5^{2\cdot 2}+5^{2\cdot 1}+5^{2\cdot 1}+5^{2\cdot 0}=676$ pairs:

\par
  $C_2=\langle (x^2+2x+4)^{k+1}b(x)+u(x^2+2x+4)^k\rangle$ and
$$C_{4}=\langle 3 x^{28}(x^2+3x+4)b(x^{-1})+u, (x^2+3x+4)^{5-k}\rangle,$$
where $b(x)\in (x^2+2x+4)^{\lceil \frac{5-k}{2}\rceil-1}(\mathcal{K}_2/\langle (x^2+2x+4)^{5-k-1}\rangle)$ and $1\leq k\leq 4$.

\vskip 2mm \par
  (ii-3) $6$ pairs: $C_2=\langle (x^2+2x+4)^k\rangle$ and $C_{4}=\langle (x^2+3x+4)^{5-k}\rangle$, where $0\leq k\leq 5$.

\vskip 2mm \par
  (ii-4) $5^{2\cdot 0}+5^{2\cdot 1}+5^{2\cdot 1}+5^{2\cdot 2}=676$ pairs:

\par
  $C_2=\langle (x^2+2x+4)b(x)+u,(x^2+2x+4)^{t}\rangle$ and
  $$C_{4}=\langle 3 x^{28}(x^2+3x+4)^{5-t+1}b(x^{-1})+u(x^2+3x+4)^{5-t}\rangle,$$
where
$b(x)\in (x^2+2x+4)^{\lceil \frac{t}{2}\rceil-1}(\mathcal{K}_2/\langle (x^2+2x+4)^{t-1}\rangle)$,
$1\leq t\leq 4$.

\vskip 2mm \par
  (ii-5) $(5^{2\cdot 0}+5^{2\cdot 1}+5^{2\cdot 1})+(5^{2\cdot 0}+5^{2\cdot 1})+5^{2\cdot 0}=78$ pairs:

  $C_2=\langle (x^2+2x+4)^{k+1}b(x)+u(x^2+2x+4)^k,(x^2+2x+4)^{k+t}\rangle$ and
  $$C_{4}=\langle 3 x^{28}(x^2+3x+4)^{5-k-t+1}b(x^{-1})+u(x^2+3x+4)^{5-k-t},(x^2+3x+4)^{5-k}\rangle,$$
  where
$b(x)\in (x^2+2x+4)^{\lceil \frac{t}{2}\rceil-1}(\mathcal{K}_2/\langle (x^2+2x+4)^{t-1}\rangle)$,
$1\leq t\leq 5-k-1$ and $1\leq k\leq 3$.

\vskip 3mm \par
  \begin{remark}  Using conclusions in \cite{s3}, one can determine all distinct negacyclic codes
over $\mathbb{F}_{p^m}+u\mathbb{F}_{p^m}$ of length $2p^sl^t$ and their dual codes, for any odd prime $l$ coprime to $p$ and positive integer
$t$.
\end{remark}


\section{$\lambda$-constacyclic codes over $R$ of length $np^s$ for $n=1,2,4$}

\noindent \par
In this section, we give the representations for $\lambda$-constacyclic codes
over $R=\mathbb{F}_{p^m}+u\mathbb{F}_{p^m}$ of length $np^s$ for $n=1,2,4$ and arbitrary
element $\lambda\in\mathbb{F}_{p^m}^\times$. Let $\lambda_0\in\mathbb{F}_{p^m}^\times$
satisfying $\lambda_0^{p^s}=\lambda$. Then $x^{np^s}-\lambda=(x^n-\lambda_0)^{p^s}$
and $x^{np^s}-\lambda^{-1}=(x^n-\lambda_0^{-1})^{p^s}$.

\vskip 3mm \noindent
  {\bf Case I: $n=1$.}
  In this case, $x-\lambda_0$ is irreducible in $\mathbb{F}_{p^m}[x]$. Denote $y=x-\lambda_0$. Then
$\mathcal{A}=\mathbb{F}_{p^m}[x]/\langle x^{p^s}-\lambda_0\rangle
=\mathbb{F}_{p^m}+y\mathbb{F}_{p^m}+\ldots+y^{p^s-1}\mathbb{F}_{p^m}
\ (y^{p^s}=0).$
Hence
$y^\alpha (\mathcal{A}/\langle y^\beta\rangle)
=\{\sum_{i=\alpha}^{\beta-1}a_iy^i\mid a_i\in \mathbb{F}_{p^m}, \ \alpha\leq i\leq \beta-1\}$ and $y^\alpha (\mathcal{A}/\langle y^\alpha\rangle)=\{0\}$ for all integers $0\leq \alpha<\beta\leq p^s-1$.
Denote $z=\widetilde{x-\lambda_0}=1-\lambda_0x$. Then by Corollary 3.10 and Theorem 4.4 we deduce the following conclusion.

\begin{corollary}\label{co7.1}
\textit{Denote $y=x-\lambda_0$, $z=1-\lambda_0x$ and $q=p^m$. Then all distinct $\lambda$-constacyclic codes over $\mathbb{F}_{p^m}+u\mathbb{F}_{p^m}$ of length $p^s$ and their dual codes are given by
the following table}:
\begin{center}
\begin{tabular}{lll}\hline
 N     &     $\mathcal{C}$ $({\rm mod} \ x^{p^s}-\lambda)$, $\mathcal{C}^{\bot}$ $({\rm mod} \ x^{p^s}-\lambda^{-1})$ & $|\mathcal{C}|$  \\ \hline
$q^{p^s-\lceil\frac{p^s}{2}\rceil}$ & $\bullet$ $\mathcal{C}=\langle y b(x)+u\rangle$
   & $q^{p^s}$  \\
   & $\mathcal{C}^{\bot}=\langle z\overline{b(x)}+u\rangle$   & \\
   & $b(x)\in y^{\lceil \frac{p^s}{2}\rceil-1}(\mathcal{A}/\langle y^{p^s-1}\rangle)$ &  \\
$\sum_{k=1}^{p^s-1}q^{p^s-k-\lceil\frac{1}{2}(p^s-k)\rceil}$
  & $\bullet$  $\mathcal{C}=\langle y^{k+1}b(x)+u y^k\rangle$ & $q^{p^s-k}$  \\
  & $\mathcal{C}^{\bot}=\langle z\overline{b(x)}+u,z^{p^s-k}\rangle$   & \\
  & $b(x)\in y^{\lceil \frac{p^s-k}{2}\rceil-1}(\mathcal{A}/\langle y^{p^s-k-1}\rangle),$ &  \\
  & $1\leq k\leq p^s-1$  & \\
$p^s+1$ & $\bullet$  $\mathcal{C}=\langle y^k\rangle$ & $q^{2(p^s-k)}$  \\
  & $\mathcal{C}^{\bot}=\langle z^{p^s-k}\rangle$   & \\
  & $0\leq k\leq p^s$ & \\
$\sum_{t=1}^{p^s-1}q^{t-\lceil\frac{t}{2}\rceil}$ & $\bullet$  $\mathcal{C}=\langle yb(x)+u,y^{t}\rangle$
  & $q^{2p^s-t}$  \\
  & $\mathcal{C}^{\bot}=\langle z^{p^s-t+1}\overline{b(x)}+uz^{p^s-t}\rangle$   & \\
  & $b(x)\in y^{\lceil\frac{t}{2}\rceil-1}(\mathcal{A}/\langle y^{t-1}\rangle)$, $1\leq t\leq p^s-1$ & \\
$\sum_{k=1}^{p^s-2}\sum_{t=1}^{p^s-k-1}q^{t-\lceil\frac{t}{2}\rceil}$
  & $\bullet$  $\mathcal{C}=\langle y^{k+1}b(x)+u y^k,y^{k+t}\rangle$
  & $q^{2p^s-2k-t}$  \\
  & $\mathcal{C}^{\bot}=\langle z^{p^s-k-t+1}\overline{b(x)}+uz^{p^s-k-t},z^{p^s-k}\rangle$   & \\
  & $b(x)\in y^{\lceil\frac{t}{2}\rceil-1}(\mathcal{A}/\langle y^{t-1}\rangle)$,
& \\
  & $1\leq t\leq p^s-k-1$, $1\leq k\leq p^s-2$ & \\
\hline
\end{tabular}
\end{center}

\noindent
in which $x^{-1}=\lambda x^{p^s-1}$ and $\overline{b(x)}=-\lambda x^{p^s-1}b(x^{-1})$ $({\rm mod} \
x^{p^s}-\lambda^{-1})$.

\par
  Moreover, the number of $\lambda$-constacyclic codes over $\mathbb{F}_{p^m}+u\mathbb{F}_{p^m}$ of length $p^s$ is equal to
$N_{(p^m,1,p^{s})}=\left\{\begin{array}{ll}\sum_{i=0}^{2^{s-1}}(1+4i)2^{(2^{s-1}-i)m}, & {\rm if} \ p=2; \cr & \cr
                                        \sum_{i=0}^{\frac{p^s-1}{2}}(3+4i)p^{(\frac{p^s-1}{2}-i)m}, & {\rm if} \ p \ {\rm is} \ {\rm odd}. \end{array}\right.$
\end{corollary}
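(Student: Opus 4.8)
The plan is to obtain Corollary~\ref{co7.1} as the $n=1$ specialization of Corollary~\ref{co3.10} together with Theorem~\ref{th4.4}. First I would record the structural simplifications forced by $n=1$: the polynomial $x-\lambda_0$ has degree one and is therefore automatically irreducible over $\mathbb{F}_{p^m}$, so the factorization~(\ref{eq1}) has a single factor, i.e. $r=1$, $f_1(x)=x-\lambda_0$ and $d_1=1$. Consequently $\varepsilon_1(x)=1$ by Lemma~\ref{le2.3}(i), the canonical form decomposition of Theorem~\ref{th2.5} collapses to $\mathcal{C}=C_1$, and $\mathcal{A}=\mathcal{K}_1=\mathbb{F}_{p^m}[x]/\langle(x-\lambda_0)^{p^s}\rangle$. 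Setting $y=x-\lambda_0$ identifies $\mathcal{A}$ with $\mathbb{F}_{p^m}[y]/\langle y^{p^s}\rangle$, after which the five families of ideals and their cardinalities in the $\mathcal{C}$-column of the table are read off verbatim from Corollary~\ref{co3.10} upon replacing $x^n-\lambda_0$ by $y$ and $n$ by $1$ in every exponent.

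For the dual-code column I would invoke Theorem~\ref{th4.4}. The one computation needed is the reciprocal polynomial of $f_1$: since $f_1(x)=x-\lambda_0$ we get $\widetilde{f}_1(x)=x f_1(x^{-1})=1-\lambda_0 x=z$, and because $z=-\lambda_0(x-\lambda_0^{-1})$ with $-\lambda_0\in\mathbb{F}_{p^m}^{\times}$, the element $z$ generates the maximal ideal of $\widehat{\mathcal{A}}=\mathbb{F}_{p^m}[x]/\langle x^{p^s}-\lambda^{-1}\rangle$ and plays exactly the role of $\widetilde{f}_1$ there. Substituting $n=d_1=1$ and $\widetilde{f}_1=z$ into the five cases of Theorem~\ref{th4.4}, each dual generator of the shape $-\lambda x^{p^sn-d_j}\widetilde{f}_1(x)^{e}b(x^{-1})$ becomes $z^{e}\bigl(-\lambda x^{p^s-1}b(x^{-1})\bigr)=z^{e}\,\overline{b(x)}$, with $\overline{b(x)}=-\lambda x^{p^s-1}b(x^{-1})$ as defined in the statement; this reproduces each entry of the $\mathcal{C}^{\bot}$-column, and the monomial identity $x^{-1}=\lambda x^{p^s-1}$ is just the relation $x^{p^s}=\lambda^{-1}$ rewritten, as in Lemma~\ref{le4.3}.

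Finally, the enumeration $N_{(p^m,1,p^s)}$ is obtained by putting $d_j=1$ in the closed form for $N_{(p^m,d_j,p^s)}$ established in Theorem~\ref{th3.8}, split according to the parity of $p$ exactly as there. I do not expect a genuine obstacle here, since the argument is a direct specialization; the only point demanding care is the uniform verification that the scalar–monomial prefactor $-\lambda x^{p^s-1}$ combines with the power $z^{e}$ of the reciprocal polynomial so that the compact notation $\overline{b(x)}$ faithfully encodes the reciprocation $b(x)\mapsto -\lambda x^{p^s-1}b(x^{-1})$ across all five cases, together with the bookkeeping that $\langle z\rangle=\langle x-\lambda_0^{-1}\rangle$, so that the ideal-theoretic data transported by $\tau_1$ indeed lands in $\widehat{\mathcal{A}}+u\widehat{\mathcal{A}}$.
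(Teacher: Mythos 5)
Your proposal is correct and follows essentially the same route as the paper, which obtains Corollary \ref{co7.1} precisely by specializing Corollary \ref{co3.10} (with $r=1$, $f_1(x)=x-\lambda_0$, $d_1=1$, $y=x-\lambda_0$) and Theorem \ref{th4.4} (with $\widetilde{f}_1(x)=1-\lambda_0x=z$), and reading off the count from Theorem \ref{th3.8} with $d_j=1$. The only difference is that you spell out the bookkeeping (the collapse of the canonical decomposition, the identity $-\lambda x^{p^s-1}\widetilde{f}_1(x)^e b(x^{-1})=z^e\,\overline{b(x)}$) that the paper leaves implicit.
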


\vskip 3mm \noindent
  {\bf Case II: $n=2$ (and $p$ is odd).}
   We have the following three subcases.

\vskip 3mm\par
   {\bf Case (II-1)}: $\lambda_0$ is not a square, i.e.
  $\lambda_0\in \mathbb{F}_{p^m}^\times\setminus (\mathbb{F}_{p^m}^\times)^2$.

\par
  In this case, $x^2-\lambda_0$ is irreducible in $\mathbb{F}_{p^m}[x]$ and
$\mathcal{A}=\mathbb{F}_{p^m}[x]/\langle (x^2-\lambda_0)^{p^s}\rangle$. We
denote $\mathcal{T}^{(2)}=\{c_0+c_1x\mid c_0,c_1\in \mathbb{F}_{p^m}\}$,
$$(x^2-\lambda_0)^\alpha (\mathcal{A}/\langle (x^2-\lambda_0)^\beta\rangle)
=\{\sum_{i=\alpha}^{\beta-1}\xi_i(x^2-\lambda_0)^i\mid \xi_\alpha,\ldots,\xi_{\beta-1}\in \mathcal{T}^{(2)}\}$$
and $(x^2-\lambda_0)^\alpha (\mathcal{A}/\langle (x^2-\lambda_0)^\alpha\rangle)=\{0\}$
for all integers $0\leq \alpha<\beta\leq p^s-1$.
Then by Corollary 3.10 and Theorem 4.4 we deduce the following conclusion.

\begin{corollary}\label{co7.2}
\textit{Let $p$ be odd and $\lambda_0\in \mathbb{F}_{p^m}^\times\setminus (\mathbb{F}_{p^m}^\times)^2$. Denote $y=x^2-\lambda_0$, $z=\widetilde{x^2-\lambda_0}=1-\lambda_0x^2$ and $q=p^m$. Then all distinct $\lambda$-constacyclic codes over $\mathbb{F}_{p^m}+u\mathbb{F}_{p^m}$ of length $2p^s$ and their dual codes are given by
the following table}:
{\small
\begin{center}
\begin{tabular}{lll}\hline
 N     &     $\mathcal{C}$ $({\rm mod} \ x^{2p^s}-\lambda)$, $\mathcal{C}^{\bot}$ $({\rm mod} \ x^{2p^s}-\lambda^{-1})$ & $|\mathcal{C}|$  \\ \hline
$q^{2(p^s-\lceil\frac{p^s}{2}\rceil)}$ & $\bullet$ $\mathcal{C}=\langle y b(x)+u\rangle$
   & $q^{2p^s}$  \\
   & $\mathcal{C}^{\bot}=\langle z\overline{b(x)}+u\rangle$   & \\
   & $b(x)\in y^{\lceil \frac{p^s}{2}\rceil-1}(\mathcal{A}/\langle y^{p^s-1}\rangle)$ &  \\
$\sum_{k=1}^{p^s-1}q^{2(p^s-k-\lceil\frac{1}{2}(p^s-k)\rceil)}$
  & $\bullet$  $\mathcal{C}=\langle y^{k+1}b(x)+u y^k\rangle$ & $q^{2(p^s-k)}$  \\
  & $\mathcal{C}^{\bot}=\langle z\overline{b(x)}+u,z^{p^s-k}\rangle$   & \\
  & $b(x)\in y^{\lceil \frac{p^s-k}{2}\rceil-1}(\mathcal{A}/\langle y^{p^s-k-1}\rangle),$ &  \\
  & $1\leq k\leq p^s-1$  & \\
$p^s+1$ & $\bullet$  $\mathcal{C}=\langle y^k\rangle$ & $q^{4(p^s-k)}$  \\
  & $\mathcal{C}^{\bot}=\langle z^{p^s-k}\rangle$   & \\
  & $0\leq k\leq p^s$ & \\
$\sum_{t=1}^{p^s-1}q^{2(t-\lceil\frac{t}{2}\rceil)}$ & $\bullet$  $\mathcal{C}=\langle yb(x)+u,y^{t}\rangle$
  & $q^{2(2p^s-t)}$  \\
  & $\mathcal{C}^{\bot}=\langle z^{p^s-t+1}\overline{b(x)}+uz^{p^s-t}\rangle$   & \\
  & $b(x)\in y^{\lceil\frac{t}{2}\rceil-1}(\mathcal{A}/\langle y^{t-1}\rangle)$, $1\leq t\leq p^s-1$ & \\
$\sum_{k=1}^{p^s-2}\sum_{t=1}^{p^s-k-1}q^{2(t-\lceil\frac{t}{2}\rceil)}$
  & $\bullet$  $\mathcal{C}=\langle y^{k+1}b(x)+u y^k,y^{k+t}\rangle$
  & $q^{2(2p^s-2k-t)}$  \\
  & $\mathcal{C}^{\bot}=\langle z^{p^s-k-t+1}\overline{b(x)}+uz^{p^s-k-t},z^{p^s-k}\rangle$   & \\
  & $b(x)\in y^{\lceil\frac{t}{2}\rceil-1}(\mathcal{A}/\langle y^{t-1}\rangle)$,
& \\
  & $1\leq t\leq p^s-k-1$, $1\leq k\leq p^s-2$ & \\
\hline
\end{tabular}
\end{center} }

\noindent
in which $x^{-1}=\lambda x^{2p^s-1}$ and $\overline{b(x)}=-\lambda x^{2p^s-2}b(x^{-1})$ $({\rm mod} \
x^{2p^s}-\lambda^{-1})$.

\par
  Moreover, the number of $\lambda$-constacyclic codes over $\mathbb{F}_{p^m}+u\mathbb{F}_{p^m}$ of length $2p^s$ is equal to
$N_{(p^m,2,p^{s})}=\sum_{i=0}^{\frac{p^s-1}{2}}(3+4i)p^{2(\frac{p^s-1}{2}-i)m}$.
\end{corollary}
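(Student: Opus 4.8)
The plan is to reduce everything to the already-established $r=1$ results, namely Corollary \ref{co3.10} and Theorem \ref{th4.4}, by verifying that the present hypotheses place us exactly in the irreducible case with $n=2$. First I would check irreducibility: since $p$ is odd, the squares form an index-two subgroup of $\mathbb{F}_{p^m}^\times$, and as $\lambda_0\notin(\mathbb{F}_{p^m}^\times)^2$ the polynomial $x^2-\lambda_0$ has no root in $\mathbb{F}_{p^m}$; being of degree $2$ it is therefore irreducible over $\mathbb{F}_{p^m}$. Hence $r=1$, $f_1(x)=x^2-\lambda_0$, $d_1=2$, and $\mathcal{A}=\mathbb{F}_{p^m}[x]/\langle (x^2-\lambda_0)^{p^s}\rangle$, which is precisely the setting of Corollary \ref{co3.10}.

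Next I would read off the code list. Writing $y=x^2-\lambda_0$ and $q=p^m$, the five families (I)--(V) together with the orders $|\mathcal{C}|$ are obtained by substituting $n=2$ (and $m$ fixed) into Corollary \ref{co3.10}: each exponent $p^{mn(\cdots)}$ becomes $q^{2(\cdots)}$, so that Case (I) gives $|\mathcal{C}|=q^{2p^s}$, Case (III) gives $|\mathcal{C}|=q^{4(p^s-k)}$, and so on. The enumeration $N_{(p^m,2,p^{s})}=\sum_{i=0}^{\frac{p^s-1}{2}}(3+4i)p^{2(\frac{p^s-1}{2}-i)m}$ is the odd-$p$ formula of Corollary \ref{co3.10} with $d_j=n=2$.

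For the dual codes I would specialize Theorem \ref{th4.4}. The one genuinely new computation is the reciprocal polynomial: from the definition, $\widetilde{x^2-\lambda_0}=x^2 f(1/x)=1-\lambda_0 x^2=z$, and one notes $z=-\lambda_0(x^2-\lambda_0^{-1})$ is a unit multiple of $x^2-\lambda_0^{-1}$, so $\langle z\rangle=\langle x^2-\lambda_0^{-1}\rangle$ generates the correct chain in $\widehat{\mathcal{A}}=\mathbb{F}_{p^m}[x]/\langle (x^2-\lambda_0^{-1})^{p^s}\rangle$, consistent with $x^{2p^s}-\lambda^{-1}=(x^2-\lambda_0^{-1})^{p^s}$. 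Substituting $\widetilde{f}_1(x)=z$, $d_1=2$, $n=2$ into Theorem \ref{th4.4}, the factor $-\lambda x^{np^s-d_j}\widetilde{f}_j(x)\,b(x^{-1})$ becomes $-\lambda x^{2p^s-2}z\,b(x^{-1})$, which under the abbreviation $\overline{b(x)}=-\lambda x^{2p^s-2}b(x^{-1})$ is exactly $z\,\overline{b(x)}$; running this through Cases (I)--(V) yields the five tabulated dual forms. The identity $x^{-1}=\lambda x^{2p^s-1}$ used throughout is immediate from $x^{2p^s}=\lambda^{-1}$ in $\widehat{\mathcal{A}}$.

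The argument therefore carries no deep obstacle, as it is a specialization of proved general theorems. The only point requiring care is the bookkeeping: confirming that the single reciprocal computation $\widetilde{x^2-\lambda_0}=z$ and the exponent $x^{2p^s-2}$ thread consistently through all five cases of Theorem \ref{th4.4}, and that the passage from $\lambda$ to $\lambda^{-1}$ (via $\lambda_0^{p^s}=\lambda$, hence $(\lambda_0^{-1})^{p^s}=\lambda^{-1}$) is respected, so that $\mathcal{C}^{\bot}$ indeed lands in $\mathbb{F}_{p^m}[x]/\langle x^{2p^s}-\lambda^{-1}\rangle$ with the stated generators.
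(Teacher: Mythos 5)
Your proposal is correct and follows essentially the same route as the paper: the paper's Case (II-1) likewise notes that $x^2-\lambda_0$ is irreducible for a non-square $\lambda_0$ (so $r=1$, $d_1=2$) and then simply specializes Corollary \ref{co3.10} and Theorem \ref{th4.4}, with $\widetilde{x^2-\lambda_0}=1-\lambda_0x^2=z$ and the abbreviation $\overline{b(x)}=-\lambda x^{2p^s-2}b(x^{-1})$ absorbing the factor $-\lambda x^{np^s-d_j}$. Your write-up in fact makes explicit several bookkeeping steps the paper leaves implicit.
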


\vskip 3mm\par
   {\bf Case (II-2)}: $\lambda_0$ is a square, i.e. $\lambda_0=\gamma^2$
for some $\gamma\in \mathbb{F}_{p^m}^\times$.

\par
  We have that $x^2-\lambda_0=(x-\gamma)(x+\gamma)$,
$x^{2s}-\lambda=(x^{p^s}-\gamma^{p^s})(x^{p^s}+\gamma^{p^s})$   and
$\frac{1}{2}\gamma^{-p^s}(x^{p^s}+\gamma^{p^s})-\frac{1}{2}\gamma^{-p^s}(x^{p^s}-\gamma^{p^s})=1$. Denote
$$\varepsilon_1(x)=\frac{1}{2}\gamma^{-p^s}(x^{p^s}+\gamma^{p^s}) \
{\rm and} \ \varepsilon_2(x)=-\frac{1}{2}\gamma^{-p^s}(x^{p^s}-\gamma^{p^s}).$$
By Corollary \ref{co3.9}, all distinct $\lambda$-constacyclic codes over
$\mathbb{F}_{p^m}+u\mathbb{F}_{p^m}$ of length $2p^s$ are given by
$$\mathcal{C}=\varepsilon_1(x)C_1\oplus \varepsilon_2(x)C_2,$$
where $C_1$ is an ideal of the ring
$(\mathbb{F}_{p^m}+u\mathbb{F}_{p^m})[x]/\langle x^{p^s}-\gamma^{p^s}\rangle$
and $C_2$ is an ideal of the ring $(\mathbb{F}_{p^m}+u\mathbb{F}_{p^m})[x]/\langle x^{p^s}-(-\gamma)^{p^s}\rangle$. In particular, $C_1$ is a $\gamma^{p^s}$-constacyclic code
over $\mathbb{F}_{p^m}+u\mathbb{F}_{p^m}$ of length $p^s$ and $C_2$ is a $(-\gamma)^{p^s}$-constacyclic code
over $\mathbb{F}_{p^m}+u\mathbb{F}_{p^m}$ of length $p^s$. Hence

\par
  $\diamondsuit$ $C_1$ is listed by Corollary \ref{co7.1} substituted $\gamma$ for $\lambda_0$;

\par
  $\diamondsuit$ $C_2$ is listed by Corollary \ref{co7.1} substituted $-\gamma$ for $\lambda_0$.

\noindent
  Therefore, the number of $\lambda$-constacyclic codes over
$\mathbb{F}_{p^m}+u\mathbb{F}_{p^m}$ of length $2p^s$ is equal to $N^2_{(p^m,1,p^s)}=\left(\sum_{i=0}^{\frac{p^s-1}{2}}(3+4i)p^{(\frac{p^s-1}{2}-i)m}\right)^2$.

\vskip 3mm \noindent
  {\bf Case III: $n=4$ (and $p$ is odd).}
We have one of the following four subcases.

\vskip 3mm\par
   {\bf Case (III-1)}: $p^m\equiv 1$ (mod 4) and $\lambda_0$ is not a square, i.e.
  $\lambda_0\in \mathbb{F}_{p^m}^\times\setminus (\mathbb{F}_{p^m}^\times)^2$.

\par
  In this case, $x^4-\lambda_0$ is irreducible in $\mathbb{F}_{p^m}[x]$ and
$\mathcal{A}=\mathbb{F}_{p^m}[x]/\langle (x^4-\lambda_0)^{p^s}\rangle$. We
denote $\mathcal{T}^{(4)}=\{c_0+c_1x+c_2x^2+c_3x^3\mid c_0,c_1,c_2,c_3\in \mathbb{F}_{p^m}\}$,
$$(x^4-\lambda_0)^\alpha (\mathcal{A}/\langle (x^4-\lambda_0)^\beta\rangle)
=\{\sum_{i=\alpha}^{\beta-1}\xi_i(x^4-\lambda_0)^i\mid \xi_\alpha,\ldots,\xi_{\beta-1}\in \mathcal{T}^{(4)}\}$$
and $(x^4-\lambda_0)^\alpha (\mathcal{A}/\langle (x^4-\lambda_0)^\alpha\rangle)=\{0\}$
for all integers $0\leq \alpha<\beta\leq p^s-1$.
Then by Corollary 3.10 and Theorem 4.4 we deduce the following conclusion.

\begin{corollary}\label{co7.3}
\textit{Let $p$ be odd, $p^m\equiv 1$ $({\rm mod} \ 4)$ and $\lambda_0\in
\mathbb{F}_{p^m}^\times\setminus(\mathbb{F}_{p^m}^\times)^2$. Denote $y=x^4-\lambda_0$, $z=\widetilde{x^4-\lambda_0}=1-\lambda_0x^4$ and $q=p^m$. Then all distinct $\lambda$-constacyclic codes over $\mathbb{F}_{p^m}+u\mathbb{F}_{p^m}$ of length $4p^s$ and their dual codes are given by
the following table}:
{\small
\begin{center}
\begin{tabular}{lll}\hline
 N     &     $\mathcal{C}$ $({\rm mod} \ x^{p^s}-\lambda)$, $\mathcal{C}^{\bot}$ $({\rm mod} \ x^{p^s}-\lambda^{-1})$ & $|\mathcal{C}|$  \\ \hline
$q^{4(p^s-\lceil\frac{p^s}{2}\rceil)}$ & $\bullet$ $\mathcal{C}=\langle y b(x)+u\rangle$
   & $q^{4p^s}$  \\
   & $\mathcal{C}^{\bot}=\langle z\overline{b(x)}+u\rangle$   & \\
   & $b(x)\in y^{\lceil \frac{p^s}{2}\rceil-1}(\mathcal{A}/\langle y^{p^s-1}\rangle)$ &  \\
$\sum_{k=1}^{p^s-1}q^{4(p^s-k-\lceil\frac{1}{2}(p^s-k)\rceil)}$
  & $\bullet$  $\mathcal{C}=\langle y^{k+1}b(x)+u y^k\rangle$ & $q^{4(p^s-k)}$  \\
  & $\mathcal{C}^{\bot}=\langle z\overline{b(x)}+u,z^{p^s-k}\rangle$   & \\
  & $b(x)\in y^{\lceil \frac{p^s-k}{2}\rceil-1}(\mathcal{A}/\langle y^{p^s-k-1}\rangle),$ &  \\
  & $1\leq k\leq p^s-1$  & \\
$p^s+1$ & $\bullet$  $\mathcal{C}=\langle y^k\rangle$ & $q^{8(p^s-k)}$  \\
  & $\mathcal{C}^{\bot}=\langle z^{p^s-k}\rangle$   & \\
  & $0\leq k\leq p^s$ & \\
$\sum_{t=1}^{p^s-1}q^{4(t-\lceil\frac{t}{2}\rceil)}$ & $\bullet$  $\mathcal{C}=\langle yb(x)+u,y^{t}\rangle$
  & $q^{4(2p^s-t)}$  \\
  & $\mathcal{C}^{\bot}=\langle z^{p^s-t+1}\overline{b(x)}+uz^{p^s-t}\rangle$   & \\
  & $b(x)\in y^{\lceil\frac{t}{2}\rceil-1}(\mathcal{A}/\langle y^{t-1}\rangle)$, $1\leq t\leq p^s-1$ & \\
$\sum_{k=1}^{p^s-2}\sum_{t=1}^{p^s-k-1}q^{4(t-\lceil\frac{t}{2}\rceil)}$
  & $\bullet$  $\mathcal{C}=\langle y^{k+1}b(x)+u y^k,y^{k+t}\rangle$
  & $q^{4(2p^s-2k-t)}$  \\
  & $\mathcal{C}^{\bot}=\langle z^{p^s-k-t+1}\overline{b(x)}+uz^{p^s-k-t},z^{p^s-k}\rangle$   & \\
  & $b(x)\in y^{\lceil\frac{t}{2}\rceil-1}(\mathcal{A}/\langle y^{t-1}\rangle)$,
& \\
  & $1\leq t\leq p^s-k-1$, $1\leq k\leq p^s-2$ & \\
\hline
\end{tabular}
\end{center} }

\noindent
in which $x^{-1}=\lambda x^{4p^s-1}$ and $\overline{b(x)}=-\lambda x^{4p^s-4}b(x^{-1})$ $({\rm mod} \
x^{4p^s}-\lambda^{-1})$.

\par
  Moreover, the number of $\lambda$-constacyclic codes over $\mathbb{F}_{p^m}+u\mathbb{F}_{p^m}$ of length $4p^s$ is equal to
$N_{(p^m,4,p^{s})}=\sum_{i=0}^{\frac{p^s-1}{2}}(3+4i)p^{4(\frac{p^s-1}{2}-i)m}$.
\end{corollary}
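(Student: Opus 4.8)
The plan is to reduce Corollary~\ref{co7.3} to the irreducible ($r=1$) situation already resolved by Corollary~\ref{co3.10} and Theorem~\ref{th4.4}, in exactly the manner of Corollaries~\ref{co7.1} and~\ref{co7.2}. The only genuinely new ingredient is the irreducibility of the binomial $x^4-\lambda_0$ over $\mathbb{F}_{p^m}$ under the hypotheses $p^m\equiv 1\ (\mathrm{mod}\ 4)$ and $\lambda_0\in\mathbb{F}_{p^m}^\times\setminus(\mathbb{F}_{p^m}^\times)^2$; this is the step I expect to carry the whole argument, and everything after it is mechanical transcription.

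First I would establish irreducibility. Write $q=p^m$ and let $e$ be the multiplicative order of $\lambda_0$ in the cyclic group $\mathbb{F}_q^\times$ of order $q-1$. I would invoke the classical criterion for binomials (as in \cite{s21}, the source already used in Example~10.1): $x^4-\lambda_0$ is irreducible over $\mathbb{F}_q$ if and only if (a) the prime $2$ divides $e$ but not $(q-1)/e$, and (b) $q\equiv 1\ (\mathrm{mod}\ 4)$. Condition (b) is precisely the hypothesis. For (a), fix a generator $g$ of $\mathbb{F}_q^\times$ and write $\lambda_0=g^k$; since $\lambda_0$ is a non-square, $k$ is odd, so $\gcd(k,q-1)$ is odd. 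Writing $q-1=2^a b$ with $b$ odd and $a\ge 2$ (the latter because $q\equiv 1\ (\mathrm{mod}\ 4)$), we obtain $e=(q-1)/\gcd(k,q-1)$, whence $2^a\mid e$ while $(q-1)/e=\gcd(k,q-1)$ is odd. Thus $2\mid e$ and $2\nmid(q-1)/e$, giving (a), and $x^4-\lambda_0$ is irreducible.

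With irreducibility in hand, the factorization (\ref{eq1}) degenerates to $r=1$ with $f_1(x)=x^4-\lambda_0$ and $d_1=4$. I would then specialize Corollary~\ref{co3.10}, replacing $\mathcal{K}_j$, $f_j(x)$, $d_j$ by $\mathcal{A}=\mathbb{F}_{p^m}[x]/\langle(x^4-\lambda_0)^{p^s}\rangle$, $y=x^4-\lambda_0$ and $4$ respectively. This produces the five families of codes $\mathcal{C}$ together with their cardinalities $|\mathcal{C}|$ and the enumeration $N_{(p^m,4,p^s)}=\sum_{i=0}^{(p^s-1)/2}(3+4i)p^{4((p^s-1)/2-i)m}$ (the odd-$p$ branch of the count, since $p$ is odd). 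The explicit description of the modules $y^\alpha(\mathcal{A}/\langle y^\beta\rangle)$ via the transversal $\mathcal{T}^{(4)}$ is just Lemma~\ref{le3.1}(v) transported along the isomorphism $\varphi_1$.

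Finally, for the dual codes I would apply Theorem~\ref{th4.4} in the same $r=1$ specialization. Here $\widetilde{f}_1(x)=\widetilde{x^4-\lambda_0}=1-\lambda_0x^4=z$, and since $x^{4p^s}-\lambda^{-1}=(x^4-\lambda_0^{-1})^{p^s}$ with $1-\lambda_0x^4=-\lambda_0(x^4-\lambda_0^{-1})$, the ideal $\langle\widetilde{f}_1(x)^{p^s}\rangle$ coincides with $\langle(x^4-\lambda_0^{-1})^{p^s}\rangle$, so $z$ plays the role of the generator in $\widehat{\mathcal{A}}$. Substituting $d_j=4$ and $n=4$ into the formulas of Theorem~\ref{th4.4} and abbreviating $\overline{b(x)}=-\lambda x^{4p^s-4}b(x^{-1})$ converts each factor $-\lambda x^{p^sn-d_j}\widetilde{f}_j(x)^{\ast}b(x^{-1})$ into $z^{\ast}\overline{b(x)}$, yielding the tabulated dual codes $\mathcal{C}^{\bot}$ row by row. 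The main obstacle is thus the irreducibility claim; once it is settled, the rest of the proof is identical in form to those of Corollaries~\ref{co7.1} and~\ref{co7.2}.
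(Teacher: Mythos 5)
Your proposal is correct and follows essentially the same route as the paper, which likewise derives Corollary~\ref{co7.3} by specializing Corollary~\ref{co3.10} and Theorem~\ref{th4.4} to the case $r=1$, $f_1(x)=x^4-\lambda_0$, $d_1=n=4$. The only difference is that you supply an explicit verification (via the standard irreducibility criterion for binomials) of the fact that $x^4-\lambda_0$ is irreducible when $p^m\equiv 1\ ({\rm mod}\ 4)$ and $\lambda_0$ is a non-square, which the paper merely asserts; your verification is sound.
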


\par
  For example, we know that $\mathbb{F}_{13}^\times\setminus(\mathbb{F}_{13}^\times)^2=\{2,5,6,7,8,11\}$.
For any $\lambda\in \mathbb{F}_{13}^\times\setminus(\mathbb{F}_{13}^\times)^2$,
$x^4-\lambda$ is irreducible in $\mathbb{F}_{13}[x]$ and the number
$\lambda$-constacyclic codes over $\mathbb{F}_{13}+u\mathbb{F}_{13}$ of length $4\cdot 13=52$ is equal to
$N_{(13,4,13)}=\sum_{i=0}^{6}(3+4i)13^{4(6-i)}=1628535353189467891702213785$.
Roughly, we have $16\cdot 10^{26}<N_{(13,4,13)}<17\cdot 10^{26}$.

\vskip 3mm\par
   {\bf Case (III-2)}: $p^m\equiv 1$ (mod $4$) and $\lambda_0$ is a square, i.e.
  $\lambda_0\in (\mathbb{F}_{p^m}^\times)^2$.

\par
  In this case, let $\zeta$ be a primitive element
of $\mathbb{F}_{p^m}$. Then ${\rm ord}(\zeta)=p^m-1$,
 and $(\mathbb{F}_{p^m}^\times)^2=(\mathbb{F}_{p^m}^\times)^4
\cup \zeta^2(\mathbb{F}_{p^m}^\times)^4$. By $p^m\equiv 1$ (mod $4$),
we know that $4$ is a factor of $p^m-1$. Denote $\xi=\zeta^{\frac{p^m-1}{4}}$. Then
$\xi$ is a $4$th primitive root of unity and $\xi^2=-1$. Now, we have one of the following two cases:

\par
  {\bf Case (III-2-1)}: $\lambda_0\in (\mathbb{F}_{p^m}^\times)^4$.

\par
   Let $\lambda_0=w^4$ where $w\in \mathbb{F}_{p^m}^\times$. Then $x^4-\lambda_0
=\prod_{j=0}^3(x-w\xi^j)$. This implies that $x^{4p^s}-\lambda
=\prod_{j=0}^3(x^{p^s}-(w\xi^j)^{p^s})$. Set
$$\theta_j(x)=\prod_{0\leq l\leq 3, \ l\neq j}\frac{x^{p^s}-(w\xi^l)^{p^s}}{(w\xi^j)^{p^s}-(w\xi^l)^{p^s}}, \ j=0,1,2,3.$$
By Corollary \ref{co3.9}, all distinct $\lambda$-constacyclic codes over
$\mathbb{F}_{p^m}+u\mathbb{F}_{p^m}$ of length $4p^s$ are given by
$$\mathcal{D}=\theta_1(x)D_1\oplus \theta_2(x)D_2\oplus \theta_3(x)D_3\oplus \theta_4(x)D_4,$$
where $D_j$ is an ideal of the ring $(\mathbb{F}_{p^m}+u\mathbb{F}_{p^m})[x]/\langle x^{p^s}-(w\xi^j)^{p^s}\rangle$, i.e. $D_j$ is a $(w\xi^j)^{p^s}$-constacyclic code
over $\mathbb{F}_{p^m}+u\mathbb{F}_{p^m}$ of length $p^s$ for all $j$. Therefore,

\par
  $\diamondsuit$ $D_j$ is given by Corollary 7.1 substituted $w\xi^j$ and $(w\xi^j)^{p^s}$ for $\lambda_0$
and $\lambda$ respectively, $j=0,1,2,3$.

\par
   Hence the number of $\lambda$-constacyclic codes over $\mathbb{F}_{p^m}+u\mathbb{F}_{p^m}$ of length $4p^s$ is equal to
$N_{(p^m,1,p^s)}^4=\left(\sum_{i=0}^{\frac{p^s-1}{2}}(3+4i)p^{(\frac{p^s-1}{2}-i)m}\right)^4.$

\par
  For example, we know that $(\mathbb{F}_{13}^\times)^2=\{1,3,4,9,10,12\}$,
$(\mathbb{F}_{13}^\times)^4=\{1,3,9\}$,
  $x^4-1=(x+1)(x+5)(x+12)(x+8)$, $x^4-3=(x+10)(x+3)(x+2)(x+11)$ and $x^4-9=(x+7)(x+6)(x+9)(x+4)$.
Hence for any $\lambda\in \{1,3,9\}$, the number
$\lambda$-constacyclic codes over $\mathbb{F}_{13}+u\mathbb{F}_{13}$ of length $4\cdot 13=52$ is equal to
$N_{(13,1,13)}^4=(\sum_{i=0}^{6}(3+4i)13^{6-i})^4=92300403860395414742363374161$. Roughly, we have that $923\cdot 10^{26}<N_{(13,1,13)}^4<924\cdot 10^{26}$.

\par
  {\bf Case (III-2-2)}: $\lambda_0\in \zeta^2(\mathbb{F}_{p^m}^\times)^4$.

\par
   Let $\lambda_0=\zeta^2w^4$ where $w\in \mathbb{F}_{p^m}^\times$. Then $x^4-\lambda_0
=(x^2-w^2\zeta)(x^2+w^2\zeta)$. This implies $x^{4p^s}-\lambda
=(x^{2p^s}-(w^2\zeta)^{p^s})(x^{2p^s}-(-w^2\zeta)^{p^s})$. By $-1=\xi^2$, we see that both $w^2\zeta$ and
$-w^2\zeta$ are not squares in $\mathbb{F}_{p^m}^\times$. This implies
that $x^2-w^2\zeta$ and $x^2+w^2\zeta$ are irreducible in $\mathbb{F}_{p^m}[x]$.
Obviously $\frac{1}{2}(w^2\zeta)^{-p^s}(x^{2p^s}+(w^2\zeta)^{p^s})
-\frac{1}{2}(w^2\zeta)^{-p^s}(x^{2p^s}-(w^2\zeta)^{p^s})=1$. Now, set
$$\theta_1(x)=\frac{1}{2}(w^2\zeta)^{-p^s}(x^{2p^s}+(w^2\zeta)^{p^s})
\ {\rm and} \ \theta_2(x)=-\frac{1}{2}(-w^2\zeta)^{-p^s}(x^{2p^s}-(w^2\zeta)^{p^s}).$$
By Corollary \ref{co3.9}, all distinct $\lambda$-constacyclic codes over
$\mathbb{F}_{p^m}+u\mathbb{F}_{p^m}$ of length $4p^s$ are given by
$$\mathcal{D}=\theta_1(x)D_1\oplus \theta_2(x)D_2,$$
where $D_1$ is an ideal of the ring $(\mathbb{F}_{p^m}+u\mathbb{F}_{p^m})[x]/\langle x^{2p^s}-(w^2\zeta)^{p^s}\rangle$ and $D_2$ is an ideal of the ring $(\mathbb{F}_{p^m}+u\mathbb{F}_{p^m})[x]/\langle x^{2p^s}+(w^2\zeta)^{p^s}\rangle$.
Therefore,

\par
  $\diamondsuit$ $D_1$ is a $(w^2\zeta)^{p^s}$-constacyclic code over
$\mathbb{F}_{p^m}+u\mathbb{F}_{p^m}$ of length $2p^s$ which can be given by Corollary 7.2 substituted $w^2\zeta$ and $(w^2\zeta)^{p^s}$ for $\lambda_0$ and $\lambda$ respectively.

\par
  $\diamondsuit$ $D_2$ is a $-(w^2\zeta)^{p^s}$-constacyclic code over
$\mathbb{F}_{p^m}+u\mathbb{F}_{p^m}$ of length $2p^s$ which can be given by Corollary 7.2 substituted $-w^2\zeta$ and $-(w^2\zeta)^{p^s}$ for $\lambda_0$ and $\lambda$ respectively.

\par
   Hence the number of $\lambda$-constacyclic codes over $\mathbb{F}_{p^m}+u\mathbb{F}_{p^m}$ of length $4p^s$ is equal to
$N_{(p^m,2,p^s)}^2=\left(\sum_{i=0}^{\frac{p^s-1}{2}}(3+4i)p^{2(\frac{p^s-1}{2}-i)m}\right)^2.$

\par
  For example, we have
$(\mathbb{F}_{13}^\times)^2\setminus (\mathbb{F}_{13}^\times)^4=\{4,10,12\}$,
$x^4-4=(x^2+11)(x^2+2)$, $x^4-10=(x^2+6)(x^2+7)$ and $x^4-12=(x^2+8)(x^2+5)$.
Hence for any $\lambda\in \{4,10,12\}$, the number
$\lambda$-constacyclic codes over $\mathbb{F}_{13}+u\mathbb{F}_{13}$ of length $4\cdot 13=52$ is equal to
$N_{(13,2,13)}^2=(\sum_{i=0}^{6}(3+4i)13^{2(6-i)})^2=5022317475223730190748850625$. Roughly, we have $50\cdot 10^{26}<N_{(13,2,13)}^2<51\cdot 10^{26}$.

\vskip 3mm\par
   {\bf Case (III-3)}: $p^m\equiv 3$ (mod 4) and $\lambda_0$ is not a square, i.e.
  $\lambda_0\in \mathbb{F}_{p^m}^\times\setminus (\mathbb{F}_{p^m}^\times)^2$.

\par
  In this case, $x^4-\lambda_0$ is reducible in $\mathbb{F}_{p^m}[x]$ (cf. Wan [18] Theorem 10.7).
It is evident that $x^4-\lambda_0$ has no divisors of degree $1$ and $3$.  Suppose that $x^2-c$ is a factor
of $x^4-\lambda_0$ for some $c\in \mathbb{F}_{p^m}$. Then we have $\lambda_0=c^2$, which
contradicts that $\lambda_0$ is not a square. Hence there are elements $a,b,c,d\in \mathbb{F}_{p^m}^\times$
such that
$$x^4-\lambda_0=f_1(x)f_2(x) \ {\rm where} \
f_1(x)=x^2+ax+b \ {\rm and} \ f_2(x)=x^2+bx+d,$$
and that $f_1(x)$ and $f_2(x)$ are irreducible in $\mathbb{F}_{p^m}[x]$
satisfying ${\rm gcd}(f_1(x),f_2(x))=1$. We find polynomials $v(x),w(x)\in \mathbb{F}_{p^m}[x]$
satisfying $v(x)f_2(x)+w(x)f_1(x)=1$. This implies $v(x)^{p^s}f_2(x)^{p^s}+w(x)^{p^s}f_1(x)^{p^s}=1$.  Then set
$$\theta_1(x)=v(x)^{p^s}f_2(x)^{p^s} \ {\rm and} \ \theta_2(x)=w(x)^{p^s}f_1(x)^{p^s}.$$
By Corollary \ref{co3.9}, all distinct $\lambda$-constacyclic codes over
$\mathbb{F}_{p^m}+u\mathbb{F}_{p^m}$ of length $4p^s$ are given by
$$\mathcal{D}=\theta_1(x)D_1\oplus \theta_2(x)D_2,$$
where $D_j$ is an ideal of the ring $(\mathbb{F}_{p^m}[x]/\langle f_j(x)^{p^s}\rangle)+u(\mathbb{F}_{p^m}[x]/\langle f_j(x)^{p^s}\rangle)=(\mathbb{F}_{p^m}+u\mathbb{F}_{p^m})[x]/\langle f_j(x)^{p^s}\rangle$
for $j=1,2$. From this and by Theorem \ref{th3.8}, we deduce that
\par
  $\diamondsuit$ $C_j$ is given by Corollary 7.2 substituted $f_j(x)$, $f_j(x)^{p^s}$
and $\widetilde{f}_j(x)^{p^s}$ for $x^2-\lambda_0$, $x^{2p^s}-\lambda$
and $x^{2p^s}-\lambda^{-1}$ respectively.

\par
   Therefore, the number of $\lambda$-constacyclic codes over $\mathbb{F}_{p^m}+u\mathbb{F}_{p^m}$ of length $4p^s$ is equal to
$N_{(p^m,2,p^s)}^2=\left(\sum_{i=0}^{\frac{p^s-1}{2}}(3+4i)p^{2(\frac{p^s-1}{2}-i)m}\right)^2.$

\par
  For example, we have $\mathbb{F}_{19}^\times\setminus (\mathbb{F}_{19}^\times)^2=\{2,3,8,10,12,13,14,15,18\}$ and that

\par
  $x^4-2=(x^2+8x+13)(x^2+11x+13)$,
  $x^4-3=(x^2+10x+12)(x^2+9x+12)$,

\par
  $x^4-8=(x^2+12x+15)(x^2+7x+15)$,
  $x^4-10=(x^2+5x+3)(x^2+14x+3)$,

\par
  $x^4-12=(x^2+4x+8)(x^2+15x+8)$,
  $x^4-13=(x^2+16x+14)(x^2+3x+14)$,

\par
  $x^4-14=(x^2+18x+10)(x^2+x+10)$,
  $x^4-15=(x^2+17x+2)(x^2+2x+2)$,

\par
  $x^4-18=(x^2+13x+18)(x^2+6x+18)$.

\noindent
  Therefore, for any $\lambda\in \mathbb{F}_{19}^\times\setminus(\mathbb{F}_{19}^\times)^2$, the number
of $\lambda$-constacyclic codes over $\mathbb{F}_{19}+u\mathbb{F}_{19}$ of
length $4\cdot 19=76$ is equal to
$N_{(19,2,19)}^2=(\sum_{i=0}^{9}(3+4i)19^{2(9-i)})^2
=98853624946129979125010756140470464728908752100$. Roughly, we have
$98\cdot 10^{45}<N_{(19,2,19)}^2<99\cdot 10^{45}$.

\par
   {\bf Case (III-4)}: $p^m\equiv 3$ (mod 4) and $\lambda_0$ is a square, i.e.
  $\lambda_0\in (\mathbb{F}_{p^m}^\times)^2$.

\par
  In this case, there exists $\eta\in \mathbb{F}_{p^m}^\times$ such that
$\lambda_0=\eta^2$. This implies $\lambda=\eta^{2p^s}$, $x^4-\lambda_0=(x^2-\eta)(x^2+\eta)$ and $x^{4p^s}-\lambda=(x^{2p^s}-\eta^{p^s})(x^{2p^s}+\eta^{p^s})$. Obviously,
we have $\frac{1}{2}\eta^{-p^s}(x^{2p^s}+\gamma^{p^s})-\frac{1}{2}\eta^{-p^s}(x^{2p^s}-\gamma^{p^s})=1$. Now, set
$$\theta_1(x)=\frac{1}{2}\eta^{-p^s}(x^{2p^s}+\gamma^{p^s}) \
{\rm and} \ \theta_2(x)=-\frac{1}{2}\eta^{-p^s}(x^{2p^s}-\gamma^{p^s}).$$
Then $\theta_1(x)^2=\theta_1(x)$, $\theta_2(x)^2=\theta_2(x)$, $\theta_1(x)\theta_2(x)=0$
and $\theta_1(x)+\theta_2(x)=1$ in the ring $\mathcal{A}=\mathbb{F}_{p^m}[x]/\langle x^{4p^s}-\lambda\rangle$.
Hence, all $\lambda$-constacyclic codes over $\mathbb{F}_{p^m}+u\mathbb{F}_{p^m}$ of length $4p^s$, i.e.
all ideals of the ring $(\mathbb{F}_{p^m}+u\mathbb{F}_{p^m})[x]/\langle x^{4p^s}-\eta^{2p^s}\rangle$,
are given by:
$$\mathcal{D}=\theta_1(x)D_1\oplus\theta_2(x)D_2,$$
where
   $D_1$ is an ideal of the ring $(\mathbb{F}_{p^m}+u\mathbb{F}_{p^m})[x]/\langle x^{2p^s}-\eta^{p^s}\rangle$, i.e. an $\eta^{p^s}$-constacyclic code over $\mathbb{F}_{p^m}+u\mathbb{F}_{p^m}$ of length $2p^s$,
and
   $D_2$ is an ideal of $(\mathbb{F}_{p^m}+u\mathbb{F}_{p^m})[x]/\langle x^{2p^s}+\eta^{p^s}\rangle$, i.e. an $(-\eta^{p^s})$-constacyclic code over $\mathbb{F}_{p^m}+u\mathbb{F}_{p^m}$ of length $2p^s$.

\par
  As $p^m\equiv 3$ (mod $4$), $-1$ is not a square of $\mathbb{F}_{p^m}^\times$. Hence we have one of the
following cases:

\par
   {\bf Case (III-4-1)}: $\eta$ is a square of $\mathbb{F}_{p^m}^\times$, i.e. $\eta=\delta^2$
for some $\delta\in \mathbb{F}_{p^m}^\times$.

\par
   In this case, $x^2-\eta=(x-\delta)(x+\delta)$ and $x^2+\eta$ is irreducible
in $\mathbb{F}_{p^m}[x]$ since $-\eta=(-1)\delta^2$ is not a square. Therefore,

\par
  $\triangleright$ $D_1$ is a $\eta^{p^s}$-constacyclic code over $\mathbb{F}_{p^m}+u\mathbb{F}_{p^m}$ of length $2p^s$ given by Case (II-2) substituted $\eta$ for $\lambda_0$;

\par
  $\triangleright$ $D_2$ is an $(-\eta^{p^s})$-constacyclic code over $\mathbb{F}_{p^m}+u\mathbb{F}_{p^m}$ of length $2p^s$ given by Case (II-1) substituted $-\eta$ for $\lambda_0$.

\par
   {\bf Case (III-4-2)}: $\eta$ is not a square of $\mathbb{F}_{p^m}^\times$.

\par
   In this case, $-\eta$ is a a square of $\mathbb{F}_{p^m}^\times$. Hence

\par
  $\triangleright$ $D_1$ is an $\eta^{p^s}$-constacyclic code over $\mathbb{F}_{p^m}+u\mathbb{F}_{p^m}$ of length $2p^s$ given by Case (II-1) substituted $\eta$ for $\lambda_0$.

\par
  $\triangleright$ $D_2$ is a $(-\eta^{p^s})$-constacyclic code over $\mathbb{F}_{p^m}+u\mathbb{F}_{p^m}$ of length $2p^s$ given by Case (II-2) substituted $-\eta$ for $\lambda_0$;

\par
  Therefore, the number of $\lambda$-constacyclic codes over $\mathbb{F}_{p^m}+u\mathbb{F}_{p^m}$ of length $4p^s$ is equal to
$$N_{(p^m,1,p^s)}^2N_{(p^m,2,p^s)}
=\left(\sum_{i=0}^{\frac{p^s-1}{2}}(3+4i)p^{(\frac{p^s-1}{2}-i)m}\right)^2
\left(\sum_{i=0}^{\frac{p^s-1}{2}}(3+4i)p^{2(\frac{p^s-1}{2}-i)m}\right).$$

\par
  For example, we have $(\mathbb{F}_{19}^\times)^2=\{1,4,5,6,7,9,11,16,17\}$ and that

\par
 $x^4-1=(x+1)(x+18)(x^2+1)$,
 $x^4-4=(x+13)(x+6)(x^2+17)$,

\par
 $x^4-5=(x+16)(x+3)(x^2+9)$,
 $x^4-6=(x+9)(x+10)(x^2+5)$,

\par
 $x^4-7=(x+12)(x+7)(x^2+11)$,
 $x^4-9=(x+15)(x+4)(x^2+16)$,

\par
 $x^4-11=(x+11)(x+8)(x^2+7)$,
 $x^4-16=(x+2)(x+17)(x^2+4)$,

\par
 $x^4-17=(x+14)(x+5)(x^2+6)$.

\noindent
  Therefore, for any $\lambda\in (\mathbb{F}_{19}^\times)^2$, the number
of $\lambda$-constacyclic codes over $\mathbb{F}_{19}+u\mathbb{F}_{19}$ of
length $4\cdot 19=76$ is equal to
\begin{eqnarray*}
N_{(19,1,19)}^2N_{(19,2,19)}
 &=&\left(\sum_{i=0}^{9}(3+4i)19^{9-i)}\right)^2
\left(\sum_{i=0}^{9}(3+4i)19^{2(9-i)}\right)\\
 &=&378733991979096789784301581334490215632932864000.
\end{eqnarray*}
Roughly, we have $378\cdot 10^{45}<N_{(19,1,19)}^2N_{(19,2,19)}<379\cdot 10^{45}$.

\vskip 3mm \par
  \begin{remark} (i) The dual code of each code in Cases (II-2), (III-2-1), (III-2-2), (III-3)
(III-4-1) and (III-4-2) can be determined by Theorem \ref{th4.4}. In particular,
self-dual cyclic codes and negacyclic codes over $\mathbb{F}_{p^m}+u\mathbb{F}_{p^m}$ of
length $np^s$ are given by Theorem \ref{th5.3} for arbitrary positive integer $n$ satisfying
${\rm gcd}(p,n)=1$.

\par
  (ii) Corollary \ref{co7.1} summarizes
the known results for cyclic codes and $\alpha$-constacyclic codes
over $\mathbb{F}_{2^m}+u\mathbb{F}_{2^m}$ with length $2^s$, where $\alpha\in \mathbb{F}_{2^m}^\times$, given by Theorem 4.4, Proposition 4.5,
Theorem 4.7, Theorem 4.9 and Theorems 5.3--5.5 in \cite{s8},
and the known results for
cyclic codes over $\mathbb{F}_{p^m}+u\mathbb{F}_{p^m}$ with length $p^s$ given by Theorems 5.4 and 5.7 in \cite{s9} for any prime $p$.

\par
  (iii) Corollary \ref{co7.2} summarizes
the known results in \cite{s10} Theorem 4.2, Proposition 4.3, Theorem 4.4, Lemma 4.6 and Theorems 4.10--4.12
for negacyclic codes over $\mathbb{F}_{p^m}+u\mathbb{F}_{p^m}$ of length $2p^s$ where $p^m\equiv 3$ (mod 4), and the known results in \cite{s7} Theorem 3.9, Proposition 3.10, Theorem 3.11, Lemma 3.13,
Theorem 3.14, Lemma 3.17, Theorems 3.18 and 3.19 for $\gamma$-constacyclic codes over $\mathbb{F}_{p^m}+u\mathbb{F}_{p^m}$ of length $2p^s$ where
$\gamma\in \mathbb{F}_{p^m}^\times\setminus(\mathbb{F}_{p^m}^\times)^2$.
Case (II-2) summarizes
the known results in \cite{s10} Theorem 3.1 for negacyclic codes over $\mathbb{F}_{p^m}+u\mathbb{F}_{p^m}$ of length $2p^s$ where $p^m\equiv 1$ (mod 4).

\par
  (iv) Corollary \ref{co7.3} summarizes
the known results in \cite{s11} Theorem 5.3, Proposition 5.4, Lemma 5.6,
Theorem 5.7 and Theorems 5.10--5.12 for $\gamma$-constacyclic codes over $\mathbb{F}_{p^m}+u\mathbb{F}_{p^m}$ of length $4p^s$ where
$\gamma\in \mathbb{F}_{p^m}^\times\setminus(\mathbb{F}_{p^m}^\times)^2$
and $p^m\equiv 1$ (mod $4$).

\par
  (v) The known results in \cite{s12} for cyclic codes and negacyclic codes over $\mathbb{F}_{p^m}+u\mathbb{F}_{p^m}$ of length $4p^s$ can be regarded as corollaries of Corollary \ref{co5.2}, Theorem \ref{th5.3},
Case (III-2) and Case (III-3).

\par
  (vi) For any any nonzero integer $s$, positive integer $n$ satisfying ${\rm gcd}(p,n)=1$.
and $\lambda\in \mathbb{F}_{p^m}^\times$. We know that $(\mathbb{F}_{p^m}+u\mathbb{F}_{p^m})[x]/\langle x^{np^s}-1\rangle$ is not a principal ideal ring for any $s>0$ by Theorems \ref{th2.5} and \ref{th3.8}.
  When $s=0$, $(\mathbb{F}_{p^m}+u\mathbb{F}_{p^m})[x]/\langle x^n-1\rangle$
and $(\mathbb{F}_{p^m}+u\mathbb{F}_{p^m})[x]/\langle x^n+1\rangle$ are principal ideal rings
(cf. \cite{s13} Corollary 3.7, Corollary 5.8 and \cite{s18} Corollary 4.6, respectively).
Hence the study of the paper does not work for $s=0$.

\end{remark}
\section*{Acknowledgments}

\noindent \par
The authors are deeply indebted to the referees and
wish to thank them for their important suggestions and comments. Part of this work was done when Yonglin Cao was visiting Chern Institute of Mathematics, Nankai University, Tianjin, China. Yonglin Cao would like to thank the institution for the kind hospitality.


\medskip
Received xxxx 20xx; revised xxxx 20xx.
\medskip

\end{document}